\def\dOi{11(3:19)2015}
\subjclass{F.3.1 [Logics and Meanings of Programs]: Specifying and 
Verifying and Reasoning about Programs} 
\renewcommand*\makeenmark{\hbox{\textsuperscript{(\@alph{\theenmark})}}}
\renewcommand{\ss}{\hat{s}}
\newcommand{\sh}{\hat{h}}
\newcommand{\sell}{\hat{\ell}}
\newcommand{\sv}{\hat{v}}
\newcommand{\display}[1]{$$#1$$}
\DeclareMathAlphabet{\mathsfsl}{OT1}{cmss}{m}{sl}
\newcommand{\symi}{\mathsfsl{i}}
\newcommand{\symn}{\mathsfsl{n}}
\newcommand{\symr}{\mathsfsl{r}}
\newcommand{\symd}{\mathsfsl{d}}
\newcommand{\syml}{\mathsfsl{l}}
\newcommand{\symv}{\mathsfsl{v}}
\definecolor{Green}{rgb}{0.0,0.5,0.0}
\definecolor{White}{rgb}{1,1,1}
\definecolor{Blue}{rgb}{0.0,0.0,1}
\definecolor{Red}{rgb}{1,0,0}
\definecolor{Purple}{rgb}{0.4,0,0.6}
\definecolor{Orange}{rgb}{1,0.65,0}
\definecolor{Gray}{rgb}{0.5,0.5,0.5}
\newcommand{\annot}[1]{{\color{Blue}#1}}
\newcommand{\comment}[1]{{\color{Green}#1}}
\newcommand{\branching}[1]{{\color{Orange}#1}}
\newcommand{\llbrace}{\{\hspace{-3pt}[}
\newcommand{\rrbrace}{]\hspace{-3pt}\}}
\newenvironment{changemargin}[2]{}{}
\newenvironment{beamerframe}[1]{}{}
\theoremstyle{definition}\newtheorem{notation}[thm]{Notation}
\theoremstyle{plain}
\def\ie{{\em i.e.}}
\begin{document}

\title[Featherweight VeriFast]{Featherweight VeriFast}

\author[F.~Vogels]{Fr\'ed\'eric Vogels}	
\address{iMinds-DistriNet, Dept. C.S., KU Leuven, Celestijnenlaan 200A, 3001 Leuven, Belgium}	
\email{\{frederic.vogels, bart.jacobs, frank.piessens\}@gmail.com}  

\author[B.~Jacobs]{Bart Jacobs}	

\author[F.~Piessens]{Frank Piessens}	

\keywords{program verification, separation logic, symbolic execution} 


\begin{abstract}
  \noindent VeriFast is a leading research prototype tool for the sound modular verification of 
  safety and correctness properties of
  single-threaded and multithreaded C and Java programs. It has been used 
  as a vehicle for exploration and validation of novel program 
  verification techniques and for industrial case studies;
  it has served well at a number of program 
  verification competitions; and it has been used for teaching by multiple teachers independent of the authors.
  
  However, until now, while VeriFast's operation has been described 
  informally in a number of publications, and specific verification techniques have been formalized,
  a clear and precise exposition of how VeriFast works has not 
  yet appeared.
  
  In this article we present for the first time a formal definition and 
  soundness proof of a core subset of the VeriFast program verification 
  approach. The exposition aims to be both accessible and rigorous: the text is 
  based on lecture notes for a graduate course on program verification, 
  and it is backed by an executable machine-readable definition and 
  machine-checked soundness proof in Coq.
\end{abstract}

\maketitle

\section*{Introduction}

For many classes of safety-critical or security-critical programs, such as 
operating system components, internet infrastructure, or embedded 
software, conventional quality assurance approaches such as testing, code 
review, or even model checking are insufficient to detect all bugs and 
achieve good confidence in their safety and security; for these programs, 
the newer technique of modular formal verification may be the most 
promising approach. 

VeriFast is a sound modular formal verification approach for 
single-threaded and multithreaded imperative programs being 
developed at KU~Leuven. The prototype tool that implements this 
approach\footnote{Download it from \url{http://distrinet.cs.kuleuven.be/software/VeriFast/}.} takes as input a C or Java program annotated with 
preconditions, postconditions, loop invariants, data structure 
definitions, and proof hints written in a variant of separation 
logic \cite{seplogic-ohearn,seplogic}, and symbolically executes each function/method. It 
either reports ``0 errors found'' or the source location of a 
potential error. If it reports ``0 errors found'', it is 
guaranteed (modulo bugs in the tool) that no execution of the program will a) perform an 
illegal memory access such as a null pointer dereference, 
an access of unallocated memory, or an access of an array outside of 
its bounds; b) perform a data race, where two threads access 
the same variable concurrently without synchronization, and at 
least one access is a write operation; c) violate the 
user-specified function/method contracts or the contracts of 
the library or API functions/methods used by the program. If it 
reports an error, it shows a symbolic execution trace that 
leads to the error, including the symbolic state (store, heap, 
and path condition) at each step. 

VeriFast has served as a vehicle for exploration and validation of a 
number of novel program verification techniques 
\cite{DBLP:conf/popl/JacobsP11,DBLP:conf/fm/JacobsSP11,DBLP:conf/forte/VogelsJPS11,DBLP:conf/sefm/VanspauwenJ13} 
and for a number of industrial case studies \cite{Philippaerts2013}; it 
has served well at a number of program verification 
competitions\footnote{See Endnote~(a).}\endnotetext[1]{Results of the VeriFast team:
\begin{center}
\begin{tabular}{| c | c | c |}
\hline
\textbf{Competition} & \textbf{Conference} & \textbf{Result}\\
\hline
1st Verified Software Competition \cite{DBLP:conf/fm/KlebanovMSLWAABCCHJLMPPRSTTUW11} & VSTTE 2010 & roughly tied with all other teams\\
2nd Verified Software Competition \cite{filliatre12compare} & VSTTE 2012 & score 570/600, rank 8\\
VerifyThis \cite{verifythis-tr} & FM 2012 & sole winner\\
\hline
\end{tabular} 
\end{center}
}; and it has been used 
for teaching program verification by the authors as well as by independent 
instructors at other institutions\footnote{See Endnote~(b).}\endnotetext[2]{by Jesper Bengtson (at ITU 
Copenhagen), Alexey Gotsman (at ENS Lyon), Dilian Gurov (at KTH 
Stockholm), and Stephan van Staden (at ETH Zurich)}. 

Until now, while VeriFast's operation has been described informally in a 
number of publications 
\cite{verifast,DBLP:conf/nfm/JacobsSPVPP11,DBLP:series/lncs/SmansJP13}, 
and specific verification techniques have been formalized 
\cite{DBLP:conf/popl/JacobsP11,DBLP:conf/fm/JacobsSP11,DBLP:conf/forte/VogelsJPS11,DBLP:conf/sefm/VanspauwenJ13}, 
a clear and precise exposition of how VeriFast works has not yet appeared.

In this article, we present a formal definition of a simplified version of 
the VeriFast program verification approach, called Featherweight VeriFast, 
as well as an outline for a proof of the soundness of this approach, 
\ie~that if verification of a program succeeds, then no execution of the 
program accesses unallocated memory. Featherweight VeriFast targets a 
simple toy programming language with routines, loops, and dynamic memory allocation 
and deallocation, and supports routine contracts, loop invariants, 
separation logic predicates, and symbolic execution. It captures some of the core aspects of the C programming language, but leaves out many complexities, including advanced concepts such as function pointers and concurrency, even though these are supported by VeriFast \cite{DBLP:conf/popl/JacobsP11,DBLP:conf/fm/JacobsSP11}. The running example (introduced on p.~\pageref{fig:example-program})
builds a large linked list in one routine and tears it down in another 
one; another example that appears (on p.~\pageref{listing:reverse}) is the in-place reversal of a linked list.
We use Featherweight VeriFast to verify the safety of both examples.

We hope that the definitions in this article are clear and the proofs are convincing;
however, to address any shortcomings in this regard,
we developed a machine-readable
executable definition and machine-checked soundness proof
of a slight variant of Featherweight VeriFast, called
Mechanised Featherweight VeriFast, in the Coq proof assistant. It is available at \url{http://www.cs.kuleuven.be/~bartj/fvf/}.
Furthermore, the executable nature of the definitions allowed us to test for errors in our programming language semantics and to verify that the formalized verification algorithm succeeds in verifying the example programs.

The structure of the article is as follows. In 
Section~\ref{sec:language}, we define the syntax of the 
input programming language, 
and we illustrate it with a few example programs. In 
Section~\ref{sec:exec}, we illustrate and define the 
\emph{concrete execution} of programs in this programming 
language. In Sections~\ref{sec:scexec} and \ref{sec:symexec}, 
we gradually introduce the VeriFast verification approach: in 
Section~\ref{sec:symexec}, we present the approach itself, 
called \emph{symbolic execution}\footnote{We use this term in this article to denote the specific algorithm implemented by VeriFast. It is an instance of the general approach known in the literature as symbolic execution.}; in Section~\ref{sec:scexec}, 
we present an intermediate type of execution, called \emph{semiconcrete execution}, that sits between 
concrete execution and symbolic execution, which introduces 
some but not all features of the VeriFast approach.
In Section~\ref{sec:mech}, we discuss Mechanised Featherweight VeriFast. We end the article
with an overview of related work in Section~\ref{sec:related-work} and
a conclusion in Section~\ref{sec:conclusion}.

This article is based on a slide deck and lecture notes for a graduate 
course on program verification, and aims to be usable as an introduction 
to program verification. In the course, theory lectures based on this 
material are interleaved with hands-on lab sessions based on the VeriFast 
Tutorial \cite{vftutorial}. Acknowledgements of related work are deferred 
to Section~\ref{sec:related-work}.

\section{The Programming Language}\label{sec:language}

In this section, we define the syntax of programs and then show an example program.

\subsection{Syntax of Programs}

The programming language is as follows. \label{cmd-syntax} An 
integer expression $e$ is either an integer literal $z$, a 
variable $x$, an addition $e + e$, or a subtraction $e - e$. A 
boolean expression $b$ is either an equality comparison $e = 
e$, a less-than comparison $e < e$, or a negation $\lnot b$ of 
another boolean expression. A command $c$ is either an 
assignment $x := e$ of an integer expression $e$ to a variable 
$x$, a sequential composition $(c; c)$ of two commands (whose 
execution proceeds by first executing the first command and 
then the second command), a conditional command $\mathbf{if}\ 
b\ \mathbf{then}\ c\ \mathbf{else}\ c$, a while loop 
$\mathbf{while}\ b\ \mathbf{do}\ c$, a routine call 
$r(\overline{e})$ (which calls routine $r$ with argument list 
$\overline{e}$ (a line over a letter means a list of the things 
denoted by the letter))\footnote{In this simple language, routines have no return value. A routine can pass a result to its caller by taking an address where the result should be stored as an argument.}, a heap memory block allocation $x := 
\mathbf{malloc}(n)$ (which allocates a block of heap memory of size $n$ and stores the 
address of the new block in variable $x$), a memory 
read $x := [e]$ (which reads the value of the memory cell whose 
address is given by $e$ and stores it in variable $x$), a 
memory write $[e] := e$ (which writes the value of the second 
expression into the memory cell whose address is given by the 
first expression), or a deallocation command $\mathbf{free}(e)$ 
which releases the memory block allocated by $\mathbf{malloc}$ 
whose address is given by $e$. A routine definition 
$\mathit{rdef}$ is of the form $\mathbf{routine}\ 
r(\overline{x}) = c$ which declares $\overline{x}$ as the 
parameter list and $c$ as the body of routine $r$. 

\begin{defi}{Syntax of Programs}\label{defi:cmd-syntax}

$$\begin{array}{r l}
& z \in \mathbb{Z}, n \in \mathbb{N}\\
& x \in \mathit{Vars}\\
e ::= & z\ |\ x\ |\ e + e\ |\ e - e\\
b ::= & e = e\ |\ e < e\ |\ \lnot b\\
c ::= & x := e\ |\ (c; c)\ |\ \mathbf{if}\ b\ \mathbf{then}\ c\ \mathbf{else}\ c\ |\ \mathbf{while}\ b\ \mathbf{do}\ c\\
& |\ r(\overline{e})\ |\ x := \mathbf{malloc}(n)\ |\ x := [e]\ |\ [e] := e\ |\ \mathbf{free}(e)\\
\mathit{rdef} ::= & \mathbf{routine}\ r(\overline{x}) = c
\end{array}$$

\end{defi}

\subsection{Example Program}

\begin{figure}
$$
\begin{array}{c}
\begin{array}{c c}
\begin{array}{l}
\mathbf{routine}\ \mathsf{range}(\mathsf{i}, \mathsf{n}, \mathsf{result})\;=\\
\quad \mathbf{if}\ \mathsf{i} = \mathsf{n}\ \mathbf{then}\\
\quad\quad [\mathsf{result}] := 0\\
\quad \mathbf{else}\ (\\
\quad\quad \mathsf{head} := \mathbf{malloc}(2);\\
\quad\quad [\mathsf{result}] := \mathsf{head};\\
\quad\quad [\mathsf{head}] := \mathsf{i};\\
\quad\quad \mathsf{range}(\mathsf{i} + 1, \mathsf{n}, \mathsf{head} + 1)\\
\quad )
\end{array}
&
\begin{array}{l}
\mathbf{routine}\ \mathsf{dispose}(\mathsf{list})\;=\\
\quad \mathbf{if}\ \mathsf{list} = 0\ \mathbf{then}\\
\quad\quad \mathsf{dummy} := \mathsf{dummy}\\
\quad \mathbf{else}\ (\\
\quad\quad \mathsf{tail} := [\mathsf{list} + 1];\\
\quad\quad \mathbf{free}(\mathsf{list});\\
\quad\quad \mathsf{dispose}(\mathsf{tail})\\
\quad )
\end{array}
\end{array}\\
\\
\begin{array}{l}
\mathsf{cell} := \mathbf{malloc}(1); \mathsf{range}(0, 100000000, \mathsf{cell});\\
\mathsf{list} := [\mathsf{cell}]; \mathbf{free}(\mathsf{cell}); \mathsf{dispose}(\mathsf{list})
\end{array}
\end{array}$$
\caption{Example Program}\label{fig:example-program}
\end{figure}

The example program of Figure~\ref{fig:example-program} consists of routine definitions for 
routines $\mathsf{range}$ and $\mathsf{dispose}$ and a main 
command. Routine $\mathsf{range}$ has parameters $\mathsf{i}$, 
$\mathsf{n}$, and $\mathsf{result}$; it builds a linked list 
that stores the integers from $\mathsf{i}$, inclusive, to 
$\mathsf{n}$, exclusive, and writes the address of the new 
linked list into the memory cell whose address is given by 
$\mathsf{result}$. If $\mathsf{i}$ equals $\mathsf{n}$, the 
value 0 is written to address $\mathsf{result}$, denoting the 
empty linked list. Otherwise, a new linked list node is 
allocated with two fields; the first field holds the value of 
the node, and the second field holds the address of the next 
node. A recursive call of routine $\mathsf{range}$ is used to 
build the remaining nodes of the linked list. 

Routine $\mathsf{dispose}$ has the single parameter 
$\mathsf{list}$. It frees the nodes of the linked list pointed 
to by $\mathsf{list}$. If $\mathsf{list}$ is 0, this means the 
linked list is empty and nothing needs to be done. (Since in this programming language, each $\mathbf{if}$ command must specify a command for the $\mathbf{then}$ branch and for the $\mathbf{else}$ branch, we specify the command $\mathsf{dummy} := \mathsf{dummy}$ for the $\mathbf{then}$ branch, which has no effect.) Otherwise, 
the first node is freed and then a recursive call of 
$\mathsf{dispose}$ is used to free the remaining nodes.

The main program calls $\mathsf{range}$ to build a linked list 
holding the numbers 0 through 99999999. Before doing so, 
however, it allocates a memory cell to hold the address of the 
new list. After the $\mathsf{range}$ call, the address of the 
list is read from the cell, the cell is freed, and finally the 
list nodes are freed using a call of routine 
$\mathsf{dispose}$.

The purpose of Featherweight VeriFast is to verify that 
programs, like this one, never \emph{fail} (i.e.~access 
unallocated memory), i.e.~that no execution of the program 
fails. The example program has an infinite number of 
executions: for each possible address of each linked list node, 
there is a separate execution. In one execution, the first node 
is allocated at address 1000, the second node at address 2000, 
etc. In another execution, the first node is allocated at 
address 123, the second node at address 234, etc. Featherweight 
VeriFast must check that none of these infinitely many 
executions fail. 

Note: in a language like Java, the precise address at which an 
object is allocated cannot influence program execution, since 
the program can only compare two object references for 
equality; it cannot compare an object reference with an 
integer, check if one reference is less than another one, use literal addresses as object references, etc. 
However, in C, as well as in the programming language which we 
defined above, this is possible, so it is possible to write 
programs that fail or not depending on the address picked by 
$\mathbf{malloc}$. Here is such a program:
$$x := \mathbf{malloc}(1); [42] := 0$$
If, in a given execution of this program, the address picked by $\mathbf{malloc}$ happens to be 42,
the execution completes normally; otherwise, it fails.

Note also: While this aspect of memory allocation is peculiar 
to C, the fact that the language contains 
\emph{nondeterministic} constructs, i.e.~constructs whose 
observable behavior is not uniquely determined by the language 
specification, is universal to all programming languages: any 
language construct that accepts user input or otherwise 
interacts with the environment is nondeterministic from a 
verification point of view, since it leads to multiple possible 
executions, all of which need to be checked.

The main point illustrated by the example is that a program may 
have infinitely many executions, each of which may be very long 
(or even infinitely long), and all of these need to be checked 
for failure. This is true in all programming languages. 
Clearly, it is inefficient or impossible to naively check each 
execution separately. VeriFast (and Featherweight VeriFast) 
perform \emph{modular symbolic execution} to achieve 
efficiency. After we define concrete execution precisely in 
Section~\ref{sec:exec}, we introduce the Featherweight VeriFast 
constructs for modularity in Section~\ref{sec:scexec} and the 
symbolic execution in Section~\ref{sec:symexec}. 

\section{Concrete Execution}\label{sec:exec}

In this section, we provide a formal definition of the behavior of programs of our programming language. We first introduce the notion of \emph{concrete execution states} by means of two examples of concrete execution \emph{traces} (sequences of states reached during an execution). We then introduce the notion of \emph{outcomes}, which we use to express failure, nontermination, and nondeterminism. Finally, we use these concepts to define concrete execution of commands and safety of a program, and we discuss the verification problem.

\subsection{Small Example Concrete Execution Trace}

\begin{figure}
$$\begin{array}{l}
\comment{\textrm{// $s = \mathbf{0}, h = \mathbf{0}$}}\\
\mathsf{pair} := \mathbf{malloc}(2);\\
\comment{\textrm{// $s = \mathbf{0}[\mathsf{pair} := \branching{100}], h = \llbrace\mathsf{mb}(\branching{100}, 2), \branching{100} \mapsto \branching{42}, \branching{101} \mapsto \branching{24}\rrbrace$}}\\{}
[\mathsf{pair}] := 0;\\
\comment{\textrm{// $s = \mathbf{0}[\mathsf{pair} := 100], h = \llbrace\mathsf{mb}(100, 2), 100 \mapsto 0, 101 \mapsto 24\rrbrace$}}\\
\mathbf{free}(\mathsf{pair})\\
\comment{\textrm{// $s = \mathbf{0}[\mathsf{pair} := 100], h = \mathbf{0}$}}\\
\\
\textrm{where}\\
\quad f[x := y] = \textrm{function update} = \lambda z.\;\left\{\begin{array}{l l}
y & \textrm{if $z = x$}\\
f(z) & \textrm{otherwise}
\end{array}\right.\\
\quad \mathbf{0} = \lambda x.\;0 = \textrm{empty store} = \textrm{empty heap} = \llbrace \rrbrace = \textrm{empty multiset}\\
\quad \llbrace e_1, \dots, e_n\rrbrace = \mathbf{0} + \llbrace e_1\rrbrace + \cdots + \llbrace e_n\rrbrace\\
\quad M + \llbrace e \rrbrace = M[e := M(e) + 1]
\end{array}$$
\caption{Example concrete execution trace}\label{fig:ex-ctrace}
\end{figure}

The small example program in Figure~\ref{fig:ex-ctrace} allocates a memory block of 
size 2, initializes the first element of the block to 0, and 
then frees the block. An example \emph{execution trace} of this 
program is shown in comments before and after the code lines. 
An execution trace is a sequence of \emph{execution states} (or 
\emph{states} for short). In our simple programming language, a 
state consists of a \emph{store} $s$ and a \emph{heap} $h$. A 
store is a function that maps variables to their current 
values; a heap is a \emph{multiset} (or \emph{bag}) of 
\emph{heap chunks}. A multiset is like a set, except that it 
may contain elements more than once. Mathematically, it is a 
function that maps each potential element to the number of times it 
occurs in the multiset. A heap chunk (in concrete executions) 
is either a \emph{points-to chunk} $\ell \mapsto v$ denoting 
that there is an allocated memory cell at address $\ell$ whose 
current value is $v$, or a \emph{malloc block chunk} 
$\mathsf{mb}(\ell, n)$ denoting that a memory block of size $n$ 
was allocated at address $\ell$ by $\mathbf{malloc}$, i.e.~that 
the memory cells at addresses $\ell$ through $\ell + n - 1$ are 
part of a single block, which will be freed as one unit when 
$\mathbf{free}$ is called with argument $\ell$. 

The example programming language makes a few simplifications 
compared to real machine states: memory cells may store 
arbitrary integers, rather than just bytes, and memory 
addresses may be arbitrary positive integers, rather than being 
bounded by the size of installed memory (or the size of the 
address space).

The initial store maps all variables to zero; the initial heap 
contains no heap chunks, i.e.~it contains all heap chunks zero 
times, so it is a function that maps all heap chunks to zero. 
We denote a function that maps all arguments to zero by 
$\mathbf{0}$.

In the example execution trace, the $\mathbf{malloc}$ operation 
allocates the new block at address 100. Therefore, in the 
execution state after the $\mathbf{malloc}$ operation, the 
store maps the target variable $\mathsf{pair}$ of the 
$\mathbf{malloc}$ operation to 100, and the heap contains three 
heap chunks: the two points-to chunks that correspond to the 
two memory cells that constitute the newly allocated block, and 
the malloc block chunk that records that these two memory cells 
are part of the same block. As in C, the initial contents of 
the newly allocated memory cells are arbitrary; in the example 
trace, the contents are 42 and 24. (All numbers that were 
picked arbitrarily are shown in orange, to highlight that the 
program has infinitely many other executions, that pick these 
numbers differently.)

The notation $f[a:=b]$ denotes the function that is like $f$ 
except that it maps argument $a$ to value $b$. The notation 
$\llbrace e_1, e_2\rrbrace$ denotes the multiset with elements 
$e_1$ and $e_2$ (where possibly $e_1 = e_2$). Formally, 
$\llbrace e_1, \dots, e_n\rrbrace = \mathbf{0} + \llbrace 
e_1\rrbrace + \cdots + \llbrace e_n\rrbrace$, where $M + 
\llbrace e\rrbrace = M[e:=M(e) + 1]$; i.e.~the multiset $M + 
\llbrace e\rrbrace$ is like $M$ except that element $e$ occurs 
once more than in $M$.

The second command, which initializes the memory cell at 
address $\mathsf{pair}$ to zero, causes the state to change in 
just one place: the value of the points-to chunk with address 
100 changes from 42 to 0.

Finally, the $\mathbf{free}$ command removes the three heap 
chunks from the heap and leaves it empty; it does not modify 
any variables so the store remains unchanged. 

\subsection{Large Example Concrete Execution Trace}

\begin{figure}

\begin{changemargin}{-1cm}{-1cm}
$$\begin{array}{l}
\mathbf{routine}\ \mathsf{range}(\mathsf{i}, \mathsf{n}, \mathsf{r})\:=\\
\comment{s{:} \mathbf{0}[\mathsf{i}{:}5,\mathsf{n}{:}8,\mathsf{r}{:}41], h {:} h_0{\uplus}\llbrace 41 {\mapsto} 77\rrbrace}\\
\mathbf{if}\ \mathsf{i} = \mathsf{n}\ \mathbf{then}\ \mathsf{l} := 0\ \mathbf{else}\ (\\
\comment{s{:} \mathbf{0}[\mathsf{i}{:}5,\mathsf{n}{:}8,\mathsf{r}{:}41], h{:} h_0{\uplus}\llbrace 41 {\mapsto} 77\rrbrace}\\
\mathsf{l} := \mathbf{malloc}(2);\\
\comment{s{:} \mathbf{0}[\mathsf{i}{:}5,\mathsf{n}{:}8,\mathsf{r}{:}41,\mathsf{l}{:}\branching{50}], h{:} h_0{\uplus}\llbrace 41 {\mapsto}77{,} \mathsf{mb}(\branching{50}, 2){,} \branching{50}{\mapsto}\branching{88}{,} \branching{51} {\mapsto} \branching{99}\rrbrace}\\{}
[\mathsf{l}] := \mathsf{i}; \mathsf{range}(\mathsf{i} + 1, \mathsf{n}, \mathsf{l} + 1)\\
\quad \vdots \quad {\color{Purple}\textit{(Execution of 3 nested $\mathsf{range}$ calls)}}\\
\comment{\begin{array}{@{} l @{}}s{:}\mathbf{0}[\mathsf{i}{:}5,\mathsf{n}{:}8,\mathsf{r}{:}41,\mathsf{l}{:}50], h{:} h_0{\uplus}\llbrace41 {\mapsto} 77{,} \mathsf{mb}(50, 2){,}50{\mapsto}5{,} 51 {\mapsto} \branching{60}{,}\\
\quad \mathsf{mb}(\branching{60}, 2){,} \branching{60}\mapsto 6{,} \branching{61}\mapsto \branching{70}{,}\mathsf{mb}(\branching{70},2){,}\branching{70}\mapsto 7{,}\branching{71}\mapsto 0\rrbrace
\end{array}}\\
);\\
{}[\mathsf{r}] := \mathsf{l}\\
\comment{\begin{array}{@{} l @{}}
s{:}\mathbf{0}[\mathsf{i}{:}5,\mathsf{n}{:}8,\mathsf{r}{:}41,\mathsf{l}{:}50], h{:} h_0{\uplus}\llbrace 41\mapsto 50{,} \mathsf{mb}(50, 2){,} 50{\mapsto}5{,} 51 {\mapsto} 60{,}\\
\quad \mathsf{mb}(60, 2){,} 60{\mapsto} 6{,} 61{\mapsto} 70{,}\mathsf{mb}(70,2){,}70{\mapsto} 7{,}71{\mapsto} 0\rrbrace
\end{array}}\\
\textrm{where $h_0:\llbrace\mathsf{mb}(30){,}30{\mapsto} 3{,}31{\mapsto} 40{,}\mathsf{mb}(40,2){,}40{\mapsto} 4\rrbrace$}
\end{array}$$
\end{changemargin}

\caption{Larger Example Concrete Execution Trace}\label{fig:larger-cexec-trace}
\end{figure}

\begin{exa}{Larger Example Concrete Execution Trace}
See Figure~\ref{fig:larger-cexec-trace}.
\end{exa}

Now, let's look at an execution trace of routine 
$\mathsf{range}$ from the example program introduced earlier. 
This trace is part of a larger program execution trace. We look 
at a particular call of $\mathsf{range}$. As shown in the first 
state of the trace, the values of parameters $\mathsf{i}$, 
$\mathsf{n}$, and $\mathsf{r}$ are 5, 8, and 41. That is, the 
caller is asking $\mathsf{range}$ to build a linked list with 
three nodes holding the values 5, 6, and 7, respectively, and 
to store the address of the newly built linked list in the 
previously allocated memory cell at address 41. At the time of 
the call, the heap consists of some chunks $h_0$ plus a 
points-to chunk with address 41 and value 77. (We use both $M + 
M'$ and $M \uplus M'$ for multiset union, defined as $M + M' = 
M \uplus M' = \lambda e.\;M(e) + M'(e)$.)

The first statement is the $\mathbf{if}$ statement. It checks 
if $\mathsf{i} = \mathsf{n}$. Since this is not the case, we 
skip the $\mathbf{then}$ branch and execute the $\mathbf{else}$ 
branch. The state upon arrival in the $\mathbf{else}$ branch is 
unchanged.

The execution of the $\mathbf{malloc}$ block is as before, in 
the simple example. In this trace, the block is allocated at 
address 50, and the initial values of the memory cells are 88 
and 99.

Then, the first cell of the new block is initialized to 
$\mathsf{i}$. We do not show the resulting state; only the 
value of the points-to chunk with address 50 changes.

Then, we get the recursive call of $\mathsf{range}$ to build 
the rest of the linked list. We do not show the execution 
states reached during the execution of this recursive call 
(which itself contains two more calls, one nested within the 
other); we skip directly to the state reached upon return from 
the call.

At this point, two more linked list nodes have been allocated, 
at addresses 60 and 70 (in this trace). Also, the linked list 
is well-formed: the second cell (which serves as the 
$\mathsf{next}$ field) of the node at address 50 points to the 
node at address 60, the $\mathsf{next}$ field of the node at 
address 60 points to the node at address 70, and the 
$\mathsf{next}$ field of the node at address 70 is a null 
pointer, indicating the end of the linked list. 

The final command writes the address 50 of the newly built 
linked list to the address 41 provided by the caller; this 
modifies only the points-to chunk with address 41 in the heap.

The points to remember about this example trace are that it is 
long (since it contains three nested routine executions, which 
we did not show), that its states have large heaps with many 
chunks (here, up to 15 chunks, if we include $h_0$), and that 
routine $\mathsf{range}$ has infinitely many more execution 
traces like this one, that pick the numbers shown in orange 
differently. In subsequent sections, we will define alternative ways of executing programs
where programs have fewer and shorter executions, and execution states have fewer heap chunks.

\subsection{Concrete Execution States}

We define the set $\mathit{CStates}$ of concrete execution 
states. The concrete stores $\mathit{CStores}$ 
are the functions from variables to integers. The concrete 
predicates (i.e.~the concrete chunk names) are the points-to 
predicate $\mapsto$ and the malloc block predicate 
$\mathsf{mb}$. The set of concrete chunks is the set of 
expressions of the form $p(\ell, v)$, where $p$ is a concrete 
predicate, and $\ell$ and $v$ are integers. We call $p$ the 
\emph{name} of the chunk, and $\ell$ and $v$ the 
\emph{arguments} of the chunk. The concrete heaps are the 
multisets of concrete chunks. The concrete states are the pairs 
of concrete stores and concrete heaps.

We often use the alternative syntax $\ell \mapsto v$ for the 
points-to chunk ${\mapsto}(\ell, v)$. 

\begin{defi}{Concrete Execution States}

$$\begin{array}{r l}
\mathit{CStores} = & \mathit{Vars} \rightarrow \mathbb{Z}\\
\mathit{CPredicates} = & \{\mapsto, \mathsf{mb}\}\\
\mathit{CChunks} = & \{p(\ell, v)\ |\ p \in \mathit{CPredicates}, \ell, v \in \mathbb{Z}\}\\
\mathit{CHeaps} = & \mathit{CChunks} \rightarrow \mathbb{N}\\
\mathit{CStates} = & \mathit{CStores} \times \mathit{CHeaps}
\end{array}$$

\begin{center}
$\ell \mapsto v$ is alternative syntax for ${\mapsto}(\ell, v)$
\end{center}

\end{defi}

\subsection{Outcomes}

In this subsection, we introduce the notion of \emph{outcomes}, which we use to express failure, nontermination, and nondeterminism. We first introduce the various types of outcomes by example. We then provide formal definitions of outcomes, without or with \emph{answers}. Finally, we define the concepts of satisfaction of a postcondition by an outcome, coverage of an outcome by another outcome, and sequential composition of outcomes; we state some properties; and we introduce some notations.

\subsubsection{Outcomes by Example}

To define mathematically what the concrete executions of a 
given program are, we define the function $\mathsf{exec}$, 
which takes as arguments a command and an input state, and 
returns the \emph{outcome} of executing the command starting in 
the given input state. In simple cases, such as in the case of 
the assignment command $\mathsf{p} := 42$, the outcome is a 
single output state: executing this assignment in the state 
$(\mathbf{0}, \mathbf{0})$ with an empty store (i.e.~one that 
maps all variables to zero) and an empty heap results in the 
single output state where the store maps $\mathsf{p}$ to value 
42 and all other variables to zero, and where the heap is still 
empty. We call such an outcome a \emph{singleton outcome}, and 
we denote the singleton outcome with output state $\sigma$ 
using angle brackets: $\langle \sigma\rangle$.

\begin{exa}{Concrete Execution: Singleton Outcomes}
$$\mathsf{exec}(\mathsf{p} := 42)((\mathbf{0}, \mathbf{0})) = \langle(\mathbf{0}[\mathsf{p}:=42], \mathbf{0})\rangle$$
\end{exa}

Note: we define and use function $\mathsf{exec}$ in a 
\emph{curried} form: instead of defining it as a function of 
two parameters (a command and an input state), we define it as 
a function of one parameter (a command) that returns another 
function of one parameter (an input state) which itself returns 
an outcome. We call the latter kind of function (a function 
that takes an input state and returns an outcome) a 
\emph{mutator}. Therefore, $\mathsf{exec}$ is a function that 
maps commands to mutators.

\begin{exa}{Concrete Execution: Demonic Choice}

$$\begin{array}{l}
\mathsf{exec}(\mathsf{p} := \mathbf{malloc}(0))((\mathbf{0}, \mathbf{0}))\;=\\
\quad \begin{array}{l l}
& \langle(\mathbf{0}[\mathsf{p} := 1], \llbrace \mathsf{mb}(1, 0)\rrbrace)\rangle\\
\otimes & \langle(\mathbf{0}[\mathsf{p} := 2], \llbrace \mathsf{mb}(2, 0)\rrbrace)\rangle\\
\otimes & \langle(\mathbf{0}[\mathsf{p} := 3], \llbrace \mathsf{mb}(3, 0)\rrbrace)\rangle\\
\otimes & \cdots
\end{array}
\end{array}$$

$$\begin{array}{l}
\mathsf{exec}(\mathsf{p} := \mathbf{malloc}(1))((\mathbf{0}, \mathbf{0}))\;=\\
\quad \langle(\mathbf{0}[\mathsf{p} := 1], \llbrace \mathsf{mb}(1, 1), 1 \mapsto 0\rrbrace)\rangle\\
\quad\quad \otimes\;\langle(\mathbf{0}[\mathsf{p} := 1], \llbrace \mathsf{mb}(1, 1), 1 \mapsto 1\rrbrace)\rangle \otimes \cdots\\
\quad \otimes\;\langle(\mathbf{0}[\mathsf{p} := 2], \llbrace \mathsf{mb}(2, 1), 2 \mapsto 0\rrbrace)\rangle\\
\quad\quad \otimes\;\langle(\mathbf{0}[\mathsf{p} := 2], \llbrace \mathsf{mb}(2, 1), 2 \mapsto 1\rrbrace)\rangle \otimes \cdots\\
\quad \otimes\;\langle(\mathbf{0}[\mathsf{p} := 3], \llbrace \mathsf{mb}(3, 1), 3 \mapsto 0\rrbrace)\rangle\\
\quad\quad \otimes\;\langle(\mathbf{0}[\mathsf{p} := 3], \llbrace \mathsf{mb}(3, 1), 3 \mapsto 1\rrbrace)\rangle \otimes \cdots\\
\quad \otimes \cdots
\end{array}$$

\end{exa}

The outcome of executing a command is not always a single 
state. Specifically, consider $\mathbf{malloc}$ commands: the 
command $\mathsf{p} := \mathbf{malloc}(0)$ allocates a new 
memory block of size zero. This means that it does not allocate 
any memory cells, but it does create an $\mathsf{mb}$ chunk at 
an address that is different from the address of existing 
$\mathsf{mb}$ chunks. When starting from an empty heap, this 
address may be any positive integer. For every distinct address 
chosen, there is a different output state. Notice that this 
choice can be considered \emph{demonic}: the program should not 
fail even if an attacker who tries to make the program fail 
makes this choice.  Therefore, the outcome returned by 
$\mathsf{exec}$ is a \emph{demonic choice} over the integers, 
where the chosen number is used as the address of the new block 
in the output state of a singleton outcome. That is, the 
\emph{operands} of the demonic choice outcome in this example 
are singleton outcomes. The demonic choice between outcomes 
$o_1$ and $o_2$ is denoted as $o_1 \otimes o_2$. 

In the case of the command $\mathsf{p} := \mathbf{malloc}(1)$, 
the outcome is a demonic choice over both the address of the 
new block and the initial value of the new memory cell. 

\begin{exa}{Concrete Execution: Failure, Nontermination, Angelic Choice}

$$\begin{array}{l}
\mathsf{exec}([0] := 33)((\mathbf{0}, \mathbf{0})) = \bot\\
\\
\mathsf{exec}(\mathsf{recurse}())((\mathbf{0}, \mathbf{0})) = \top\\
\quad \textrm{where $\mathbf{routine}\ \mathsf{recurse}() = \mathsf{recurse}()$}\\
\\
\mathsf{exec}(\mathbf{backtrack}(c_1, c_2))(\sigma) = \mathsf{exec}(c_1)(\sigma) \oplus \mathsf{exec}(c_2)(\sigma)
\end{array}$$

\end{exa}

Singleton outcomes and demonic choices are not the only kinds 
of outcomes; there are three more kinds.

Consider the command $[0] := 33$ when executed in the empty 
heap. This is an access of an unallocated memory cell, i.e.~it 
is a failure. We denote the failure outcome by the symbol for 
``bottom'': $\bot$. 

Consider the routine call $\mathsf{recurse}()$ and assume that 
routine $\mathsf{recurse}$ is defined such that its body is 
simply a recursive call of itself. This command performs an 
infinite recursion; it does not terminate.\footnote{In the case 
of a real programming language, this would lead to a stack 
overflow error at run time, except if the compiler performs 
tail recursion optimization. Neither VeriFast nor Featherweight 
VeriFast verify the absence of stack overflows, so we ignore 
this issue in this formalization.} Nontermination is often 
considered undesirable; however, in many other cases, it is 
intentional: for example, a web server or a database server is 
not supposed to terminate unless and until the user instructs 
it to do so. In any case, VeriFast and Featherweight VeriFast 
do not verify termination. Featherweight VeriFast verifies only 
the absence of accesses of unallocated memory, so from this 
point of view a nonterminating command is a good thing, since 
it prevents the remainder of the program from executing, 
including any commands that might fail. Therefore, we represent 
nontermination with the symbol for ``top'', $\top$, the 
opposite of $\bot$.

Finally, to round out the ``algebra of outcomes'', we 
introduce also \emph{angelic choice}. True angelic choice does 
not occur in concrete executions of our programming 
language\footnote{A degenerate form of angelic choice does 
occur: failure is equivalent to angelic choice over zero 
alternatives, as we will see later.}; however, some real 
programming languages do have a form of angelic choice. For 
example, the logic programming language Prolog allows the user 
to specify multiple alternative ways to solve a problem. At run 
time, Prolog will try first the first alternative; if it fails, 
it restores the program state and then tries the second 
alternative. The program as a whole succeeds if either 
alternative succeeds: it is as if an angel chooses the right 
alternative. Another example is a transactional database: if a 
schedule fails, the state is rolled back and another schedule 
is attempted.

We introduce angelic choice here to obtain a nice, complete 
algebra, but also because we will use angelic choice in our 
definition of the Featherweight VeriFast verification 
algorithm.

In summary, (concrete) mutators are functions from (concrete) 
input states to outcomes over (concrete) output states. (Later 
we will also use outcomes over other state spaces.) The 
concrete execution function $\mathsf{exec}$ maps commands to 
concrete mutators. 

\begin{defi}{Type of Concrete Execution}

$$\begin{array}{r c l}
\mathit{CMutators} & = & \mathit{CStates} \rightarrow \mathit{Outcomes}(\mathit{CStates})\\
\mathsf{exec} & \in & \mathit{Commands} \rightarrow \mathit{CMutators}
\end{array}$$

\end{defi}

\subsubsection{Outcomes: Definition}

An outcome $\phi$ over a state space 
$\mathcal{S}$ is either a singleton outcome $\langle \sigma\rangle$, 
with $\sigma \in \mathcal{S}$, or a demonic choice $\bigotimes \Phi$
over the outcomes in $\Phi$,
or an angelic choice $\bigoplus \Phi$
over the outcomes in $\Phi$,
where $\Phi$ is a set of outcomes over $\mathcal{S}$.
We denote the set of outcomes over state 
space $\mathcal{S}$ as $\mathit{Outcomes}(\mathcal{S})$.
$\bigotimes \Phi$ and $\bigoplus \Phi$ are called \emph{infinitary} demonic and angelic choice, since
the set $\Phi$ is potentially infinite.

Binary demonic choice, binary angelic choice, nontermination, 
and failure can be defined as special cases of infinitary demonic 
choice and infinitary angelic choice: binary choices are
choices over the set collecting the two alternatives; nontermination is a demonic 
choice over zero alternatives (the attacker is stuck with no 
alternatives, which is a good thing); failure is an angelic 
choice over zero alternatives (the angel is stuck with no 
alternatives, which is a bad thing). 

\begin{defi}{Outcomes}

$$\begin{array}{r c l l l}
\phi & ::= & & \langle \sigma\rangle & \textrm{singleton outcome}\\
& & | & \bigotimes \Phi & \textrm{demonic choice}\\
& & | & \bigoplus \Phi & \textrm{angelic choice}
\end{array}$$

$$\begin{array}{r c l}
\sigma \in \mathcal{S} & \Rightarrow & \langle \sigma \rangle \in \mathit{Outcomes}(\mathcal{S})\\
\Phi \subseteq \mathit{Outcomes}(\mathcal{S}) & \Rightarrow & \bigotimes \Phi \in \mathit{Outcomes}(\mathcal{S})\\
\Phi \subseteq \mathit{Outcomes}(\mathcal{S}) & \Rightarrow & \bigoplus \Phi \in \mathit{Outcomes}(\mathcal{S})
\end{array}$$

$$\begin{array}{r c l l}
\phi_1 \otimes \phi_2 & = & \bigotimes \{\phi_1, \phi_2\} & \textrm{binary demonic choice}\\
\phi_1 \oplus \phi_2 & = & \bigoplus \{\phi_1, \phi_2\} & \textrm{binary angelic choice}\\
\top & = & \bigotimes \emptyset & \textrm{nontermination}\\
\bot & = & \bigoplus \emptyset & \textrm{failure}
\end{array}$$

\end{defi}

\subsubsection{Outcomes with Answers}

Often, it is useful to consider mutators that have not just an output state but also an \emph{answer}.
We denote a singleton outcome with output state $\sigma \in \mathcal{S}$ and answer $a \in \mathcal{A}$ by $\langle \sigma, a\rangle$.
For uniformity, we treat outcomes without answers like outcomes whose answer is the \emph{unit value} $\mathsf{tt}$, the sole element of the
\emph{unit set} $\mathsf{unit}$. That is, we consider $\mathit{Outcomes}(\mathcal{S})$ a shorthand for $\mathit{Outcomes}(\mathcal{S}, \mathsf{unit})$,
and $\langle \sigma\rangle$ a shorthand for $\langle \sigma, \mathsf{tt}\rangle$.

\begin{defi}{Outcomes with Answers}

$$\begin{array}{r c l l l}
\phi & ::= & & \langle \sigma, a\rangle & \textrm{singleton outcome}\\
& & | & \bigotimes \Phi & \textrm{demonic choice}\\
& & | & \bigoplus \Phi & \textrm{angelic choice}
\end{array}$$

$$\begin{array}{r c l}
\sigma \in \mathcal{S} \land a \in \mathcal{A} & \Rightarrow & \langle \sigma, a\rangle \in \mathit{Outcomes}(\mathcal{S}, \mathcal{A})\\
\Phi \subseteq \mathit{Outcomes}(\mathcal{S}, \mathcal{A}) & \Rightarrow & \bigotimes \Phi \in \mathit{Outcomes}(\mathcal{S}, \mathcal{A})\\
\Phi \subseteq \mathit{Outcomes}(\mathcal{S}, \mathcal{A}) & \Rightarrow & \bigoplus \Phi \in \mathit{Outcomes}(\mathcal{S}, \mathcal{A})
\end{array}$$

$$\langle \sigma\rangle = \langle \sigma, \mathsf{tt}\rangle \in \mathit{Outcomes}(\mathcal{S}) = \mathit{Outcomes}(\mathcal{S}, \mathsf{unit})$$

\end{defi}

\subsubsection{Outcomes: Satisfaction, Coverage}

A useful question to ask is whether an outcome $\phi \in 
\mathit{Outcomes}(\mathcal{S}, \mathcal{A})$ satisfies a given postcondition 
$Q$, where a postcondition can be modelled mathematically as 
the set of pairs of output states and answers that satisfy it, i.e.~$Q \subseteq 
\mathcal{S} \times \mathcal{A}$. We denote this by $\phi\;\{Q\}$.

We define this recursively as follows:
\begin{itemize}
\item A singleton outcome $\langle \sigma, a\rangle$ satisfies 
    postcondition $Q$ if the output state $\sigma$ and answer $a$ satisfy
    $Q$, i.e.~$(\sigma, a) \in Q$.
\item A demonic choice $\bigotimes \Phi$ 
    satisfies $Q$ if all alternatives satisfy $Q$
\item An angelic choice $\bigoplus \Phi$ 
    satisfies $Q$ if some alternative satisfies $Q$.
\end{itemize}

Notice that it follows from this definition that nontermination 
satisfies all postconditions (even the postcondition that does 
not accept any output state), and failure satisfies no 
postcondition (not even the postcondition that accepts all output 
states).

We also define \emph{coverage} between outcomes: we say outcome 
$\phi$ \emph{covers} outcome $\phi'$, denoted $\phi \Rrightarrow 
\phi'$, if for any postcondition $Q$, if $\phi$ satisfies $Q$, 
then $\phi'$ satisfies $Q$. Intuitively, this means $\phi$ is a 
``worse'' outcome than $\phi'$; if $\phi'$ is failure, then 
$\phi$ must be failure, but the converse does not hold: it is 
possible that $\phi$ is failure but $\phi'$ is not. Another way
to look at this is to say that $\phi$ is a safe approximation of $\phi'$ for verification:
if we prove that $\phi$ satisfies some postcondition,
then it follows that $\phi'$ also satisfies it.

We lift outcome coverage pointwise to mutators: a mutator $C$ 
\emph{covers} a mutator $C'$ if for each input state $\sigma$, the outcome 
of $C$ started in state $\sigma$ covers the outcome of $C'$ started in 
state $\sigma$. 

\begin{defi}{Outcomes: Satisfaction, Coverage}

$$\phi \in \mathit{Outcomes}(\mathcal{S}, \mathcal{A})\quad Q \subseteq \mathcal{S} \times \mathcal{A}$$
$$\phi\;\{Q\}\quad \textrm{(``outcome $\phi$ satisfies postcondition $Q$'')}$$

$$\begin{array}{r c l}
\langle \sigma, a \rangle\; \{Q\} & \Leftrightarrow & (\sigma, a) \in Q\\
\bigotimes \Phi\; \{Q\} & \Leftrightarrow & \forall \phi \in \Phi.\; \phi\; \{Q\}\\
\bigoplus \Phi\; \{Q\} & \Leftrightarrow & \exists \phi \in \Phi.\; \phi\; \{Q\}\\
\\
\phi \Rrightarrow \phi' & \Leftrightarrow & \forall Q.\; \phi\; \{Q\} \Rightarrow \phi'\; \{Q\}\\
C \Rrightarrow C' & \Leftrightarrow & \forall \sigma.\; C(\sigma) \Rrightarrow C'(\sigma)
\end{array}$$

\end{defi}

\subsubsection{Outcomes: Sequential Composition}

An important concept is the sequential composition $\phi; C$ of 
an outcome $\phi \in \mathit{Outcomes}(\mathcal{S})$ and a mutator $C \in \mathcal{S} \rightarrow 
\mathit{Outcomes}(\mathcal{S'})$. \label{mutator-seqcomp} The 
intuition is straightforward: the output states of $\phi$ are 
passed as input states to $C$. The result is again an outcome. 
It is defined as follows: 
\begin{itemize}
\item if $\phi$ is a singleton outcome $\langle \sigma\rangle$, 
    the sequential composition is the outcome of passing 
    $\sigma$ as input to $C$
\item if $\phi$ is a demonic or angelic choice, the 
    sequential composition is the distribution of the 
    sequential composition over the alternatives.
\end{itemize} 

We also define sequential composition $C; C'$ of two mutators 
$C$ and $C'$: it is simply the mutator that, for a given input 
state $\sigma$, passes $\sigma$ to $C$ and composes the outcome 
sequentially with $C'$. 

\begin{defi}{Outcomes: Sequential composition}
$$- ; -\ :\ \mathit{Outcomes}(\mathcal{S}) \rightarrow (\mathcal{S} \rightarrow \mathit{Outcomes}(\mathcal{S}')) \rightarrow \mathit{Outcomes}(\mathcal{S}')$$
$$\begin{array}{r c l}
\langle \sigma\rangle; C & = & C(\sigma)\\
(\bigotimes \Phi); C & = & \bigotimes \{\phi \in \Phi.\; (\phi; C)\}\\
(\bigoplus \Phi); C & = & \bigoplus \{\phi \in \Phi.\; (\phi; C)\}
\end{array}$$
$$C; C' = \lambda \sigma.\; C(\sigma); C'$$
\end{defi}

We have the following important properties of sequential 
composition:
\begin{itemize}
\item Associativity: given three mutators $C$, $C'$, and 
    $C''$, first composing $C$ and $C'$ and then composing the 
    resulting mutator with $C''$ is equivalent to first 
    composing $C'$ and $C''$ and then composing $C$ with the 
    resulting mutator.
\item Monotonicity: if mutator $C_1$ is worse than mutator 
    $C_1'$, and mutator $C_2$ is worse than mutator $C_2'$, 
    then $C_1; C_2$ is worse than $C_1'; C_2'$.
\item Satisfaction: the sequential composition $\phi; C$ 
    satisfies the postcondition $Q$ if and only if $\phi$ 
    satisfies the postcondition that accepts the state $\sigma$ 
    if $C(\sigma)$ satisfies $Q$.
\end{itemize}

\begin{lem}[Associativity of Sequential Composition of Mutators]
$$(C; C'); C'' = C; (C'; C'')$$
\end{lem}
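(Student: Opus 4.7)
The plan is first to unfold the definition $C;C' = \lambda\sigma.\;C(\sigma);C'$ on both sides of the desired equality, reducing the claim to the pointwise statement: for every input state $\sigma$,
$$\bigl(C(\sigma);C'\bigr);C'' \;=\; C(\sigma);(C';C'').$$
Since $C(\sigma)$ is an arbitrary outcome, it suffices to establish a lemma about outcomes: for every outcome $\phi \in \mathit{Outcomes}(\mathcal{S})$ and mutators $C' \in \mathcal{S} \to \mathit{Outcomes}(\mathcal{S}')$ and $C'' \in \mathcal{S}' \to \mathit{Outcomes}(\mathcal{S}'')$,
$$(\phi;C');C'' \;=\; \phi;(C';C'').$$

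I would prove this lemma by structural induction on $\phi$, using the three clauses of the definition of outcomes. In the singleton case $\phi = \langle\sigma\rangle$, unfolding the sequential-composition clause for singletons twice on the left gives $(\langle\sigma\rangle;C');C'' = C'(\sigma);C''$, while unfolding once on the right and then using the definition of composition of mutators gives $\langle\sigma\rangle;(C';C'') = (C';C'')(\sigma) = C'(\sigma);C''$, so the two sides are literally equal. In the demonic case $\phi = \bigotimes\Phi$, unfolding the demonic clause on the left twice yields $\bigotimes\{\psi \in \Phi.\;(\psi;C');C''\}$, and unfolding once on the right yields $\bigotimes\{\psi \in \Phi.\;\psi;(C';C'')\}$; the induction hypothesis applied to each $\psi \in \Phi$ makes these two sets equal, hence the two infinitary demonic choices coincide. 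The angelic case $\phi = \bigoplus\Phi$ is handled identically, replacing $\bigotimes$ by $\bigoplus$.

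The main obstacle is that demonic and angelic choice may range over arbitrary, possibly infinite, sets $\Phi$ of subterms, so ordinary finite structural induction is not directly applicable. I would address this by appealing to the well-founded induction principle associated with the inductive definition of $\mathit{Outcomes}(\mathcal{S})$ given in the excerpt: the two recursive constructors have children indexed by the elements of $\Phi$, and the induction principle allows us to assume the property for every $\psi \in \Phi$ when proving it for $\bigotimes\Phi$ or $\bigoplus\Phi$, regardless of the cardinality of $\Phi$. Once this is accepted, the inductive step is just the congruence property that $\bigotimes$ and $\bigoplus$ produce equal outcomes from pointwise equal sets of alternatives, which is immediate from the definition.

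Combining the pointwise equality of outcomes with the pointwise definition of mutator composition, we recover $(C;C');C'' = C;(C';C'')$, completing the proof.
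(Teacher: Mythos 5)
Your proof is correct, and since the paper states this lemma without giving a proof, your argument — reducing to the pointwise statement $(\phi;C');C'' = \phi;(C';C'')$ and proving it by (infinitary) structural induction on the outcome $\phi$, with the three constructor cases handled exactly as the definition of $\phi;C$ dictates — is precisely the intended one, matching how the lemma is established in the Coq mechanisation. Your explicit remark that the induction principle for the inductive type $\mathit{Outcomes}(\mathcal{S})$ licenses assuming the claim for every $\psi \in \Phi$ even when $\Phi$ is infinite is exactly the right point to flag.
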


\begin{lem}[Monotonicity of Sequential Composition of Mutators]
If $C_1 \Rrightarrow C_1'$ and $C_2 \Rrightarrow C_2'$ then $C_1; C_2\ \Rrightarrow C_1'; C_2'$.
\end{lem}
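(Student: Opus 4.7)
The plan is to unfold everything down to the level of outcomes and then chain the two hypotheses through the ``Satisfaction'' property stated just above the lemma. Fix an input state $\sigma$ and a postcondition $Q \subseteq \mathcal{S} \times \mathcal{A}$; by the definition of coverage on mutators it suffices to show that $(C_1; C_2)(\sigma)\,\{Q\}$ implies $(C_1'; C_2')(\sigma)\,\{Q\}$. Unfolding the definition of mutator composition, this reads $C_1(\sigma); C_2\,\{Q\}$ implies $C_1'(\sigma); C_2'\,\{Q\}$.

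Next I would apply the Satisfaction property for sequential composition twice: $C_1(\sigma); C_2$ satisfies $Q$ iff $C_1(\sigma)$ satisfies the derived postcondition $Q_2 = \{\sigma' \mid C_2(\sigma')\,\{Q\}\}$, and similarly $C_1'(\sigma); C_2'$ satisfies $Q$ iff $C_1'(\sigma)$ satisfies $Q_2' = \{\sigma' \mid C_2'(\sigma')\,\{Q\}\}$. From the hypothesis $C_2 \Rrightarrow C_2'$ I obtain $C_2(\sigma') \Rrightarrow C_2'(\sigma')$ for every $\sigma'$, which by definition of outcome coverage instantiated at the postcondition $Q$ yields $Q_2 \subseteq Q_2'$.

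At this point I need a small auxiliary fact: satisfaction of an outcome is monotone in the postcondition, i.e.\ if $\phi\,\{R\}$ and $R \subseteq R'$ then $\phi\,\{R'\}$. This is a one-line induction on the structure of $\phi$: the singleton case is set inclusion, and the demonic and angelic cases propagate the inductive hypothesis through the universal/existential quantifier in the definition of $\{-\}$. Applying this to $C_1(\sigma)\,\{Q_2\}$ with $Q_2 \subseteq Q_2'$ gives $C_1(\sigma)\,\{Q_2'\}$. Finally the hypothesis $C_1 \Rrightarrow C_1'$ gives $C_1(\sigma) \Rrightarrow C_1'(\sigma)$, so $C_1'(\sigma)\,\{Q_2'\}$ follows, which via the Satisfaction property is exactly $C_1'(\sigma); C_2'\,\{Q\}$.

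The only non-boilerplate step is the monotonicity-in-$Q$ auxiliary, which is both mild and arguably implicit in the ``Satisfaction'' bullet; apart from that the proof is just a careful chase of definitions. I expect the main presentational obstacle to be making explicit that the two hypotheses $C_1 \Rrightarrow C_1'$ and $C_2 \Rrightarrow C_2'$ are used at different ``levels'' (the second to enlarge the intermediate postcondition, the first to move from $C_1$ to $C_1'$ on the enlarged postcondition) rather than in a single symmetric step, since conflating the two would obscure why the auxiliary monotonicity lemma is needed at all.
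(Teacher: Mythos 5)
Your proof is correct: the paper states this lemma without proof, and your argument — reducing to outcomes via the definition of mutator coverage, applying the Satisfaction lemma on both sides, using $C_2 \Rrightarrow C_2'$ to get the inclusion $Q_2 \subseteq Q_2'$ of derived postconditions, and then invoking monotonicity of satisfaction in the postcondition before applying $C_1 \Rrightarrow C_1'$ — is exactly the intended route given the machinery the paper sets up. The auxiliary fact that $\phi\,\{R\}$ and $R \subseteq R'$ imply $\phi\,\{R'\}$ is indeed needed and, as you say, follows by a one-line structural induction on $\phi$ (and is the ``monotonicity of outcomes'' the paper alludes to in Section~\ref{sec:related-work}).
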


\begin{lem}[Satisfaction of Sequential Composition of Mutators]
$$\phi; C\; \{Q\} \Leftrightarrow \phi\; \{\sigma\;|\;C(\sigma)\;\{Q\}\}$$
\end{lem}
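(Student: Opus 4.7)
The plan is to prove the equivalence by structural induction on the outcome $\phi$, following exactly the three clauses in the definition of $\mathit{Outcomes}$. Write $Q' = \{\sigma \mid C(\sigma)\;\{Q\}\}$ for the ``pulled back'' postcondition on input states; the goal becomes $\phi;C\;\{Q\} \Leftrightarrow \phi\;\{Q'\}$.

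For the singleton case $\phi = \langle \sigma\rangle$, I would simply unfold both sides: the left-hand side reduces via $\langle\sigma\rangle;C = C(\sigma)$ to $C(\sigma)\;\{Q\}$, and the right-hand side reduces via the singleton clause of satisfaction to $\sigma \in Q'$, which by definition of $Q'$ is exactly $C(\sigma)\;\{Q\}$. For the demonic case $\phi = \bigotimes \Phi$, sequential composition distributes to give $\bigotimes\{\phi' \in \Phi.\;(\phi';C)\}$; the satisfaction clause for demonic choice then turns the left-hand side into $\forall \phi' \in \Phi.\;(\phi';C)\;\{Q\}$. Applying the induction hypothesis to each $\phi' \in \Phi$ rewrites this as $\forall \phi' \in \Phi.\;\phi'\;\{Q'\}$, which is precisely $\bigotimes \Phi\;\{Q'\}$. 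The angelic case $\phi = \bigoplus \Phi$ is identical up to replacing $\forall$ with $\exists$ and $\bigotimes$ with $\bigoplus$.

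The main subtlety, and the step I would be most careful about, is justifying the induction itself: since $\Phi$ in a demonic or angelic choice can be an arbitrary (possibly infinite) set of outcomes, the ``structural induction'' is really induction over well-founded trees with potentially infinite branching. In the on-paper proof this is harmless — the induction hypothesis is applied uniformly to every $\phi' \in \Phi$ inside the quantifier — but in the Coq development (Mechanised Featherweight VeriFast) one would need to make sure the inductive type of outcomes comes equipped with the appropriate induction principle (the automatically generated one is typically too weak when a constructor takes a set-valued argument, so a custom strengthened induction principle is usually derived). Beyond this point the proof is a routine unfolding of definitions, and no auxiliary lemma is needed.
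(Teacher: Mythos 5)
Your proof is correct and is exactly the intended argument: the paper states this lemma without proof, and the evident justification is precisely your induction on the structure of the outcome, unfolding the three clauses of sequential composition against the three clauses of satisfaction (with the induction hypothesis applied pointwise under the quantifier in the choice cases). Your remark about needing a strengthened induction principle for the set-branching outcome type in Coq is accurate but does not affect the on-paper argument.
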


\subsubsection{Outcomes: Sequential Composition (with Answers)}

We can generalize these concepts to the case of outcomes with answers.
If $\phi$ is an outcome and $C({-})$ is a function from answers to mutators, i.e.~a mutator parameterized by an answer, then
we write the sequential composition of $\phi$ and $C({-})$ as $x \leftarrow \phi; C(x)$; that is,
the answer of $\phi$, bound to the variable $x$, is passed as an input argument to $C({-})$.
The definition and the properties are a straightforward adaptation of the ones given above for outcomes without answers.

\begin{defi}{Outcomes: Sequential composition (with Answers)}

$$x \leftarrow {-} ; {-}(x)\ :\ 
\mathit{O}(\mathcal{S}, \mathcal{A}) \rightarrow
(\mathcal{A} \rightarrow \mathcal{S} \rightarrow \mathit{O}(\mathcal{S}', \mathcal{B})) \rightarrow \mathit{O}(\mathcal{S}', \mathcal{B})$$
$$\begin{array}{r c l}
x \leftarrow \langle \sigma, a\rangle; C(x) & = & C(a)(\sigma)\\
x \leftarrow (\bigotimes \Phi); C(x) & = & \bigotimes \{\phi \in \Phi.\; (x \leftarrow \phi; C(x))\}\\
x \leftarrow (\bigoplus \Phi); C(x) & = & \bigoplus \{\phi \in \Phi.\; (x \leftarrow \phi; C(x))\}
\end{array}$$

$$x \leftarrow C; C'(x) = \lambda \sigma.\; x \leftarrow C(\sigma); C'(x)$$

\end{defi}

\begin{lem}[Associativity of Sequential Composition of Mutators with Answers]
$$y \leftarrow (x \leftarrow C; C'(x)); C''(y) = x \leftarrow C; (y \leftarrow C'(x); C''(y))$$
\end{lem}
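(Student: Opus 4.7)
The plan is to reduce the statement to an analogous associativity law at the level of outcomes and then prove that by induction on the outcome. Unfolding the definition $x \leftarrow C; C'(x) = \lambda \sigma.\; x \leftarrow C(\sigma); C'(x)$ on both sides and applying functional extensionality, it suffices to show, for every input state $\sigma$ and with $\phi := C(\sigma)$, that
$$y \leftarrow (x \leftarrow \phi; C'(x)); C''(y) \;=\; x \leftarrow \phi; (y \leftarrow C'(x); C''(y)).$$
I would then prove this outcome-level identity by induction on the structure of $\phi \in \mathit{Outcomes}(\mathcal{S}, \mathcal{A})$.

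In the singleton case $\phi = \langle \sigma', a\rangle$, both sides reduce by direct computation using the defining equation $x \leftarrow \langle \sigma', a\rangle; C'(x) = C'(a)(\sigma')$, followed by unfolding the mutator-level sequential composition on the right-hand side, yielding $y \leftarrow C'(a)(\sigma'); C''(y)$ on both sides. In the demonic case $\phi = \bigotimes \Phi$, applying the demonic clause of the definition twice on the left-hand side gives $\bigotimes \{\psi \in \Phi.\; y \leftarrow (x \leftarrow \psi; C'(x)); C''(y)\}$, while the right-hand side reduces to $\bigotimes \{\psi \in \Phi.\; x \leftarrow \psi; (y \leftarrow C'(x); C''(y))\}$. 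By the induction hypothesis applied to each $\psi \in \Phi$, the two indexed sets coincide, so the two demonic choices are equal. The angelic case $\phi = \bigoplus \Phi$ is symmetric.

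The main obstacle is that $\mathit{Outcomes}(\mathcal{S}, \mathcal{A})$ is an inductive type with infinitary branching: $\bigotimes \Phi$ and $\bigoplus \Phi$ allow arbitrary (potentially uncountable) sets $\Phi$ of sub-outcomes. One therefore has to be careful to use the genuine structural induction principle for this inductive type (whose IH ranges over all members of $\Phi$), rather than a naive induction on some numerical size. In the Coq mechanisation, this amounts to using the induction scheme automatically generated from the inductive definition of outcomes with a nested hypothesis quantified over $\psi \in \Phi$; on paper, it is the straightforward transfinite/structural induction over the tree-shaped outcome. Once this induction principle is in hand, each case is a one-line computation from the defining clauses of $x \leftarrow {-}; {-}(x)$, with no appeal to satisfaction, coverage, or any prior lemma.
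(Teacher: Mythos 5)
Your proof is correct: reducing to the outcome-level identity via functional extensionality and then doing structural induction on the outcome (with the induction hypothesis ranging over all members of the choice set, as you note) is exactly the ``straightforward adaptation'' the paper has in mind, since it states this lemma without proof and treats it as an immediate consequence of the defining clauses. Each of your three cases computes correctly from the definitions, so nothing further is needed.
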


\begin{lem}[Monotonicity of Sequential Composition of Mutators with Answers]
If $C_1 \Rrightarrow C_1'$ and $\forall a.\;C_2(a) \Rrightarrow C_2'(a)$ then $x \leftarrow C_1; C_2(x) \Rrightarrow x \leftarrow C_1'; C_2'(x)$.
\end{lem}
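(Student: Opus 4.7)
My plan is to reduce this to the preceding monotonicity principle combined with the satisfaction characterization of sequential composition, both in the answer-parameterized form. Fix an input state $\sigma$; by the definition $x \leftarrow C; C'(x) = \lambda\sigma.\, x \leftarrow C(\sigma); C'(x)$, it suffices to show that for every postcondition $Q \subseteq \mathcal{S}' \times \mathcal{B}$, whenever $(x \leftarrow C_1(\sigma); C_2(x))\,\{Q\}$ holds, so does $(x \leftarrow C_1'(\sigma); C_2'(x))\,\{Q\}$.

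The first key step is to invoke the answer-generalized satisfaction lemma for sequential composition (the straightforward adaptation of the earlier Satisfaction of Sequential Composition of Mutators lemma), which gives $(x \leftarrow \phi; C(x))\,\{Q\} \Leftrightarrow \phi\,\{Q_C\}$, where $Q_C = \{(\sigma',a) \mid C(a)(\sigma')\,\{Q\}\}$. Applied to both sides, the goal becomes: $C_1(\sigma)\,\{Q_{C_2}\} \Rightarrow C_1'(\sigma)\,\{Q_{C_2'}\}$, with $Q_{C_2}$ and $Q_{C_2'}$ defined as above from $C_2$ and $C_2'$.

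The second key step is to chain two facts. From the hypothesis $\forall a.\,C_2(a) \Rrightarrow C_2'(a)$ and the definition of coverage, every pair in $Q_{C_2}$ also lies in $Q_{C_2'}$, so $Q_{C_2} \subseteq Q_{C_2'}$. Combined with an easy auxiliary observation that outcome satisfaction is monotone in the postcondition (a three-case induction on the shape of the outcome, handling $\langle\sigma,a\rangle$, $\bigotimes\Phi$, and $\bigoplus\Phi$), this yields $C_1(\sigma)\,\{Q_{C_2}\} \Rightarrow C_1(\sigma)\,\{Q_{C_2'}\}$. Finally, the hypothesis $C_1 \Rrightarrow C_1'$ specialized at $\sigma$ gives $C_1(\sigma)\,\{Q_{C_2'}\} \Rightarrow C_1'(\sigma)\,\{Q_{C_2'}\}$, completing the chain.

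The main obstacle is simply ensuring the answer-parameterized satisfaction lemma is available; once it is, the argument is a short two-link implication chain that factors the proof cleanly into the two hypotheses. One could alternatively avoid the satisfaction lemma by a direct structural induction on $C_1(\sigma)$ (singleton case uses $C_2(a) \Rrightarrow C_2'(a)$; choice cases distribute over $\Phi$), but that duplicates reasoning already packaged in the earlier lemma and is therefore less preferable.
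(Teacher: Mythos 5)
Your proof is correct. The paper does not actually supply a proof of this lemma (it is stated as a ``straightforward adaptation'' of the answer-free version, which is itself left unproved), so there is nothing to diverge from; your route --- rewriting both sides via the answer-parameterized satisfaction lemma, observing $Q_{C_2} \subseteq Q_{C_2'}$ from the second hypothesis, and chaining through monotonicity of satisfaction in the postcondition (which the authors themselves note is immediate from the inductive definition of outcomes) and the first hypothesis --- is a clean and complete way to discharge it.
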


\begin{lem}[Satisfaction of Sequential Composition of Mutators with Answers]
$$x \leftarrow \phi; C(x)\; \{Q\} \Leftrightarrow \phi\; \{(\sigma, a)\;|\;C(a)(\sigma)\;\{Q\}\}$$
\end{lem}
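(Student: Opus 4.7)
The plan is to prove the equivalence by structural induction on the outcome $\phi$, following exactly the three-case inductive definition of $\mathit{Outcomes}(\mathcal{S}, \mathcal{A})$: singleton, (infinitary) demonic choice, and (infinitary) angelic choice. The argument is a direct generalization of the answer-free version of the lemma, and the inductive structure mirrors that of the definition of sequential composition given just above the statement, so each case reduces to unfolding a definition and applying the induction hypothesis.

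For the singleton case $\phi = \langle \sigma, a\rangle$, I would just unfold: the left-hand side becomes $C(a)(\sigma)\,\{Q\}$ by the first defining clause of $x \leftarrow \phi; C(x)$, while the right-hand side, $\langle\sigma,a\rangle\,\{\{(\sigma',a')\mid C(a')(\sigma')\,\{Q\}\}\}$, unfolds via the singleton clause of satisfaction to $(\sigma,a) \in \{(\sigma',a')\mid C(a')(\sigma')\,\{Q\}\}$, which is the same proposition. For the demonic case $\phi = \bigotimes \Phi$, I unfold the left-hand side using the second clause of sequential composition and the demonic clause of satisfaction to get $\forall \phi' \in \Phi.\; (x \leftarrow \phi'; C(x))\,\{Q\}$. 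By the induction hypothesis applied to each $\phi' \in \Phi$, this is equivalent to $\forall \phi' \in \Phi.\; \phi'\,\{\{(\sigma,a)\mid C(a)(\sigma)\,\{Q\}\}\}$, which is exactly the right-hand side after one more unfolding of satisfaction for demonic choice. The angelic case is completely symmetric: swap $\forall$ for $\exists$, $\bigotimes$ for $\bigoplus$, and the same chain of rewrites works.

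The only delicate point is the shape of the induction. Since demonic and angelic choices are \emph{infinitary} (over an arbitrary set $\Phi \subseteq \mathit{Outcomes}(\mathcal{S},\mathcal{A})$), I need structural induction on the inductive construction of outcomes with an inductive hypothesis that ranges over all members of $\Phi$ simultaneously; this is the standard induction principle for an inductive type with an ``infinite-branching'' constructor, and it is what the Coq mechanisation would use as well. Once this principle is stated, the proof is mechanical: each case is a one-step unfolding plus either a direct equality (singleton) or a universal/existential swap licensed by the IH (choices).

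I do not expect any real obstacle; the main thing to be careful about is to keep the binder $x$ and the answer $a$ of the singleton straight, since the set-comprehension $\{(\sigma,a)\mid C(a)(\sigma)\,\{Q\}\}$ on the right-hand side uses $a$ as a bound variable that is then instantiated by the answer component of each singleton reached inside $\phi$. This is precisely what makes the induction go through: the IH, applied at each $\phi' \in \Phi$, uses the \emph{same} postcondition $\{(\sigma,a)\mid C(a)(\sigma)\,\{Q\}\}$, so we never have to strengthen or weaken the statement during induction. Hence no generalisation of the induction hypothesis is needed.
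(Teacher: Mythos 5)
Your proof is correct and is exactly the ``straightforward adaptation'' the paper has in mind: the lemma is stated without an explicit proof, and the intended argument is precisely your structural induction on $\phi$ (with the infinitary induction principle over the sets $\Phi$), unfolding the clauses of sequential composition and of satisfaction in each of the three cases. Your observation that the induction hypothesis can be used with the fixed postcondition $\{(\sigma,a)\mid C(a)(\sigma)\,\{Q\}\}$, so no strengthening is needed, is the right and only delicate point.
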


\subsubsection{Outcomes: Notations}

We introduce some additional notations and concepts that will 
be useful in the definition of the executions.

We lift demonic and angelic choice to mutators: if $\tilde{C}$ is a set of mutators,
then $\bigotimes \tilde{C}$ is the demonic choice over these mutators. It is the
mutator that, for a given input state $\sigma$, demonically chooses between the outcomes obtained by
passing $\sigma$ to the elements of $\tilde{C}$. Angelic choice over mutators is defined analogously.

We use the ``variable binding'' notation $\bigotimes i \in I.\;\phi_i$ to denote the demonic
choice over the outcomes obtained by letting $i$ range over $I$ in $\phi_i$. We also use this
notation for angelic choice and for choices over mutators.

As an extension of the variable binding notation, we also allow boolean 
propositions to the left of the dot in demonic and angelic 
choices. If the proposition is true, this has no effect; 
otherwise, in the case of angelic choice, this means failure, 
and in the case of demonic choice, this means nontermination. 

We define the primitive mutator $\mathsf{yield}\ a$ as the mutator that does not modify the state
and answers $a$. We define $\mathsf{noop}$ as the mutator that does nothing; it merely answers the unit
element $\mathsf{tt}$.

We define \emph{side-effect-only} sequential composition $C{;}, C'$ of two mutators $C$ and $C'$ as the mutator
that first executes $C$, and then executes $C'$, and whose answer is the answer of $C$. The answer of $C'$ is ignored.

\begin{notation}{Outcomes: Notations}

$$\begin{array}{r @{\ } l r}
\bigotimes \tilde{C} = & \lambda \sigma.\;\bigotimes \{C(\sigma)\ |\ C \in \tilde{C}\}\\
\bigoplus \tilde{C} = & \lambda \sigma.\;\bigoplus \{C(\sigma)\ |\ C \in \tilde{C}\}\\
\end{array}$$

$$\begin{array}{r @{\ } l}
\bigotimes i \in I.\;\phi_i = & \bigotimes \{\phi_i\;|\;i \in I\}\\
\bigoplus i \in I.\;\phi_i = & \bigoplus \{\phi_i\;|\;i \in I\}\\
\end{array}$$

$$\begin{array}{c c}
\begin{array}{r @{\ } l}
\bigotimes \mathsf{true}.\;\phi = & \phi\\
\bigoplus \mathsf{true}.\;\phi = & \phi\\
\end{array}
&
\begin{array}{r @{\ } l}
\bigotimes \mathsf{false}.\;\phi = & \top\\
\bigoplus \mathsf{false}.\;\phi = & \bot\\
\end{array}
\end{array}$$

$$\begin{array}{r @{\ } l}
\mathsf{yield}\ a = & \lambda \sigma.\;\langle \sigma, a\rangle\\
\mathsf{noop} = & \mathsf{yield}\;\mathsf{tt}
\end{array}$$

$$C{;}, C' = x \leftarrow C; C'; \mathsf{yield}\;x$$

\end{notation}

\subsection{Some Auxiliary Definitions}

We introduce some further auxiliary notions that will be useful in the definition of concrete execution of commands.

The domain of a heap $h$ is the set of \emph{domain elements} of the form 
$p(\ell)$ where a value $v$ exists such that $p(\ell, v)$ 
occurs in $h$.

The mutator $\mathsf{assume}(b)$, \label{def:assume} where $b$ 
is a boolean expression, evaluates $b$ in the given input 
store; if $b$ evaluates to $\mathsf{true}$, the mutator does 
nothing; otherwise, it does not terminate.
We define evaluation $\llbracket b\rrbracket_s$
of a boolean expression $b$ or arithmetic expression $e$ under a store $s$ as follows:
$\llbracket e = e'\rrbracket_s = (\llbracket e\rrbracket_s = \llbracket e'\rrbracket_s)$,
$\llbracket e < e'\rrbracket_s = (\llbracket e\rrbracket_s < \llbracket e'\rrbracket_s)$,
$\llbracket \lnot b\rrbracket_s = \lnot \llbracket b\rrbracket_s$,
$\llbracket z\rrbracket_s = z$,
$\llbracket x\rrbracket_s = s(x)$,
$\llbracket e + e'\rrbracket_s = \llbracket e\rrbracket_s + \llbracket e'\rrbracket_s$,
and
$\llbracket e - e'\rrbracket_s = \llbracket e\rrbracket_s - \llbracket e'\rrbracket_s$.

The mutator $\mathsf{store}$ simply returns the current store.
The mutator $\mathsf{store} := s$ sets the current store to $s$.
The mutator $\mathsf{with}(s, C)$ executes the mutator $C$ under store $s$ and
then restores the original store. Its answer is the answer of $C$.
The mutator $\mathsf{eval}(e)$ answers the value of $e$ under the current store.
The mutator $x := v$ updates the store, assigning value $v$ to variable $x$.

\begin{defi}{Some Auxiliary Definitions}

$$\begin{array}{r @{\ } l}
\mathsf{dom}(h) = & \{p(\ell)\ |\ \exists v.\;p(\ell, v) \in h\}\\
\mathsf{assume}(b) = & \lambda (s, h). \bigotimes \llbracket b\rrbracket_s = \mathsf{true}.\;\langle(s, h)\rangle\\
\mathsf{store} = & \lambda (s, h).\;\langle (s, h), s\rangle\\
\mathsf{store} := s' = & \lambda (s, h).\;\langle (s', h)\rangle\\
\mathsf{with}(s', C) = & s \leftarrow \mathsf{store}; \mathsf{store} := s'; \mathsf{C}{;}, \mathsf{store} := s\\
\mathsf{eval}(e) = & \lambda (s, h).\;\langle (s, h), \llbracket e\rrbracket_s\rangle\\
x := v = & \lambda (s, h).\;\langle (s[x:=v], h)\rangle
\end{array}$$

\end{defi}

We denote mutator $C$ iterated $n$ times as $C^n$. $C$ iterated zero times does nothing; $C$ iterated $n + 1$ times is the sequential composition of $C$ and
$C$ iterated $n$ times. The demonic iteration $C^*$ of $C$ is $C$ iterated a demonically chosen number of times.

Concrete consumption of a multiset $h$ of chunks fails if the heap does not contain these chunks; otherwise, it removes them.
Concrete production of a multiset $h$ of chunks blocks if the heap already contains chunks with the same address, i.e.~if the addresses of the chunks in $h$ are not
all pairwise distinct from the addresses of the chunks that are already in the heap. Otherwise, it adds the chunks to the heap.
Concrete consumption and production of a single chunk $\alpha$ are defined in the obvious way.

\begin{defi}{Some Auxiliary Definitions}

$$\begin{array}{r @{\ } l}
C^0 = & \mathsf{noop}\\
C^{n + 1} = & C; C^n\\
C^* = & \bigotimes n \in \mathbb{N}.\;C^n\\
\mathsf{cconsume\_chunks}(h') = & \lambda (s, h).\;\bigoplus h' \le h.\;\langle (s, h - h')\rangle\\
\mathsf{cconsume\_chunk}(\alpha) = & \mathsf{cconsume\_chunks}(\llbrace\alpha\rrbrace)\\
\mathsf{cproduce\_chunks}(h') = & \lambda (s, h).\;\bigotimes \mathsf{dom}(h) \cap \mathsf{dom}(h') = \emptyset.\;\langle (s, h \uplus h')\rangle\\
\mathsf{cproduce\_chunk}(\alpha) = & \mathsf{cproduce\_chunks}(\llbrace\alpha\rrbrace)
\end{array}$$

\end{defi}

\subsection{Concrete Execution of Commands}

\begin{figure}
$$\begin{array}{l}
\mathsf{exec}_0(c) = \top\\
\\
\mathsf{exec}_{n+1}(x:=e) = v \leftarrow \mathsf{eval}(e); x := v\\
\\
\mathsf{exec}_{n+1}(c; c') = \mathsf{exec}_n(c); \mathsf{exec}_n(c')\\
\\
\mathsf{exec}_{n+1}(\mathbf{if}\ b\ \mathbf{then}\ c\ \mathbf{else}\ c') =\\
\quad \mathsf{assume}(b); \mathsf{exec}_n(c) \otimes \mathsf{assume}(\lnot b); \mathsf{exec}_n(c')\\
\\
\mathsf{exec}_{n+1}(\mathbf{while}\ b\ \mathbf{do}\ c) =\\
\quad (\mathsf{assume}(b); \mathsf{exec}_n(c))^*; \mathsf{assume}(\lnot b)\\
\\
\mathsf{exec}_{n+1}(r(\overline{e})) = \overline{v} \leftarrow \mathsf{eval}(\overline{e}); \mathsf{with}(\mathbf{0}[\overline{x}:=\overline{v}], \mathsf{exec}_n(c))\\
\quad\textrm{where $\mathbf{routine}\ r(\overline{x}) = c$}\\
\\
\mathsf{exec}_{n+1}(x := \mathbf{malloc}(n)) =\\
\quad\bigotimes \ell, v_1,\dots,v_n \in \mathbb{Z}.\\
\quad\quad \mathsf{cproduce\_chunks}(\llbrace\mathsf{mb}(\ell, n), \ell \mapsto v_1,\dots,\ell + n - 1\mapsto v_n\rrbrace); x := \ell\\
\\
\mathsf{exec}_{n+1}(x := [e]) = \ell \leftarrow \mathsf{eval}(e);\\
\quad \bigoplus v.\;\mathsf{cconsume\_chunk}(\ell \mapsto v); \mathsf{cproduce\_chunk}(\ell \mapsto v); x := v\\
\\
\mathsf{exec}_{n+1}([e] := e') = \ell \leftarrow \mathsf{eval}(e); v \leftarrow \mathsf{eval}(e');\\
\quad \bigoplus v_0.\;\mathsf{cconsume\_chunk}(\ell \mapsto v_0); \mathsf{cproduce\_chunk}(\ell \mapsto v)\\
\\
\mathsf{exec}_{n+1}(\mathbf{free}(e)) = \ell \leftarrow \mathsf{eval}(e);\\
\quad \bigoplus N \in \mathbb{N}, v_1, \dots, v_N \in \mathbb{Z}.\\
\quad\quad \mathsf{cconsume\_chunks}(\llbrace\mathsf{mb}(\ell, N), \ell \mapsto v_1, \dots, \ell + N - 1 \mapsto v_N\rrbrace)\\
\\
\mathsf{exec}(c) = \bigotimes n \in \mathbb{N}.\ \mathsf{exec}_n(c)
\end{array}$$

\caption{Concrete Execution of Commands}\label{fig:cexec}
\end{figure}

\begin{defi}{Concrete Execution of Commands}
See Figure~\ref{fig:cexec}.
\end{defi}

To define the concrete execution function $\mathsf{exec}$, we 
first define a helper function $\mathsf{exec}_n$, which is 
indexed by the maximum depth of the execution. If an execution 
exceeds the maximum depth, $\mathsf{exec}_n$ returns $\top$, 
i.e.~the execution does not terminate.

Therefore, for any command $c$, $\mathsf{exec}_0(c)$ returns 
the mutator $\top$ (which is the mutator that for any input 
state returns the outcome $\top$).

Execution of an assignment $x := e$ evaluates $e$ and binds variable $x$ to its value.

Execution of a sequential composition $c; c'$ is the sequential 
composition of the execution of $c$ and the execution of $c'$. 
(Notice that the two semicolons in this rule have different 
meanings: the former is part of the syntax of commands defined 
on Page~\pageref{cmd-syntax}; the latter is the function 
defined on Page~\pageref{mutator-seqcomp} that takes two 
mutators and returns a mutator.) 

Execution of an if-then-else command $\mathbf{if}\ b\ 
\mathbf{then}\ c\ \mathbf{else}\ c'$ demonically chooses 
between two branches: in the first branch, it is assumed that 
the condition $b$ evaluates to true, and then command $c$ is 
executed; in the second branch, it is assumed that $b$ 
evaluates to false, and then $c'$ is executed. Notice that this 
is equivalent to evaluating the condition and then, depending 
on whether it evaluates to true or false, executing $c$ or 
$c'$, respectively.

Execution of a loop $\mathbf{while}\ b\ \mathbf{do}\ c$ first executes the body some demonically chosen number of times, after assuming
that the loop condition holds, and then assumes that the condition does not hold.

Execution of a call $r(\overline{e})$ of routine $r$ with 
argument list $\overline{e}$ first evaluates $\overline{e}$ to obtain values $\overline{v}$
and then executes the body $c$ of $r$ in a 
store which binds the parameters $\overline{x}$ of $r$ to $\overline{v}$.

Execution of a memory block allocation command $x := 
\mathbf{malloc}(n)$  demonically picks an address $\ell$ and values 
$v_1,\dots,v_n$ and produces the malloc block chunk and the $n$
points-to chunks that constitute the newly allocated memory block.
Finally, the execution binds variable $x$ to the address $\ell$.

Execution of a memory read command $x := [e]$ angelically
picks a value $v$ and tries to consume a points-to chunk at the
address given by $e$ and with value $v$. If it succeeds, it puts the chunk back and binds $x$ to $v$.

Execution of a memory write command $[e] := e'$ angelically
picks an old value $v_0$ and tries to consume a points-to chunk at the
address given by $e$ and with value $v_0$. If it succeeds, it puts the chunk back with an updated value.

Execution of a memory block deallocation command 
$\mathbf{free}(e)$ first evaluates expression $e$ to an address $\ell$
and then tries to consume a malloc block chunk and a corresponding number of points-to chunks at address $\ell$.

Execution of a command demonically chooses a maximum depth and 
then executes the command up to that depth. Notice that this is 
equivalent to executing the command without a depth bound. 

\subsection{Safety of a Program}

We say that a program is \emph{safe} if no execution of the 
program accesses unallocated memory, i.e.~no execution fails, 
when started from the empty state $\sigma_0$. (Notice that the 
failure outcome is the only outcome that does not satisfy 
postcondition ``true''.)

The verification problem addressed by Featherweight VeriFast is 
to check whether a command $c$ is a safe program.

\begin{defi}{Safety of a Program}
$$\begin{array}{r c l}
\sigma_0 & = & (\mathbf{0}, \mathbf{0})\\
a \triangleright f & = & f(a)\\
\mathsf{safe\_program}(c) & = & \sigma_0 \triangleright \mathsf{exec}(c)\;\{\mathrm{true}\}
\end{array}$$
\end{defi}

\begin{defi}[The Verification Problem]
$$\mathsf{safe\_program}(c)$$
\end{defi}

\subsection{Solving the Verification Problem}

\begin{figure}

\begin{changemargin}{-1cm}{-1cm}
\begin{center}
\begin{tabular}{| r | c | c | c |}
\hline
& $\mathsf{exec}$ & $\mathsf{scexec}$ & $\mathsf{symexec}$\\
\hline
Recursion & Yes & No & No\\
Looping & Yes & No & No\\
Branching & Infinite & Infinite & Finite\\
\hline
Is Algorithm & No & No & Yes\\
\hline
\end{tabular} 
\end{center}

$$\mathsf{exec}\quad\xrightarrow{\begin{array}{c}
\textrm{Assertions}\\
\textrm{Predicates}\\
\textrm{Routine contracts}\\
\textrm{Loop invariants}
\end{array}}\quad
\mathsf{scexec}\quad\xrightarrow{\begin{array}{c}
\textrm{Symbols}\\
\textrm{Path Condition}\\
\textrm{Fresh Symbols}\\
\textrm{Theorem Prover}
\end{array}}\quad
\mathsf{symexec}$$
\end{changemargin}

\caption{Solving the Verification Problem}\label{fig:solving}
\end{figure}

See Figure~\ref{fig:solving}.

How to solve the verification problem? Naively computing the 
full traces of all executions of a program is impossible, since 
traces may be very large or even infinite (due to recursion and 
loops), and there may be infinitely many executions (due to the 
nondeterminism of memory allocation, causing execution to split 
into infinitely many branches, one for each choice of address). 
Therefore, concrete execution itself cannot serve as an 
algorithm for checking program safety. 

To obtain an algorithm, we define new kinds of executions that 
do not exhibit infinitely long traces and/or infinite 
branching. Specifically, in Section~\ref{sec:scexec} we define 
\emph{semiconcrete execution} ($\mathsf{scexec}$), where we use routine contracts 
and loop invariants, expressed as \emph{assertions} that use 
\emph{predicates} to denote data structures of potentially 
unbounded size, to limit the length of execution traces. 
Specifically, semiconcrete execution executes each routine 
separately, starting from an arbitrary initial state that 
satisfies the precondition, and checking that each final state 
satisfies the postcondition. Correspondingly, a routine call is 
executed using the callee's contract instead of its body. 
Similarly, a loop body is executed separately, starting from an 
arbitrary state that satisfies the loop invariant, and checking 
that each final state again satisfies the loop invariant. 
Execution of a loop first checks that the loop invariant holds 
on entry to the loop, and then updates the state to an 
arbitrary final state that satisfies the loop invariant. Since 
routine body and loop body executions are no longer inlined 
into the executions of their callers or loops, all executions 
have finite length.

However, semiconcrete execution still exhibits infinite 
branching; therefore, in Section~\ref{sec:symexec} we define 
the actual verification algorithm of Featherweight VeriFast, 
which we call \emph{symbolic execution} ($\mathsf{symexec}$). It builds on semiconcrete execution but eliminates infinite 
branching through the use of \emph{symbols} and a \emph{path 
condition}, such that a single symbol can be used to represent 
an infinite number of concrete values. Infinite branching is 
thus replaced by picking a \emph{fresh symbol}. A \emph{theorem 
prover} is used to decide equalities between terms and other conditions involving 
symbols under a given path condition.

Our solution to the verification problem is then to execute the 
program symbolically. Crucially, the executions are designed 
such that if symbolic execution of a program succeeds ($\mathsf{sym}\textsf{-}\mathsf{safe\_program}(c)$), then 
semiconcrete execution succeeds ($\mathsf{sc}\textsf{-}\mathsf{safe\_program}(c)$), and if semiconcrete execution 
succeeds, then concrete execution succeeds ($\mathsf{safe\_program}(c)$). These properties 
are called the soundness of symbolic execution and the 
soundness of semiconcrete execution, respectively. In the 
next two sections, we define these executions and sketch a 
proof of their soundness. 

\begin{defi}[Soundness]
$$\mathsf{safe\_program}(c) \Leftarrow \mathsf{sc}\textsf{-}\mathsf{safe\_program}(c) \Leftarrow \mathsf{sym}\textsf{-}\mathsf{safe\_program}(c)$$
\end{defi}

\section{Semiconcrete Execution}\label{sec:scexec}

In this section, we define semiconcrete execution, which introduces 
routine contracts and loop invariants to limit the length of execution 
traces. Routine contracts and loop invariants are specified using a 
language of \emph{assertions}, which specify both the \emph{facts} 
(boolean expressions) and the \emph{resources} (heap chunks) that are 
required or provided by a routine or loop body. To specify potentially 
unbounded data structures, \emph{predicates} are used, which are named, 
parameterized assertions which may be recursive, i.e.~mention themselves 
in their definition. 

The structure of this section is as follows. First, we 
introduce the new concepts involved in semiconcrete execution 
using a number of example programs and execution traces. Then, 
we formally define semiconcrete execution. Finally, we sketch 
an approach for proving that if a program is safe under 
semiconcrete execution, then it is safe under concrete 
execution, i.e.~semiconcrete execution is a sound approximation 
for checking the safety of a program under concrete execution.

\subsection{Annotations by Example}

In this subsection, we introduce the kinds of program annotations required by Featherweight VeriFast by means of some examples.

\begin{exa}{Annotations: Simple Example}

$$\begin{array}{l}
\mathbf{routine}\ \mathsf{swap}(\mathsf{cell1}, \mathsf{cell2})\\
\quad \annot{\mathbf{req}\ \mathsf{cell1} \mapsto {?}\mathsf{v1} * \mathsf{cell2} \mapsto {?}\mathsf{v2}}\\
\quad \annot{\mathbf{ens}\ \mathsf{cell1} \mapsto \mathsf{v2} * \mathsf{cell2} \mapsto \mathsf{v1}}\\
\quad =\\
\quad\quad \mathsf{value1} := [\mathsf{cell1}];\\
\quad\quad \mathsf{value2} := [\mathsf{cell2}];\\
\quad\quad [\mathsf{cell1}] := \mathsf{value2};\\
\quad\quad [\mathsf{cell2}] := \mathsf{value1}
\end{array}$$

\end{exa}

The example above shows a simple routine $\mathsf{swap}$ that 
swaps the values of two memory cells whose addresses are given 
by arguments $\mathsf{cell1}$ and $\mathsf{cell2}$. The body 
first reads the cells' original values into variables and then 
writes each cell's original value into the other cell. The 
routine has been annotated with a \emph{routine contract} 
consisting of a \emph{precondition} (also known as a 
\emph{requires clause}, denoted using keyword $\mathbf{req}$) 
and a \emph{postcondition} (also known as an \emph{ensures 
clause}, denoted using keyword $\mathbf{ens}$). The 
precondition describes the set of initial states accepted by 
the routine; the postcondition describes the set of final 
states generated by the routine when started from an initial 
state that satisfies the precondition.

The precondition of routine $\mathsf{swap}$ states that the 
routine requires two distinct memory cells to be present in the 
heap, one at address $\mathsf{cell1}$ and the other at address 
$\mathsf{cell2}$. Furthermore, it introduces two \emph{ghost 
variables} $\mathsf{v1}$ and $\mathsf{v2}$: it binds 
$\mathsf{v1}$ to the original value of the cell at address 
$\mathsf{cell1}$ and $\mathsf{v2}$ to the original value of the 
cell at address $\mathsf{cell2}$. In general, when a variable 
appears in an assertion immediately preceded by a question 
mark, this is called a \emph{variable pattern}. A variable 
pattern $?x$ introduces the variable $x$ and binds it to the 
value found in the heap corresponding to the position where the 
variable pattern appears.

In the example, the purpose of introducing the variables 
$\mathsf{v1}$ and $\mathsf{v2}$ in the precondition is so that 
they can be used in the postcondition to specify the 
relationship between the initial state and the final state of 
the routine. Specifically, the postcondition specifies that in 
the final state, the same memory cells are still present in the 
heap, and their value has changed such that the new value of 
the cell at address $\mathsf{cell1}$ equals the original value 
of the cell at address $\mathsf{cell2}$ and vice versa.

Notice that the assertions that serve as the precondition and 
the postcondition of routine $\mathsf{swap}$ specify only resources (heap 
chunks). In general, assertions may also specify facts (boolean expressions).
Correspondingly, there are two kinds of elementary 
assertions: boolean expressions and \emph{predicate 
assertions}. Elementary assertions can be composed using the 
\emph{separating conjunction} $*$. Its meaning is that the 
facts on the left and the facts on the right are both true, and 
that furthermore the resources on the left and the resources on 
the right are both present \emph{separately}, i.e.~the heap can 
be split into two parts such that the resources specified by 
the left-hand side of the assertion are in one part and the 
resources specified by the right-hand side of the assertion are 
in the other part. Notice how, in this respect, separating 
conjunction differs from ordinary logical conjunction (AND): we 
have that $a$ is equivalent to $a \land a$, but we do not have 
that $a$ is equivalent to $a * a$. In particular, $a * a$ 
specifies that the heap contains \emph{two occurrences} of each 
resource specified by $a$, which generally is not possible, and 
therefore $a * a$ is generally unsatisfiable. This also means 
that the precondition of routine $\mathsf{swap}$ implies that 
$\mathsf{cell1}$ and $\mathsf{cell2}$ denote distinct addresses. 

\begin{figure}
$$\begin{array}{l}
\annot{\begin{array}{@{} l @{}}
\mathbf{predicate}\ \mathsf{list}(\mathsf{l})\;=\\
\quad \mathbf{if}\ \mathsf{l} = 0\ \mathbf{then}\ 0 = 0\ \mathbf{else}\\
\quad\quad \mathsf{mb}(\mathsf{l}, 2) * \mathsf{l} \mapsto {?}\mathsf{v} * \mathsf{l} + 1 \mapsto {?}\mathsf{n} * \mathsf{list}(\mathsf{n})
\end{array}}\\
\\
\mathbf{routine}\ \mathsf{range}(\mathsf{i}, \mathsf{n}, \mathsf{result})\\
\quad \annot{\mathbf{req}\ \mathsf{result} \mapsto {?}\mathsf{dummy}}\\
\quad \annot{\mathbf{ens}\ \mathsf{result} \mapsto {?}\mathsf{list} * \mathsf{list}(\mathsf{list})}\\
\quad =\\
\quad\quad \mathbf{if}\ \mathsf{i} = \mathsf{n}\ \mathbf{then}\ \mathsf{head} := 0\ \mathbf{else}\ (\\
\quad\quad\quad \mathsf{head} := \mathbf{malloc}(2);\\
\quad\quad\quad [\mathsf{head}] := \mathsf{i};\\
\quad\quad\quad \mathsf{range}(\mathsf{i} + 1, \mathsf{n}, \mathsf{head} + 1)\\
\quad\quad );\\
\quad\quad \annot{\mathbf{close}\ \mathsf{list}(\mathsf{head})}; [\mathsf{result}] := \mathsf{head}
\end{array}$$
\caption{Annotations: Predicates}\label{fig:ex-preds}
\end{figure}

Now, consider again the example routine $\mathsf{range}$ that 
we introduced earlier. Recall that this routine builds a linked 
list holding the values between argument $\mathsf{i}$, 
inclusive, and argument $\mathsf{n}$, exclusive, and writes the 
address of the first node into the previously allocated memory 
cell whose address is given by argument $\mathsf{result}$. 

We show a contract for this routine in Figure~\ref{fig:ex-preds}. The precondition 
specifies that a memory cell must exist at the address given by 
argument $\mathsf{result}$. The postcondition specifies that 
this memory cell still exists, and that it now points to a 
linked list. The latter guarantee is specified using the 
\emph{predicate} $\mathsf{list}$, defined above. The definition 
of the predicate declares one parameter, $\mathsf{l}$, and a body, which is an assertion. The body 
performs a case analysis on whether $\mathsf{l}$ equals 
0. If so, it specifies only the trivial fact that 0 
equals 0, i.e.~it does not specify anything. Otherwise, it 
specifies that the heap contains a malloc block chunk of size 2 
at address $\mathsf{l}$, as well as two memory cells, at 
addresses $\mathsf{l}$ and $\mathsf{l} + 1$, as well as another 
linked list pointed to by the memory cell at address 
$\mathsf{l} + 1$.

Technically, what happens is that the predicate definition 
introduces a new kind of chunk, or more specifically, a new 
chunk name, and allows this chunk name to be used in predicate 
assertions. As a result, in semiconcrete execution, there are 
two kinds of predicates: the built-in predicates $\mathbf{mb}$ 
and $\mapsto$, and the user-defined predicates. 
Correspondingly, the heap contains two kinds of chunks: those 
whose name is a built-in predicate, and those whose name is a 
user-defined predicate. The purpose of chunks corresponding to 
user-defined predicates is to ``bundle up'' zero or more malloc 
block chunks and points-to chunks, along with some facts. Such 
``bundling up'' is necessary for writing contracts for routines 
that manipulate data structures of unbounded size. For example, 
it is impossible to write a postcondition for routine 
$\mathsf{range}$ without using user-defined predicates: a postcondition
that contains $m$ points-to assertions cannot describe a linked list
of length greater than $m$, so such a postcondition does not hold for a call of $\mathsf{range}$ where $\mathsf{n} - \mathsf{i} > m$.

The built-in chunks are created by the $\mathbf{malloc}$ 
statement. How are the user-defined chunks created? To enable 
the creation of user-defined chunks, semiconcrete execution 
introduces a new form of commands into the programming 
language, called $\mathbf{close}$ commands. The command 
$\mathbf{close}\ p(\overline{e})$ requires that $p$ is a user-defined 
predicate; it removes from the heap the chunks described by the 
body of the predicate, and checks the facts required by the body of the predicate,
and then adds a user-defined chunk whose name 
is $p$ and whose arguments are the values of $\overline{e}$. That 
is, the command bundles up the resources and facts described by 
the body of predicate $p$ into a chunk named $p$.

In the example, the body of routine $\mathsf{range}$, after 
allocating the first node and performing the recursive call to 
build the rest of the linked list, performs a $\mathbf{close}$ 
operation to bundle the three chunks of the first node and the 
$\mathsf{list}$ chunk that represents the rest of the linked 
list together into a single $\mathsf{list}$ chunk.

\subsection{Syntax of Annotations}

In summary, the programming language syntax extensions 
introduced by semiconcrete execution are as follows.

A program may now declare a number of \emph{routine 
specifications} $\mathit{rspec}$ of the form $\mathbf{routine}\ 
r(\overline{x})\ \mathbf{req}\ a\ \mathbf{ens}\ a'$, which 
associate with the routine name $r$ and parameter list 
$\overline{x}$ the precondition $a$ and postcondition $a'$, 
which are assertions. Furthermore, the syntax of loops is 
extended to include a loop invariant clause $\mathbf{inv}\ a$, 
where $a$ is an assertion. Furthermore, a program may declare a 
number of predicate definitions, which associate a predicate 
name and a list of parameters with a body, which is an 
assertion. An assertion $a$ is a boolean expression $b$, a 
predicate assertion $p(\overline{e}, \overline{{?}x})$ (where $p$ is either a 
built-in predicate or a user-defined predicate), a separating 
conjunction $a 
* a'$, or a conditional assertion $\mathbf{if}\ b\ \mathbf{then}\ a\ 
\mathbf{else}\ a'$. Two new commands are introduced: the 
$\mathbf{open}$ command and the $\mathbf{close}$ command. The 
$\mathbf{open}$ command performs the inverse operation of the 
$\mathbf{close}$ command: it unbundles a user-defined chunk, 
i.e.~it removes the user-defined chunk from the heap and adds 
the chunks described by the body of the predicate.

\begin{defi}{Annotations}\label{defi:annotations}

$$\begin{array}{r l}
& q \in \mathit{UserDefinedPredicates}\\
p ::= & \mapsto\ |\ \mathsf{mb}\ |\ q\\
a ::= & b\ |\ p(\overline{e}, \overline{{?}x})\ |\ a * a\ |\ \mathbf{if}\ b\ \mathbf{then}\ a\ \mathbf{else}\ a\\
\mathit{preddef} ::= & \mathbf{predicate}\ q(\overline{x}) = a\\
c ::= & \cdots\ |\ \mathbf{while}\ b\ \mathbf{inv}\ a\ \mathbf{do}\ c\ |\ \mathbf{open}\ q(\overline{e})\ |\ \mathbf{close}\ q(\overline{e})\\
\mathit{rspec} ::= & \mathbf{routine}\ r(\overline{x})\ \mathbf{req}\ a\ \mathbf{ens}\ a
\end{array}$$

\begin{center}
$e \mapsto {?}x$ is alternative syntax for ${\mapsto}(e, {?}x)$
\end{center}

\end{defi}

\subsection{Semiconcrete Execution: Example Trace}

Recall the example concrete execution trace of routine 
$\mathsf{range}$ in Figure~\ref{fig:larger-cexec-trace}. 
Recall that the notable features of this trace are that the 
trace is long, since it contains three nested executions of 
routine $\mathsf{range}$; that the heap is large, since it 
includes the entire heap that existed on entry to the routine, 
as well as all of the chunks produced by all of the nested 
calls; and that there is infinite branching. 

\begin{figure}
$$\begin{array}{l}
\mathbf{routine}\ \mathsf{range}(\mathsf{i}, \mathsf{n}, \mathsf{r})\\
\quad \annot{\mathbf{req}\ \mathsf{r} \mapsto {?}\mathsf{dummy}\ \mathbf{ens}\ \mathsf{r} \mapsto {?}\mathsf{list} * \mathsf{list}(\mathsf{list})}\\
\comment{s{:} \mathbf{0}[\mathsf{i}{:}\branching{5},\mathsf{n}{:}\branching{8},\mathsf{r}{:}\branching{41}], h {:} \mathbf{0}}\\
{\color{Gray}\mathit{produce}(}\annot{\mathsf{r} \mapsto {?}\mathsf{dummy}}{\color{Gray})}\\
\comment{s{:} \mathbf{0}[\mathsf{i}{:}5,\mathsf{n}{:}8,\mathsf{r}{:}41], h {:} \llbrace 41 {\mapsto} \branching{77}\rrbrace}\\
\mathbf{if}\ \mathsf{i} = \mathsf{n}\ \mathbf{then}\ \mathsf{l} := 0\ \mathbf{else}\ (\\
\mathsf{l} := \mathbf{malloc}(2);\\
\comment{s{:} \mathbf{0}[\mathsf{i}{:}5,\mathsf{n}{:}8,\mathsf{r}{:}41,\mathsf{l}{:}\branching{50}], h{:} \llbrace 41 {\mapsto}77{,} \mathsf{mb}(\branching{50}, 2){,} \branching{50}{\mapsto}\branching{88}{,} \branching{51} {\mapsto} \branching{99}\rrbrace}\\{}
[\mathsf{l}] := \mathsf{i}; \mathsf{range}(\mathsf{i} + 1, \mathsf{n}, \mathsf{l} + 1)\\
{\color{Gray}\mathit{consume}(}\annot{\mathsf{l} {+} 1 {\mapsto} {?}\mathsf{dummy}}{\color{Gray});\mathit{produce}(}\annot{\mathsf{l}{+}1 {\mapsto} {?}\mathsf{list} * \mathsf{list}(\mathsf{list})}{\color{Gray})}\\
\comment{s{:}\mathbf{0}[\mathsf{i}{:}5,\mathsf{n}{:}8,\mathsf{r}{:}41,\mathsf{l}{:}50], h{:} \llbrace 41 {\mapsto} 77{,} \mathsf{mb}(50, 2){,} 50{\mapsto}5{,} 51 {\mapsto} \branching{60}{,}\mathsf{list}(\branching{60})\rrbrace}\\
);\ \annot{\mathbf{close}\ \mathsf{list}(\mathsf{l})}; [\mathsf{r}] := \mathsf{l}\\
\comment{s{:}\mathbf{0}[\mathsf{i}{:}5,\mathsf{n}{:}8,\mathsf{r}{:}41,\mathsf{l}{:}50],h{:}\llbrace 41{\mapsto} 50{,}\mathsf{list}(50)\rrbrace}\\
{\color{Gray}\mathit{consume}(}\annot{\mathsf{r} \mapsto {?}\mathsf{list} * \mathsf{list}(\mathsf{list})}{\color{Gray})}\\
\comment{s{:}\mathbf{0}[\mathsf{i}{:}5,\mathsf{n}{:}8,\mathsf{r}{:}41,\mathsf{l}{:}50],h{:}\mathbf{0}}
\end{array}$$
\caption{Semiconcrete Execution: Example Trace}\label{fig:scexec-trace}
\end{figure}

We show an example semiconcrete execution trace for routine 
$\mathsf{range}$ in Figure~\ref{fig:scexec-trace}. Recall that semiconcrete execution 
executes each routine separately. Therefore, the above trace is 
not an excerpt from a larger program trace; rather, it is a 
complete trace of the execution of routine $\mathsf{range}$.

Execution starts in a state where the store binds each 
parameter to an arbitrary argument value and the heap is empty. 
It then \emph{produces} the precondition: it adds the resources 
and assumes the facts specified by the precondition. When 
producing a predicate assertion $p(\overline{e}, \overline{?x})$, the values of the arguments corresponding to the variable patterns $\overline{?x}$
are arbitrary. In the example, a 
points-to chunk at the address given by parameter 
$\mathsf{result}$ is added to the heap.

The execution of the $\mathbf{malloc}$ command and the memory 
write command are the same as in the concrete execution. 

The routine call is executed not by inlining a nested execution 
of the body of the routine, but by using the contract: the 
precondition is \emph{consumed}, and then the postcondition is 
\emph{produced}. Consuming an assertion means removing the heap 
chunks and checking the facts specified by the assertion. If a 
fact specified by the assertion is false, execution fails. The 
net effect is that the points-to chunk at address 51 gets some 
arbitrary value (60 in this trace) and a $\mathsf{list}$ chunk 
is added whose argument is 60.

The $\mathbf{close}$ command collapses the four chunks 
representing the linked list into a single chunk 
$\mathsf{list}(50)$.

Finally, after execution of the routine body is complete, the 
postcondition is consumed. It removes all of the heap chunks 
and leaves the heap empty.

Generally, in semiconcrete execution, if the heap is left 
nonempty after a routine execution, this indicates a memory 
leak, since the memory described by the remaining chunks can no 
longer be accessed by any subsequent operation in the program 
execution. Indeed, of the heap chunks that exist at the end of 
a routine body execution, only the ones described by the 
postcondition become available to the caller; the others can no 
longer be retrieved in any way. Therefore, as the final step of 
a routine execution, semiconcrete execution checks that the 
heap is empty; if not, routine execution fails.

\subsection{Semiconcrete Execution: Types}

The set $\mathit{SCStates}$ of semiconcrete states is defined 
above; the only difference with the concrete states is that the 
predicates now include the user-defined predicates, and 
consequently the chunks now include the user-defined chunks.

To formally define semiconcrete command execution, we will 
define a function $\mathsf{scexec}$ from commands to mutators, 
similar to function $\mathsf{exec}$ for concrete execution. 
Additionally, we define functions $\mathsf{consume}$ and 
$\mathsf{produce}$ that formalize what it means to consume and 
produce an assertion, respectively.

\begin{defi}{Semiconcrete Execution: Types}

$$\begin{array}{r @{\ } l}
\mathit{SCStores} = & \mathit{Vars} \rightarrow \mathbb{Z}\\
\mathit{SCPredicates} = & \{\mapsto, \mathsf{mb}\} \cup \mathit{UserDefinedPredicates}\\
\mathit{SCChunks} = & \{p(\overline{v})\ |\ p \in \mathit{SCPredicates}, \overline{v} \in \mathbb{Z}\}\\
\mathit{SCHeaps} = & \mathit{SCChunks} \rightarrow \mathbb{N}\\
\mathit{SCStates} = & \mathit{SCStores} \times \mathit{SCHeaps}\\
\mathit{SCMutators} = & \mathit{SCStates} \rightarrow \mathit{Outcomes}(\mathit{SCStates})
\end{array}$$

$$\begin{array}{l l}
\mathsf{scexec} & \in \mathit{Commands} \rightarrow \mathit{SCMutators}\\
\mathsf{consume} & \in \mathit{Assertions} \rightarrow \mathit{SCMutators}\\
\mathsf{produce} & \in \mathit{Assertions} \rightarrow \mathit{SCMutators}
\end{array}$$

\end{defi}

\subsection{Some Auxiliary Definitions}

The definition of semiconcrete execution uses the following auxiliary 
mutators, in addition to the ones used by the definition of concrete 
execution. Semiconcrete consumption $\mathsf{consume\_chunks}(h)$ of a multiset of 
chunks $h$ fails if the heap does not contain these chunks, and otherwise 
removes them from the heap. It is identical to concrete consumption of 
chunks. Semiconcrete production $\mathsf{produce\_chunks}(h)$ of a multiset of 
chunks $h$ adds the chunks to the heap. It differs from concrete 
production in that it does not check that the added chunks do not clash 
with existing chunks in the heap. 
Semiconcrete consumption and production of a single chunk $\alpha$ are defined in the obvious way.
The mutator $\mathsf{assert}(b)$ 
asserting a boolean expression $b$ fails if $b$, evaluated in the current 
store, is false, and otherwise does nothing. 

\begin{defi}{Some Auxiliary Definitions}

$$\begin{array}{r @{\ } l}
\mathsf{consume\_chunks}(h') = & \lambda (s, h).\;\bigoplus h' \le h.\;\langle (s, h - h')\rangle\\
\mathsf{consume\_chunk}(\alpha) = & \mathsf{consume\_chunks}(\llbrace\alpha\rrbrace)\\
\mathsf{produce\_chunks}(h') = & \lambda (s, h).\;\langle (s, h \uplus h')\rangle\\
\mathsf{produce\_chunk}(\alpha) = & \mathsf{produce\_chunks}(\llbrace \alpha\rrbrace)\\
\mathsf{assert}(b) = & \lambda (s, h).\;\bigoplus \llbracket b\rrbracket_s = \mathsf{true}.\;\langle (s, h)\rangle
\end{array}$$

\end{defi}

\subsection{Producing Assertions}

Production of an assertion is defined as follows.

Production of a boolean expression means assuming it. Recall 
from the definition of $\mathsf{assume}$ on 
Page~\pageref{def:assume} that assuming a boolean expression is 
equivalent to a no-op if it evaluates to true, and equivalent 
to nontermination if it evaluates to false. The effect is that 
all final states generated by production satisfy the 
expression.

Production of a predicate assertion means demonically choosing 
a value for each variable pattern, binding the pattern variable 
to it, and adding the specified chunk to the heap.

Producing a separating conjunction means first producing the 
left-hand side and then producing the right-hand side. Notice 
that the variable bindings introduced by the left-hand side 
are active when producing the right-hand side. Notice also that 
this definition correctly captures the \emph{separating} aspect 
of the separating conjunction: if a chunk is specified by both 
the left-hand side and the right-hand side, two occurrences of 
it end up in the heap.

Producing a conditional assertion is defined analogously to
executing a conditional statement.

\begin{defi}{Producing Assertions}

$$\begin{array}{l}
\mathsf{produce}(b) = \mathsf{assume}(b)\\[.5em]

\mathsf{produce}(p(\overline{e}, \overline{{?}x})) =
  \overline{v} \leftarrow \mathsf{eval}(\overline{e});
  \bigotimes \overline{v}'.\;
  \mathsf{produce\_chunk}(p(\overline{v},\overline{v}'));
  \overline{x} := \overline{v}'\\[.5em]

\mathsf{produce}(a * a') = \mathsf{produce}(a); \mathsf{produce}(a')\\[.5em]

\mathsf{produce}(\mathbf{if}\ b\ \mathbf{then}\ a\ \mathbf{else}\ a') =
 \mathsf{assume}(b); \mathsf{produce}(a) \otimes \mathsf{assume}(\lnot b); \mathsf{produce}(a')
\end{array}$$

\end{defi}

\subsection{Consuming Assertions}

Consumption of an assertion is defined as follows.

Consuming a boolean expression is equivalent to a no-op if the 
expression evaluates to true under the current store; 
otherwise, it is equivalent to failure.

Consuming a predicate assertion fails unless there exists a value for
each variable pattern such that the specified chunk can be consumed.
If so, each pattern variable is bound to the corresponding value.

Consuming a separating conjunction first consumes the left-hand 
side and then consumes the right-hand side. Notice that this 
correctly reflects the \emph{separating} aspect of the 
separating conjunction: if the left-hand side and the 
right-hand side specify the same chunk, consumption fails 
unless the heap contains two occurrences of the chunk, which is 
generally impossible.

Consuming a conditional assertion is defined analogously to 
executing a conditional statement. 

\begin{defi}{Consuming Assertions}

$$\begin{array}{l}
\mathsf{consume}(b) = \mathsf{assert}(b)\\[.5em]

\mathsf{consume}(p(\overline{e}, \overline{{?}x})) =\\
\quad
  \overline{v} \leftarrow \mathsf{eval}(\overline{e});
  \bigoplus \overline{v}'.\;
  \mathsf{consume\_chunk}(p(\overline{v},\overline{v}'));
  \overline{x} := \overline{v}'\\[.5em]

\mathsf{consume}(a * a') = \mathsf{consume}(a); \mathsf{consume}(a')\\[.5em]

\mathsf{consume}(\mathbf{if}\ b\ \mathbf{then}\ a\ \mathbf{else}\ a') =\\
\quad \mathsf{assume}(b); \mathsf{consume}(a) \otimes \mathsf{assume}(\lnot b); \mathsf{consume}(a')
\end{array}$$

\end{defi}

\subsection{Semiconcrete Execution of Commands}

\begin{figure}
$$\begin{array}{l}
\mathsf{scexec}(x := e) = v \leftarrow \mathsf{eval}(e); x := v\\
\\
\mathsf{scexec}(c; c') = \mathsf{scexec}(c); \mathsf{scexec}(c')\\
\\
\mathsf{scexec}(\mathbf{if}\ b\ \mathbf{then}\ a\ \mathbf{else}\ a') =\\
\quad \mathsf{assume}(b); \mathsf{scexec}(c) \otimes \mathsf{assume}(\lnot b); \mathsf{scexec}(c')\\
\\
\mathsf{scexec}(\mathbf{while}\ e\ \mathbf{inv}\ a\ \mathbf{do}\ c) = \textrm{See Figure~\ref{fig:scexec-while}}\\
\\
\mathsf{scexec}(r(\overline{e})) =
  \overline{v} \leftarrow \mathsf{eval}(\overline{e});
  \mathsf{with}(\mathbf{0}[\overline{x}:=\overline{v}], \mathsf{consume}(a); \mathsf{produce}(a'))\\
\quad \textrm{where $\mathbf{routine}\ r(\overline{x})\ \mathbf{req}\ a\ \mathbf{ens}\ a'$}\\
\\
\mathsf{scexec}(x := \mathbf{malloc}(n)) =\\
\quad\bigotimes \ell, v_1,\dots,v_n \in \mathbb{Z}.\\
\quad\quad \mathsf{produce\_chunks}(\llbrace\mathsf{mb}(\ell, n), \ell \mapsto v_1,\dots,\ell + n - 1\mapsto v_n\rrbrace); x := \ell\\
\\
\mathsf{scexec}(x := [e]) = \ell \leftarrow \mathsf{eval}(e);\\
\quad \bigoplus v.\;\mathsf{consume\_chunk}(\ell \mapsto v); \mathsf{produce\_chunk}(\ell \mapsto v); x := v\\
\\
\mathsf{scexec}([e] := e') = \ell \leftarrow \mathsf{eval}(e); v \leftarrow \mathsf{eval}(e');\\
\quad \bigoplus v_0.\;\mathsf{consume\_chunk}(\ell \mapsto v_0); \mathsf{produce\_chunk}(\ell \mapsto v)\\
\\
\mathsf{scexec}(\mathbf{free}(e)) = \ell \leftarrow \mathsf{eval}(e);\\
\quad \bigoplus N \in \mathbb{N}, v_1, \dots, v_N \in \mathbb{Z}.\\
\quad\quad \mathsf{consume\_chunks}(\llbrace\mathsf{mb}(\ell, N), \ell \mapsto v_1, \dots, \ell + N - 1 \mapsto v_N\rrbrace)\\
\\
\mathsf{scexec}(\mathbf{open}\ p(\overline{e})) = \overline{v} \leftarrow \mathsf{eval}(\overline{e});\\
\quad \mathsf{consume\_chunk}(p(\overline{v})); \mathsf{with}(\mathbf{0}[\overline{x}:=\overline{v}], \mathsf{produce}(a))\\
\quad\textrm{where $\mathbf{predicate}\ p(\overline{x}) = a$}\\
\\
\mathsf{scexec}(\mathbf{close}\ p(\overline{e})) = \overline{v} \leftarrow \mathsf{eval}(\overline{e});\\
\quad \mathsf{with}(\mathbf{0}[\overline{x}:=\overline{v}], \mathsf{consume}(a)); \mathsf{produce\_chunk}(p(\overline{v}))\\
\quad\textrm{where $\mathbf{predicate}\ p(\overline{x}) = a$}
\end{array}$$
\caption{Semiconcrete Execution of Commands}\label{fig:scexec}
\end{figure}

\begin{figure}
$$\begin{array}{l}
\begin{array}{@{} l c l}
\mathsf{targets}(x := e) & = & \{x\}\\
\mathsf{targets}(c_1; c_2) & = & \mathsf{targets}(c_1) \cup \mathsf{targets}(c_2)\\
\mathsf{targets}(\mathbf{if}\ b\ \mathbf{then}\ c_1\ \mathbf{else}\ c_2) & = & \mathsf{targets}(c_1) \cup \mathsf{targets}(c_2)\\
\mathsf{targets}(r(\overline{e})) & = & \emptyset\\
\mathsf{targets}(\mathbf{while}\ b\ \mathbf{inv}\ a\ \mathbf{do}\ c_0) & = & \mathsf{targets}(c_0)\\
\mathsf{targets}(x := \mathbf{malloc}(n)) & = & \{x\}\\
\mathsf{targets}(x := [e]) & = & \{x\}\\
\mathsf{targets}([e] := e') & = & \emptyset\\
\mathsf{targets}(\mathbf{free}(e)) & = & \emptyset\\
\end{array}\\
\\
\mathsf{havoc}(\overline{x}) = \lambda (s, h).\; \bigotimes \overline{v} \in \mathbb{Z}.\;\langle(s[\overline{x}:=\overline{v}], h)\rangle\\
\mathsf{leakcheck} = \lambda (s, h).\; \bigoplus h = \mathbf{0}.\;\top\\
\\
\mathsf{scexec}(\mathbf{while}\ b\ \mathbf{inv}\ a\ \mathbf{do}\ c) =\\
\quad s \leftarrow \mathsf{store}; \mathsf{with}(s, \mathsf{consume}(a));\\
\quad \mathsf{havoc}(\mathsf{targets}(c));\\
\quad (\\
\quad\quad \mathsf{heap} := \mathbf{0};\\
\quad\quad s \leftarrow \mathsf{store}; \mathsf{with}(s, \mathsf{produce}(a));\\
\quad\quad \mathsf{assume}(b); \mathsf{scexec}(c);\\
\quad\quad s \leftarrow \mathsf{store}; \mathsf{with}(s, \mathsf{consume}(a));\\
\quad\quad \mathsf{leakcheck}\\
\quad \otimes\\
\quad\quad s \leftarrow \mathsf{store}; \mathsf{with}(s, \mathsf{produce}(a));\\
\quad\quad \mathsf{assume}(\lnot b)\\
\quad )\\
\end{array}$$
\caption{Semiconcrete Execution of Loops}\label{fig:scexec-while}
\end{figure}

Recall that the concrete execution function $\mathsf{exec}$ is 
defined in terms of the helper function $\mathsf{exec}_n$. The 
latter function is defined by recursion on $n$.

In contrast, the semiconcrete execution function 
$\mathsf{scexec}$ is defined directly, by recursion on the 
structure of the command. Doing so for the concrete execution 
function would not have been possible, since the execution of a 
routine call involves the execution of the callee's body, which 
obviously is not part of the structure of the call command 
itself. However, since in semiconcrete execution routine call 
involves only production and consumption of assertions, this 
simple approach is possible here.

\begin{defi}{Semiconcrete Execution of Commands}
See Figures~\ref{fig:scexec} and \ref{fig:scexec-while}.
\end{defi}

Execution of assignments, sequential compositions, and 
conditional assertions is the same as in concrete execution.

Execution of a routine call $r(\overline{e})$ looks up routine 
$r$'s precondition $a$ and postcondition $a'$, to be 
interpreted under a parameter list $\overline{x}$, and sets up 
a store that binds the parameters to the values of the 
arguments. In this store, it first consumes the precondition 
and then produces the postcondition. Notice that the variable 
bindings generated during consumption of the precondition are 
active during production of the postcondition, since the output 
store of the consumption operation serves as the input store of 
the production operation. 

Execution of a while loop is relatively complex. It proceeds as 
follows:
\begin{itemize}
\item The loop invariant is consumed.

\item An arbitrary new value is assigned to each variable 
    modified by the loop body. 
    
\item Execution chooses demonically between two branches:

\begin{itemize}
\item In the first branch, execution proceeds as 
    follows:
    
\begin{itemize}
\item The heap is emptied, so that heap chunks not 
    described by the loop invariant are not 
    available to the loop body. 
    
\item The loop invariant is produced, but the 
    resulting variable bindings are discarded. 
    
\item It is assumed that the loop condition holds. 

\item The loop body is executed.

\item The loop invariant is consumed.

\item A leak check is performed, i.e.~execution 
    fails if the heap is not empty; otherwise, 
    execution blocks.

\end{itemize}

\item In the second branch, the loop invariant is 
    produced (but the resulting variable bindings are 
    discarded), and it is assumed that the loop 
    condition does not hold.
    
\end{itemize}

\end{itemize}

The definition uses the auxiliary functions $\mathsf{targets}$, 
$\mathsf{havoc}$, and $\mathsf{leakcheck}$.

Function $\mathsf{targets}$ maps a command to the set of 
variables modified by the command.

Function $\mathsf{havoc}(\overline{x})$ demonically chooses a 
value for each variable in $\overline{x}$ and assigns it to the 
corresponding variable.

Function $\mathsf{leakcheck}$ fails if the heap is nonempty, 
and otherwise blocks, i.e.~does not terminate. 

Semiconcrete execution of memory block allocation, memory read, 
memory write, and memory block deallocation are the same as in 
concrete execution.

Execution of an $\mathbf{open}$ command $\mathbf{open}\ 
p(\overline{e})$ first consumes the chunk whose name is $p$ and 
whose arguments are the values of $\overline{e}$ and then 
produces the body of predicate $p$, the latter in a store that 
binds the predicate parameters $\overline{x}$ to the values of 
the argument expressions $\overline{e}$. 

Conversely, execution of a $\mathbf{close}$ command 
$\mathbf{close}\ p(\overline{e})$ first consumes the body of 
predicate $p$ in a store that binds the predicate parameters 
$\overline{x}$ to the values of the argument expressions 
$\overline{e}$. Then it produces the chunk whose name is $p$ 
and whose arguments are the values of $\overline{e}$. 

\subsection{Validity of Routines}

In concrete execution, safety of a program simply means that 
execution of the main command starting from an empty state does 
not fail. In semiconcrete execution, safety of a program means 
that two things are true: 1) execution of the main command 
starting from the empty state does not fail; and 2) all 
routines are \emph{valid}.

Validity of a routine means that its body satisfies its 
contract. More specifically, it means that the \emph{routine 
validity mutator} does not fail, when started from an empty 
state. The routine validity mutator for a given routine $r$ 
proceeds as follows: 1) it sets up a store that binds each of the routine's parameters to a demonically chosen value;
2) it produces the routine precondition; 
3) it semiconcretely executes the routine body; 4) it consumes 
the routine postcondition; 5) it checks for leaks.

\begin{defi}{Validity of Routines}

$$\begin{array}{l}
\mathsf{valid}(r) =\\
\quad (\mathbf{0}, \mathbf{0})\; \triangleright\\
\quad \bigotimes \overline{v}.\\
\quad \mathsf{with}(\mathbf{0}[\overline{x}:=\overline{v}],\\
\quad\quad s' \leftarrow \mathsf{with}(\mathbf{0}[\overline{x}:=\overline{v}], \mathsf{produce}(a); \mathsf{store});\\
\quad\quad \mathsf{scexec}(c);\\
\quad\quad \mathsf{with}(s', \mathsf{consume}(a'))\\
\quad );\\
\quad \mathsf{leakcheck}\\
\quad \{\mathrm{true}\}\\
\quad \textrm{where $\mathbf{routine}\ r(\overline{x})\ \mathbf{req}\ a\ \mathbf{ens}\ a' = c$}\\
\end{array}$$

\end{defi}

Notice that the postcondition is consumed starting from the store saved after producing the precondition.
This ensures that the variable bindings generated by producing the precondition are
visible when consuming the postcondition.

\subsection{Semiconcrete Execution: Program Safety}

As stated before, safety of a program in semiconcrete execution 
means that execution of the main command succeeds when started 
from the empty state, \emph{and} that all routines are valid. 

\begin{defi}{Semiconcrete Execution: Program Safety}
\[
\mathsf{sc}\textsf{-}\mathsf{safe\_program}(c)\quad=\quad (\forall r.\;\mathsf{valid}(r)) \land \sigma_0 \triangleright \mathsf{scexec}(c)\;\{\mathrm{true}\}
\]
where $r$ ranges over the declared routines of the program.
\end{defi}

\subsection{Soundness}

Now that we have defined safety of a program in semiconcrete 
execution, we discuss its relationship with safety of the 
program in concrete execution. The intended relationship is that if a program is safe in 
semiconcrete execution (i.e.~all routines are valid and the 
main command does not fail when executed semiconcretely 
starting from the empty state), then it is safe in concrete 
execution (i.e.~the main command does not fail when executed 
concretely starting from the empty state). We call this 
property the \emph{soundness} of semiconcrete execution. In the 
remainder of this section, we sketch a proof of this property.

\subsubsection{Properties of Assertion Consumption and Production}

First, we discuss some properties of assertion consumption and 
production. To gain more insight into consumption and production, we here 
offer an alternative definition of them, in terms of 
\emph{consumption and production arrows} ${\xrightarrow{a}_\mathsf{c}}, {\xrightarrow{a}_\mathsf{p}} \subseteq \mathit{SCStates} \times \mathit{SCStates}$,
defined inductively using the inference rules shown below.
$\sigma \xrightarrow{a}_\mathsf{c} \sigma'$ means that consumption of assertion $a$ starting from state $\sigma$ succeeds and results in state $\sigma'$.
Similarly, $\sigma \xrightarrow{a}_\mathsf{p} \sigma'$ means that production of assertion $a$ starting from state $\sigma$ results in
state $\sigma'$.

\begin{defi}{The Consumption Arrow}
\begin{mathpar}
\inferrule{
\llbracket b\rrbracket_s = \mathsf{true}
}{
(s, h) \xrightarrow{b}_\mathsf{c} (s, h)
}
\and
\inferrule{
h = \llbrace p(\llbracket \overline{e}\rrbracket_s,\overline{v})\rrbrace \uplus h'
}{
(s, h) \xrightarrow{p(\overline{e},\overline{?x})}_\mathsf{c} (s[\overline{x}:=\overline{v}], h')
}
\and
\inferrule{
(s, h) \xrightarrow{a}_\mathsf{c} (s', h')\\
(s', h') \xrightarrow{a'}_\mathsf{c} (s'', h'')
}{
(s, h) \xrightarrow{a * a'}_\mathsf{c} (s'', h'')
}
\\
\inferrule{
\llbracket b\rrbracket_s = \mathsf{true}\\
(s, h) \xrightarrow{a}_\mathsf{c} (s', h')
}{
(s, h) \xrightarrow{\mathbf{if}\ b\ \mathbf{then}\ a\ \mathbf{else}\ a'}_\mathsf{c} (s', h')
}
\and
\inferrule{
\llbracket b\rrbracket_s = \mathsf{false}\\
(s, h) \xrightarrow{a'}_\mathsf{c} (s', h')
}{
(s, h) \xrightarrow{\mathbf{if}\ b\ \mathbf{then}\ a\ \mathbf{else}\ a'}_\mathsf{c} (s', h')
}
\end{mathpar}

\end{defi}

\begin{defi}{The Production Arrow}
\begin{mathpar}
\inferrule{
\llbracket b\rrbracket_s = \mathsf{true}
}{
(s, h) \xrightarrow{b}_\mathsf{p} (s, h)
}
\and
\inferrule{
h' = \llbrace p(\llbracket \overline{e}\rrbracket_s,\overline{v})\rrbrace \uplus h
}{
(s, h) \xrightarrow{p(\overline{e},\overline{?x})}_\mathsf{p} (s[\overline{x}{:=}\overline{v}], h')
}
\and
\inferrule{
(s, h) \xrightarrow{a}_\mathsf{p} (s', h')\\
(s', h') \xrightarrow{a'}_\mathsf{p} (s'', h'')
}{
(s, h) \xrightarrow{a * a'}_\mathsf{p} (s'', h'')
}
\\
\inferrule{
\llbracket b\rrbracket_s = \mathsf{true}\\
(s, h) \xrightarrow{a}_\mathsf{p} (s', h')
}{
(s, h) \xrightarrow{\mathbf{if}\ b\ \mathbf{then}\ a\ \mathbf{else}\ a'}_\mathsf{p} (s', h')
}
\and
\inferrule{
\llbracket b\rrbracket_s = \mathsf{false}\\
(s, h) \xrightarrow{a'}_\mathsf{p} (s', h')
}{
(s, h) \xrightarrow{\mathbf{if}\ b\ \mathbf{then}\ a\ \mathbf{else}\ a'}_\mathsf{p} (s', h')
}
\end{mathpar}

\end{defi}

Notice that the only difference between the two definitions is 
the different positions of $h$ and $h'$ in the rule for 
predicate assertions. Consumption of predicate assertions removes matching chunks, whereas production adds matching chunks.

Notice that in both cases, there are generally multiple output states for any given input state: in both cases,
there is a distinct output state for each distinct binding of values to pattern variables in predicate assertions. However, this is much
more common in the case of production than in the case of consumption, since in the case of consumption multiple bindings are possible only
if the heap contains multiple chunks that match the predicate assertion.

Given the consumption and production arrows, we can give an alternative definition of the consumption and production mutators, as shown below.

\begin{lem}[Consumption and Production and the Arrows]
$$\begin{array}{r l}
\mathsf{consume}(a) = \lambda \sigma.\;\bigoplus \sigma', \sigma \xrightarrow{a}_\mathsf{c} \sigma'.\;\langle \sigma'\rangle\\
\mathsf{produce}(a) = \lambda \sigma.\;\bigotimes \sigma', \sigma \xrightarrow{a}_\mathsf{p} \sigma'.\;\langle \sigma'\rangle
\end{array}$$
\end{lem}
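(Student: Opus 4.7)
The plan is to proceed by structural induction on the assertion $a$, establishing the consume and produce equalities in parallel. Since the right-hand sides are expressed in terms of outcomes while the left-hand sides involve sequential composition of mutators, I would interpret the stated equality as semantic equivalence of outcomes (i.e.\ $\Lleftrightarrow$ in both directions, or equivalently agreement on every postcondition $Q$ via the Satisfaction lemma); otherwise spurious $\phi \otimes \top$ or $\phi \oplus \bot$ shapes introduced by $\mathsf{assume}$ on dead branches would prevent literal syntactic equality.

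For the base case $a = b$ I would simply unfold $\mathsf{consume}(b) = \mathsf{assert}(b)$ and $\mathsf{produce}(b) = \mathsf{assume}(b)$: both yield an (angelic, resp.\ demonic) choice over the singleton $\{\sigma\}$ when $\llbracket b\rrbracket_s = \mathsf{true}$ and over $\emptyset$ otherwise, matching the single arrow rule in each direction. For $a = p(\overline{e}, \overline{?x})$, unfolding $\mathsf{eval}$, $\mathsf{consume\_chunk}$, and the variable assignment shows that $\mathsf{consume}(a)(s,h)$ is an angelic choice over exactly those $\overline{v}'$ for which $p(\llbracket\overline{e}\rrbracket_s, \overline{v}') \in h$, with the chunk removed and $\overline{x}$ bound to $\overline{v}'$; this coincides precisely with the output states permitted by the single rule for $\xrightarrow{p(\overline{e},\overline{?x})}_{\mathsf{c}}$. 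The produce case is symmetric, with demonic choice over arbitrary $\overline{v}'$ and a chunk added rather than removed.

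For the inductive case $a = a_1 * a_2$ I would rewrite $\mathsf{consume}(a_1 * a_2) = \mathsf{consume}(a_1); \mathsf{consume}(a_2)$, apply the IH to the first factor to get $\lambda\sigma.\bigl(\bigoplus \sigma', \sigma \xrightarrow{a_1}_\mathsf{c} \sigma'.\, \langle\sigma'\rangle\bigr); \mathsf{consume}(a_2)$, then push the composition inside the angelic choice using distributivity and the identity $\langle\sigma'\rangle; C = C(\sigma')$, and apply the IH to the second factor. The resulting nested angelic choice I would flatten and then recombine via the inference rule for $*$ on $\xrightarrow{}_{\mathsf{c}}$. The produce case is identical with demonic choice. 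For $a = \mathbf{if}\ b\ \mathbf{then}\ a_1\ \mathbf{else}\ a_2$ I would case-split on $\llbracket b\rrbracket_s$: one arm collapses through $\mathsf{assume}$ to $\top$ (for consume) or to $\bot$ (for produce), and absorbing this neutral element into the outer demonic or angelic choice leaves exactly the shape prescribed by the corresponding $\mathbf{if}$-rule of the arrow, to which the IH applies.

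The main obstacle is the conditional case together with the semantic-versus-syntactic subtlety about equality: the $\mathsf{assume}$-based encoding generates $\phi \otimes \top$ or $\phi \oplus \bot$ shapes that only equal $\phi$ up to $\Lleftrightarrow$, and the flattening step in the separating conjunction case relies on the equational theory of nested infinitary choices ($\bigoplus \sigma'.\,\bigoplus \sigma''.\,\phi \equiv \bigoplus (\sigma',\sigma'').\,\phi$, and similarly for $\bigotimes$) rather than literal equality of outcome trees. Once that convention is fixed, everything else is routine bookkeeping with associativity, monotonicity, and distributivity of sequential composition, all of which are already in hand.
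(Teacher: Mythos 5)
The paper states this lemma without proof (only the subsequent arrow properties are said to be ``proved easily by induction on the assertion''), and your structural induction on $a$, with the equality read as mutual coverage rather than literal equality of outcome trees, is exactly the intended argument; the observation that singleton angelic/demonic choices $\bigoplus\{\langle\sigma'\rangle\}$ only match $\langle\sigma'\rangle$ up to $\Lleftrightarrow$ is the right call, and the base, predicate, and separating-conjunction cases are handled correctly.

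One slip worth fixing before you carry out the computation: in the conditional case, \emph{both} $\mathsf{consume}$ and $\mathsf{produce}$ guard the two branches with $\mathsf{assume}$ and combine them with binary \emph{demonic} choice $\otimes$. Hence the dead branch collapses to $\top$ (since $\mathsf{assume}$ of a false condition is $\bigotimes\mathsf{false}.\phi = \top$ and $\top; C = \top$) in both the consume and the produce computation, and what you need is $\phi\otimes\top\Lleftrightarrow\phi$ in both cases --- not ``$\bot$ for produce'' absorbed into an angelic choice. If the produce dead branch really were $\bot$ under a demonic combinator, the whole outcome would collapse to $\bot$ and the lemma would be false. The $\phi\oplus\bot$ shapes you mention do occur, but they come from the predicate-assertion case (the angelic choice over $\overline{v}'$ where $\mathsf{consume\_chunk}$ fails for the values not present in the heap), not from $\mathsf{assume}$ on dead branches. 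With that attribution corrected, the rest --- distributing sequential composition over infinitary choices, $\langle\sigma'\rangle; C = C(\sigma')$, flattening nested choices, and monotonicity of $;$ with respect to coverage to apply the induction hypothesis under a composition --- goes through as you describe.
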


Notice that the consumption mutator chooses angelically among the output states, and fails if there are none; production chooses demonically
among the output states, and blocks if there are none.

We can easily prove some important properties of the consumption and production arrows.
Firstly, consumption is local: if consumption succeeds, then it also succeeds if more chunks are available, and those additional chunks
remain untouched.
\begin{lem}[Consumption Locality]\label{lemma:conlocal}
\display{(s, h) \xrightarrow{a}_\mathsf{c} (s', h') \Rightarrow (s, h \uplus h'') \xrightarrow{a}_\mathsf{c} (s', h' \uplus h'')}
\end{lem}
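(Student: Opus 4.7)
The plan is to proceed by rule induction on the derivation of $(s, h) \xrightarrow{a}_\mathsf{c} (s', h')$, handling each of the five inference rules that define the consumption arrow. In each case, I will show that the same rule instance (with the heap extended by $h''$) derives the extended conclusion.

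For the boolean rule ($a = b$), the premise forces $(s', h') = (s, h)$, and the side-condition $\llbracket b\rrbracket_s = \mathsf{true}$ is independent of the heap, so applying the same rule to $(s, h \uplus h'')$ yields exactly $(s, h \uplus h'')$, which equals $(s', h' \uplus h'')$. For the predicate rule, the hypothesis gives $h = \llbrace p(\llbracket \overline{e}\rrbracket_s, \overline{v})\rrbrace \uplus h_0$ with $(s', h') = (s[\overline{x}:=\overline{v}], h_0)$. Using associativity of $\uplus$ (immediate from the pointwise definition $M \uplus M' = \lambda e.\,M(e) + M'(e)$), I rewrite $h \uplus h'' = \llbrace p(\llbracket \overline{e}\rrbracket_s, \overline{v})\rrbrace \uplus (h_0 \uplus h'')$ and reapply the rule, obtaining $(s[\overline{x}:=\overline{v}], h_0 \uplus h'') = (s', h' \uplus h'')$.

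For separating conjunction $a = a_1 * a_2$, the derivation decomposes through an intermediate state $(s_1, h_1)$ with $(s, h) \xrightarrow{a_1}_\mathsf{c} (s_1, h_1)$ and $(s_1, h_1) \xrightarrow{a_2}_\mathsf{c} (s', h')$. Applying the induction hypothesis to the first premise with the same ``frame'' $h''$ gives $(s, h \uplus h'') \xrightarrow{a_1}_\mathsf{c} (s_1, h_1 \uplus h'')$, and applying it to the second premise, now with frame $h''$ threaded through, gives $(s_1, h_1 \uplus h'') \xrightarrow{a_2}_\mathsf{c} (s', h' \uplus h'')$; the sequential-composition rule closes the case. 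For the two conditional rules, the boolean guard depends only on $s$, so whichever branch was chosen in the premise can again be chosen, and the induction hypothesis delivers the desired extended derivation.

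The only mildly delicate step is the predicate case, where one must be careful that the multiset algebra (associativity and commutativity of $\uplus$) lines up with the syntactic pattern required by the rule; this is the step where the ``locality'' of separation logic really enters, since the chosen witness chunk is taken out of $h$ and is guaranteed still to be present in $h \uplus h''$ without interference from the added chunks. Everything else is a routine structural recursion, and no additional hypotheses about the pattern variables or the shape of $h''$ are needed.
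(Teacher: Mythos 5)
Your proof is correct and follows essentially the same route as the paper, which simply notes that the property is ``proved easily by induction on the assertion''; your rule induction on the derivation coincides with structural induction on $a$ here because the consumption rules are syntax-directed. The case analysis, the use of associativity of $\uplus$ in the predicate case, and the threading of the frame $h''$ through the two premises of the separating-conjunction rule are all exactly what the paper's one-line proof leaves implicit.
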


Secondly, consumption is monotonic: if it succeeds, then the resulting heap is a sub-multiset of the original heap, and
consumption also succeeds if only the consumed chunks are available, and then it yields the empty heap.
\begin{lem}[Consumption Monotonicity]\label{lemma:conadd}
\display{(s, h) \xrightarrow{a}_\mathsf{c} (s', h') \Rightarrow \exists h''.\;h = h' \uplus h'' \land (s, h'') \xrightarrow{a}_\mathsf{c} (s', \mathbf{0})}
\end{lem}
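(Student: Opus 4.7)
The plan is to proceed by induction on the derivation of $(s, h) \xrightarrow{a}_\mathsf{c} (s', h')$, producing in each case the witness $h''$ explicitly and then re-deriving the consumption arrow from $h''$ alone. The four inference rules give five cases to check, and in each case I will build $h''$ out of the chunks that were actually removed by the derivation tree.

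For the boolean case $a = b$, the rule gives $h' = h$ and $s' = s$, so I take $h'' = \mathbf{0}$; then $h = h' \uplus \mathbf{0}$ and the boolean rule applies again to $(s, \mathbf{0})$ because $\llbracket b\rrbracket_s = \mathsf{true}$ is unchanged. For the predicate case $a = p(\overline{e}, \overline{?x})$, the premise already tells us $h = \llbrace p(\llbracket \overline{e}\rrbracket_s, \overline{v})\rrbrace \uplus h'$, so I set $h'' = \llbrace p(\llbracket \overline{e}\rrbracket_s, \overline{v})\rrbrace$ and apply the predicate rule to $(s, h'')$ to obtain $(s[\overline{x}:=\overline{v}], \mathbf{0})$. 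The two conditional rules are immediate: apply the induction hypothesis to the subderivation, and reuse the same conditional rule on the resulting $h''$, since the guard is evaluated under the unchanged store $s$.

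The main obstacle, and the only step that actually uses Consumption Locality, is the separating conjunction case $a = a_1 * a_2$. Here the derivation factors through some intermediate state $(s_1, h_1)$ with $(s, h) \xrightarrow{a_1}_\mathsf{c} (s_1, h_1)$ and $(s_1, h_1) \xrightarrow{a_2}_\mathsf{c} (s', h')$. The induction hypothesis on the first premise gives some $h_1''$ with $h = h_1 \uplus h_1''$ and $(s, h_1'') \xrightarrow{a_1}_\mathsf{c} (s_1, \mathbf{0})$; on the second premise it gives some $h_2''$ with $h_1 = h' \uplus h_2''$ and $(s_1, h_2'') \xrightarrow{a_2}_\mathsf{c} (s', \mathbf{0})$. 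I then take $h'' = h_1'' \uplus h_2''$, which satisfies $h = h_1 \uplus h_1'' = h' \uplus h_2'' \uplus h_1'' = h' \uplus h''$ by associativity and commutativity of $\uplus$. To re-derive the arrow, apply Lemma~\ref{lemma:conlocal} to $(s, h_1'') \xrightarrow{a_1}_\mathsf{c} (s_1, \mathbf{0})$ with extra chunks $h_2''$, obtaining $(s, h_1'' \uplus h_2'') \xrightarrow{a_1}_\mathsf{c} (s_1, h_2'')$; then compose with $(s_1, h_2'') \xrightarrow{a_2}_\mathsf{c} (s', \mathbf{0})$ via the separating conjunction rule to conclude $(s, h'') \xrightarrow{a_1 * a_2}_\mathsf{c} (s', \mathbf{0})$.

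The delicate point is precisely this interplay between the two induction hypotheses: the first hypothesis yields a consumption from the minimal heap $h_1''$ ending in an empty heap, but to chain it with the second part we need it to end in exactly $h_2''$ so that $a_2$ can then be consumed. Locality is what bridges this gap, and it is used in an essential way only for the separating conjunction; the remaining cases merely repackage the witness directly from the derivation.
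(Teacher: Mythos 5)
Your proof is correct and matches the paper's approach: the paper dispatches this lemma (together with Locality and Production-after-Consumption) with the single remark that all are ``proved easily by induction on the assertion,'' and your induction on the derivation is exactly that argument spelled out, including the one genuinely non-immediate step --- invoking Lemma~\ref{lemma:conlocal} in the separating-conjunction case to turn the first induction hypothesis's arrow ending in $\mathbf{0}$ into one ending in $h_2''$ so the two halves compose.
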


Thirdly,
production is the converse of consumption: production adds back the chunks removed by consumption.
\begin{lem}[Production after Consumption (Arrows)]\label{lemma:conprod}
\display{(s, h) \xrightarrow{a}_\mathsf{c} (s', \mathbf{0}) \Rightarrow (s, h'') \xrightarrow{a}_\mathsf{p} (s', h'' \uplus h)}
\end{lem}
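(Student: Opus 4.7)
The plan is to prove the statement by induction on the structure of the assertion $a$, or equivalently on the derivation of $(s, h) \xrightarrow{a}_\mathsf{c} (s', \mathbf{0})$. In each case the hypothesis pins down enough information about $h$ and $s'$ that a matching production derivation can be assembled.

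In the boolean case $a = b$, the consumption rule leaves both store and heap unchanged, so $s' = s$ and $h = \mathbf{0}$; since $\llbracket b\rrbracket_s = \mathsf{true}$ was a premise of the consumption derivation, the production rule for booleans applies immediately and yields $(s, h'') \xrightarrow{b}_\mathsf{p} (s, h'')$, which coincides with $(s', h'' \uplus \mathbf{0})$. In the atomic predicate case $a = p(\overline{e}, \overline{?x})$, the consumption rule with result heap $\mathbf{0}$ forces $h = \llbrace p(\llbracket \overline{e}\rrbracket_s, \overline{v})\rrbrace$ and $s' = s[\overline{x}:=\overline{v}]$, after which the production rule applied with the same $\overline{v}$ to input heap $h''$ yields exactly $h'' \uplus h$. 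The conditional case is dispatched in the same spirit: the boolean side-condition carried in the consumption derivation selects a branch, to which the induction hypothesis is applied, and the matching production rule for conditionals is then invoked.

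The main obstacle is the separating conjunction case $a = a_1 * a_2$. From the consumption derivation we only know there is some intermediate $(s_1, h_1)$ with $(s, h) \xrightarrow{a_1}_\mathsf{c} (s_1, h_1)$ and $(s_1, h_1) \xrightarrow{a_2}_\mathsf{c} (s', \mathbf{0})$, so the first step does not end in the empty heap and the induction hypothesis for $a_1$ cannot be applied directly. This is exactly what Consumption Monotonicity (Lemma~\ref{lemma:conadd}) repairs: it delivers an $h_a$ with $h = h_1 \uplus h_a$ and $(s, h_a) \xrightarrow{a_1}_\mathsf{c} (s_1, \mathbf{0})$. Now the induction hypothesis for $a_1$, applied to the arbitrary heap $h''$, gives $(s, h'') \xrightarrow{a_1}_\mathsf{p} (s_1, h'' \uplus h_a)$, and the induction hypothesis for $a_2$, applied to the heap $h'' \uplus h_a$, gives $(s_1, h'' \uplus h_a) \xrightarrow{a_2}_\mathsf{p} (s', h'' \uplus h_a \uplus h_1)$.

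Finally, composing these two production steps with the production rule for $*$ yields $(s, h'') \xrightarrow{a_1 * a_2}_\mathsf{p} (s', h'' \uplus h_a \uplus h_1)$, and associativity/commutativity of $\uplus$ together with $h = h_1 \uplus h_a$ rewrites the target heap as $h'' \uplus h$, completing the case. The only delicate point worth flagging is that the induction hypothesis for $a_1$ must be usable at every heap $h''$, which is why it is formulated uniformly in $h''$ in the lemma statement; this is what allows the intermediate heap $h'' \uplus h_a$ in the $a_2$-step to be provided by the hypothesis on $a_2$ without any further monotonicity argument on the production side.
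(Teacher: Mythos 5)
Your proof is correct and follows essentially the same route as the paper, which disposes of this lemma (together with Lemmas~\ref{lemma:conlocal} and \ref{lemma:conadd}) simply ``by induction on the assertion.'' Your use of Consumption Monotonicity (Lemma~\ref{lemma:conadd}) to reduce the first conjunct of a separating conjunction to an empty-residue consumption, and your observation that the induction hypothesis must be kept uniform in $h''$, are exactly the details the paper leaves implicit.
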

All of these properties are proved easily
by induction on the assertion.

From these properties of the consumption and production arrows, we can 
easily derive corresponding properties of the consumption and production 
mutators:
\begin{lem}[Consumption and Production (with Post-stores)]
\display{\begin{array}{l}
s_1 \leftarrow \mathsf{with}(s, \mathsf{consume}(a); \mathsf{store});\\
s_2 \leftarrow \mathsf{with}(s, \mathsf{produce}(a); \mathsf{store});\\
C(s_1, s_2)\\
\Rrightarrow\\
\bigoplus s'.\;C(s', s')
\end{array}}
\end{lem}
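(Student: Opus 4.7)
The plan is to reduce coverage to postcondition satisfaction, unfold the left-hand side into an angelic-then-demonic choice over consumption and production arrows, and then use Consumption Monotonicity and Production after Consumption to feed the consumed heap back in through production with an identical post-store. I would first fix an arbitrary input state $(s_0, h_0)$ and postcondition $Q$, assume that the LHS mutator applied to $(s_0, h_0)$ satisfies $Q$, and set out to exhibit some $s'$ with $C(s', s')(s_0, h_0)\,\{Q\}$.

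Next, I would unfold the two $\mathsf{with}(s, \cdot)$ wrappers. Using the arrow-based characterisation of $\mathsf{consume}$ and $\mathsf{produce}$ given just above, together with the Satisfaction of Sequential Composition with Answers lemma and the definitions of $\mathsf{store}$ and $\mathsf{store} := \cdot$, the hypothesis becomes: there exists an arrow $(s, h_0) \xrightarrow{a}_\mathsf{c} (s_1^\star, h_1)$ such that for every arrow $(s, h_1) \xrightarrow{a}_\mathsf{p} (s_2^\star, h_2)$ one has $C(s_1^\star, s_2^\star)(s_0, h_2)\,\{Q\}$.

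The crux of the argument is then to exhibit a specific production arrow from $(s, h_1)$ whose target state is $(s_1^\star, h_0)$. Applying Consumption Monotonicity (Lemma~\ref{lemma:conadd}) to the witnessing consumption arrow yields a decomposition $h_0 = h_1 \uplus h''$ together with $(s, h'') \xrightarrow{a}_\mathsf{c} (s_1^\star, \mathbf{0})$. Feeding this reduced arrow into Production after Consumption (Lemma~\ref{lemma:conprod}) with the free heap parameter instantiated to $h_1$ yields $(s, h_1) \xrightarrow{a}_\mathsf{p} (s_1^\star, h_1 \uplus h'') = (s_1^\star, h_0)$. Instantiating the universal quantifier in the hypothesis with this production arrow gives $C(s_1^\star, s_1^\star)(s_0, h_0)\,\{Q\}$, so $s' := s_1^\star$ witnesses the required angelic alternative on the right-hand side.

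The main obstacle I anticipate is not the arrow-level step — which is essentially a two-line application of the two earlier lemmas — but the initial bookkeeping inside the nested $\mathsf{with}(s, \cdot)$ blocks, which temporarily install the store $s$ for the duration of consumption or production and then restore the outer store $s_0$, all while heap updates persist. A clean way to handle this is to first prove a small auxiliary equation rewriting $\mathsf{with}(s, \mathsf{consume}(a); \mathsf{store})$ applied to $(s_0, h_0)$ as the angelic outcome $\bigoplus\{\langle(s_0, h_1), s_1^\star\rangle \mid (s, h_0) \xrightarrow{a}_\mathsf{c} (s_1^\star, h_1)\}$, and symmetrically for production with $\bigotimes$; once these are in place, the rest of the reasoning proceeds exactly as sketched above.
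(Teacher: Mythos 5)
Your proposal is correct and matches the paper's intended argument: the paper states only that this lemma follows ``easily'' from the arrow-level properties, and your derivation --- unfolding the $\mathsf{with}$ wrappers via the arrow characterisation of $\mathsf{consume}$/$\mathsf{produce}$, then chaining Consumption Monotonicity with Production after Consumption to manufacture a production arrow back to $(s_1^\star, h_0)$ and instantiating the demonic choice with it --- is exactly that easy derivation. The bookkeeping you flag (answers of $\mathsf{with}$ being the post-consumption and post-production stores, heap changes persisting across the store restoration) is handled correctly.
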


\begin{lem}[Consumption and Production]
\display{\mathsf{with}(s, \mathsf{consume}(a)); \mathsf{with}(s, \mathsf{produce}(a)) \Rrightarrow \mathsf{noop}}
\end{lem}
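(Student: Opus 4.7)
The plan is to unfold the LHS into an explicit angelic-then-demonic choice parameterised by the consumption and production arrows, and then apply the structural lemmas on those arrows to exhibit the identity outcome $\langle(s_0,h_0)\rangle$ as one of the demonic options that must be accepted. Concretely, fix an arbitrary input state $\sigma_0 = (s_0, h_0)$. Using the definition of $\mathsf{with}$ (which runs its body under the supplied store and then restores the outer store $s_0$) together with the arrow characterisation of $\mathsf{consume}$ and $\mathsf{produce}$ given in the preceding lemma, the outcome of the LHS applied to $\sigma_0$ rewrites to
\[
\bigoplus (s_c,h_c),\, (s,h_0) \xrightarrow{a}_\mathsf{c} (s_c,h_c).\;
\bigotimes (s_p,h_p),\, (s,h_c) \xrightarrow{a}_\mathsf{p} (s_p,h_p).\;
\langle (s_0, h_p)\rangle.
\]
The essential observation here is that the post-stores $s_c$ and $s_p$ produced by consumption and production are discarded by the outer $\mathsf{with}$; only the heap updates survive into the final state.

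Next, fix a postcondition $Q$ satisfied by this outcome. By the satisfaction rules for angelic and demonic choice, there is $(s_c, h_c)$ with $(s, h_0) \xrightarrow{a}_\mathsf{c} (s_c, h_c)$ such that $(s_0, h_p) \in Q$ for every $(s_p, h_p)$ with $(s, h_c) \xrightarrow{a}_\mathsf{p} (s_p, h_p)$. I would then apply Consumption Monotonicity (Lemma~\ref{lemma:conadd}) to split $h_0 = h_c \uplus h_d$ with $(s, h_d) \xrightarrow{a}_\mathsf{c} (s_c, \mathbf{0})$, and feed $h_d$ together with the frame $h_c$ into Production after Consumption (Lemma~\ref{lemma:conprod}) to obtain $(s, h_c) \xrightarrow{a}_\mathsf{p} (s_c, h_c \uplus h_d) = (s_c, h_0)$. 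Hence $(s_0, h_0)$ occurs among the demonic options and must therefore lie in $Q$, so $\mathsf{noop}(\sigma_0) = \langle \sigma_0\rangle$ satisfies $Q$, which is exactly what coverage requires.

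The main obstacle is the bookkeeping around $\mathsf{with}$: it is crucial that both the consume and the produce are wrapped, so that neither the post-store from consumption leaks into production nor the post-store from production leaks into the final state. Without the wrappers, the coverage would fail as soon as $a$ bound a pattern variable, since consume and produce would in general pick different witnesses for it; with the wrappers, the pattern-variable bindings are forgotten and only the heap transformation matters, which is exactly what Lemmas~\ref{lemma:conadd} and~\ref{lemma:conprod} together constrain. A more concise alternative would be to instantiate the preceding lemma with the constant continuation $C(s_1,s_2) := \mathsf{noop}$ and appeal to monotonicity of sequential composition together with the observation that $\bigoplus s'.\;\mathsf{noop}$ covers $\mathsf{noop}$, but the direct route above makes the roles of the two arrow lemmas transparent.
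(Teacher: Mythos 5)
Your proof is correct and follows the route the paper intends: the paper leaves this lemma as an easy consequence of the arrow lemmas, and your derivation via the arrow characterisation of $\mathsf{consume}$/$\mathsf{produce}$, Consumption Monotonicity, and Production after Consumption is exactly that derivation, with the key witness $(s,h_c)\xrightarrow{a}_\mathsf{p}(s_c,h_c\uplus h_d)=(s_c,h_0)$ correctly exhibiting $\langle(s_0,h_0)\rangle$ among the demonic alternatives. Your closing remark about instantiating the preceding ``with Post-stores'' lemma with $C(s_1,s_2)=\mathsf{noop}$ is the other reading of the paper's intent, so nothing is missing.
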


The first lemma states that consuming an assertion and then 
producing the same assertion starting from the same store, and then 
performing some mutator $C({-}, {-})$ parameterized by the output 
stores of the consumption and production, safely approximates doing 
nothing to the heap and angelically picking a store and performing $C$ 
using this store for both parameters. The second lemma is a simplified 
version that ignores the output stores: consuming an assertion and then 
producing the same assertion starting from the same store safely 
approximates doing nothing.

\subsubsection{Locality and Modifies}

Two important but simple properties of semiconcrete execution are that it 
is local and that it modifies only the command's targets. Locality means 
that execution under some initial heap and then adding more chunks safely 
approximates first adding those chunks and then executing.

\begin{defi}{Locality}
$$\mathsf{local}\;C\quad\Leftrightarrow\quad \forall h.\;C{;}, \mathsf{produce\_chunks}(h) \Rrightarrow \mathsf{produce\_chunks}(h); C$$
\end{defi}

\begin{lem}[Locality of Semiconcrete Execution]
$$\mathsf{local}\;\mathsf{scexec}(c)$$
\end{lem}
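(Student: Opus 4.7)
The plan is to prove locality of $\mathsf{scexec}(c)$ by structural induction on $c$, after first establishing a collection of auxiliary locality results that handle the building blocks of semiconcrete execution.

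First I would show that the mutator combinators used in the definition of $\mathsf{scexec}$ preserve locality: sequential composition $C_1; C_2$, side-effect-only composition $C_1{;}, C_2$, the store-saving combinator $\mathsf{with}(s, -)$, and binary and infinitary demonic and angelic choice of local mutators. Each of these is a short diagram chase using associativity and monotonicity of sequential composition from the previous subsection, together with the fact that $\mathsf{produce\_chunks}(h')$ touches only the heap. This reduces the problem to checking locality of the primitive mutators. The heap-independent primitives --- $\mathsf{eval}(e)$, $\mathsf{assume}(b)$, $\mathsf{assert}(b)$, $x := v$, $\mathsf{store}$, $\mathsf{store} := s'$, $\mathsf{yield}\ a$, $\mathsf{noop}$, and $\mathsf{havoc}(\overline{x})$ --- are trivially local, since extra chunks in $h'$ simply ride along under $\uplus$. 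The mutator $\mathsf{produce\_chunks}(h)$ is local by associativity and commutativity of $\uplus$, and $\mathsf{consume\_chunks}(h')$ is local by Lemma~\ref{lemma:conlocal}. A subsidiary induction on the structure of assertions then yields locality of $\mathsf{consume}(a)$ and $\mathsf{produce}(a)$ for every $a$.

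With these lemmas in hand, the main induction on $c$ is direct in the cases of assignment, sequential composition, conditional, routine call, $\mathbf{open}$, $\mathbf{close}$, $\mathbf{malloc}$, memory read, memory write, and $\mathbf{free}$: each is built from local mutators via locality-preserving combinators, so locality follows immediately.

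The hard part is the while loop case, because the loop definition uses $\mathsf{heap} := \mathbf{0}$ and $\mathsf{leakcheck}$, and neither is local on its own --- for example, $\mathsf{leakcheck}{;}, \mathsf{produce\_chunks}(h')$ succeeds on the empty heap even when $h' \ne \mathbf{0}$, while $\mathsf{produce\_chunks}(h'); \mathsf{leakcheck}$ then fails. I would dispatch this case by unfolding the loop and analyzing the two demonic branches separately. In the first branch, the $\mathsf{heap} := \mathbf{0}$ step erases any extra chunks that $\mathsf{produce\_chunks}(h')$ may have injected earlier, so the subsequent body execution and final leakcheck are insensitive to $h'$; furthermore, $\mathsf{leakcheck}$ yields only $\top$ or $\bot$, both of which absorb any trailing $\mathsf{produce\_chunks}(h')$. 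In the second branch, no leakcheck occurs, so $h'$ simply threads through consumption and production of the invariant by Lemma~\ref{lemma:conlocal} and ends up in the final heap in both orderings, up to commutativity of $\uplus$. Locality of the entire loop then follows from locality-preservation under demonic choice, completing the induction.
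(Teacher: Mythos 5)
The paper itself gives no proof of this lemma: it is asserted as one of ``two important but simple properties'' of semiconcrete execution, with the proof living only in the Coq development. Your proposal supplies exactly the argument the paper is implicitly relying on -- structural induction on $c$ (which is legitimate here, since $\mathsf{scexec}$, unlike $\mathsf{exec}$, is defined by structural recursion), reduced via lemmas that sequential composition, $\mathsf{with}$, and demonic/angelic choice preserve locality, down to locality of the primitive mutators and of $\mathsf{consume}(a)$/$\mathsf{produce}(a)$ by a sub-induction on assertions. Crucially, you correctly identify the one genuinely delicate point: $\mathsf{heap} := \mathbf{0}$ and $\mathsf{leakcheck}$ are each non-local, so the loop case cannot be discharged purely compositionally; your resolution -- that the branch $\mathsf{heap} := \mathbf{0}; \cdots; \mathsf{leakcheck}$ is local as a whole because the initial heap-clear makes a prepended $\mathsf{produce\_chunks}(h')$ vanish (note that the preceding consumption of the invariant still succeeds on the enlarged heap, by consumption locality and angelic choice) while the terminal $\top$/$\bot$ of $\mathsf{leakcheck}$ absorbs an appended one -- is exactly right, and the second branch is local componentwise. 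I see no gap.
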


\begin{defi}[Modifies]
$$s \stackrel{\overline{x}}{\sim} s' \Leftrightarrow s[\overline{x}:=0] = s'[\overline{x}:=0]$$

$$\mathsf{modified}_{\overline{x}}(s') = \lambda (s, h).\;\bigoplus s \stackrel{\overline{x}}{\sim} s'.\;\mathsf{noop}$$

$$\mathsf{modifies}_{\overline{x}}\;C\quad \Leftrightarrow\quad
\forall s.\;\mathsf{modified}_{\overline{x}}(s); C \Rrightarrow C{;},\mathsf{modified}_{\overline{x}}(s)$$
\end{defi}

\begin{lem}[Semiconcrete Execution Modifies Targets]
\display{\mathsf{modifies}_{\mathsf{targets}(c)}\;\mathsf{scexec}(c)}
\end{lem}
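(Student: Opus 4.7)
The plan is to prove the statement by structural induction on the command $c$, after first developing a small algebraic toolkit for the $\mathsf{modifies}_{\overline{x}}$ predicate. The essential algebraic lemmas are: weakening in the variable set ($\mathsf{modifies}_{\overline{x}}\;C$ and $\overline{x} \subseteq \overline{y}$ imply $\mathsf{modifies}_{\overline{y}}\;C$); closure under sequential composition ($\mathsf{modifies}_{\overline{x}}\;C_1$ and $\mathsf{modifies}_{\overline{y}}\;C_2$ yield $\mathsf{modifies}_{\overline{x} \cup \overline{y}}\;(C_1; C_2)$, with an analogue for the answer-passing form $x \leftarrow C_1; C_2(x)$); and closure under demonic and angelic choice over an arbitrary family. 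Each of these drops straight out of the definitions of sequential composition, satisfaction, and demonic/angelic choice.

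Next, I would verify the base cases: the primitive mutators $\mathsf{eval}(e)$, $\mathsf{assume}(b)$, $\mathsf{assert}(b)$, $\mathsf{noop}$, $\mathsf{yield}\;v$, $\mathsf{store}$, $\mathsf{leakcheck}$, $\mathsf{produce\_chunks}(h)$, and $\mathsf{consume\_chunks}(h)$ all modify $\emptyset$; $x := v$ modifies $\{x\}$; and $\mathsf{havoc}(\overline{x})$ modifies $\overline{x}$. The workhorse lemma is that $\mathsf{with}(s', C)$ modifies $\emptyset$ for any $C$ whatsoever: unfolding $\mathsf{with}(s', C) = s \leftarrow \mathsf{store}; \mathsf{store} := s'; C{;}, \mathsf{store} := s$, the trailing $\mathsf{store} := s$ resets the output store to the very store that was current at entry, so the outgoing store equals the incoming store on the nose. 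From this it is immediate that the wrapped uses of $\mathsf{consume}(a)$ and $\mathsf{produce}(a)$ inside $\mathsf{scexec}$ contribute no modifications (pattern-variable bindings do not leak out of the $\mathsf{with}$), and that routine calls and $\mathbf{open}$/$\mathbf{close}$ commands modify $\emptyset$, matching the (implicit) empty value of $\mathsf{targets}$ on these forms.

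With the toolkit in hand, the induction is routine for the straight-line cases: $x := e$, sequential composition, conditional, $x := \mathbf{malloc}(n)$, $x := [e]$, $[e] := e'$, $\mathbf{free}(e)$, $r(\overline{e})$, $\mathbf{open}$, and $\mathbf{close}$. For each, one unfolds $\mathsf{scexec}$, identifies the atomic contribution of each subterm using the base-case lemmas (or the induction hypothesis on subcommands), and stitches them together with the sequential-composition, choice, and weakening lemmas, noting that $\mathsf{targets}$ is exactly the union of the targets of the subcommands.

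The main obstacle is the while loop. There, $\mathsf{scexec}(\mathbf{while}\;b\;\mathbf{inv}\;a\;\mathbf{do}\;c_0)$ is a long composition whose two demonic branches contain the body $\mathsf{scexec}(c_0)$, which modifies $\mathsf{targets}(c_0)$ by the induction hypothesis, and an explicit $\mathsf{havoc}(\mathsf{targets}(c_0))$, sandwiched between several $\mathsf{with}$-wrapped consumption/production mutators, an $\mathsf{assume}$, a heap reset $\mathsf{heap} := \mathbf{0}$ (which leaves the store untouched), and a $\mathsf{leakcheck}$. Each sandwich piece modifies $\emptyset$, so by the sequential-composition and choice lemmas both branches modify $\mathsf{targets}(c_0)$; by weakening the whole loop modifies $\mathsf{targets}(c_0) = \mathsf{targets}(\mathbf{while}\;b\;\mathbf{inv}\;a\;\mathbf{do}\;c_0)$. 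The delicate point is only bookkeeping: ensuring that each captured store $s$ obtained by $s \leftarrow \mathsf{store}$ is used exclusively inside a $\mathsf{with}$, so that pattern-variable bindings introduced by producing the invariant cannot escape the step in which they are introduced.
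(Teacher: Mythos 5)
Your proposal is correct. The paper states this lemma without proof, merely calling it one of two ``important but simple properties'' of semiconcrete execution, so there is no written argument to compare against; your structural induction, supported by the closure lemmas for sequential composition, choice, and weakening, and by the key observation that $\mathsf{with}(s',C)$ restores the entry store and hence modifies nothing regardless of $C$ (which disposes of routine calls, $\mathbf{open}$/$\mathbf{close}$, and all the invariant consumption/production steps in the loop rule), is exactly the argument the authors are implicitly relying on. Your reading of $\mathsf{targets}$ as $\emptyset$ on $\mathbf{open}$ and $\mathbf{close}$, which the paper's definition of $\mathsf{targets}$ omits, is the right way to fill that small gap.
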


\subsubsection{Heap Refinement}

Having discussed the properties of assertion consumption and production, 
we now discuss the relationship between semiconcrete command execution and 
concrete command execution. For this purpose, we need to characterize the 
relationship between semiconcrete states and concrete states. 

We say that a concrete heap $h_\mathrm{c}$ \emph{refines} a semiconcrete 
heap $h$, denoted $h_\mathrm{c} \triangleleft h$, if $h$ can be obtained 
from $h_\mathrm{c}$ by \emph{closing} some finite number of user-defined 
predicate chunks. This is expressed formally using the three inference 
rules shown below. 
\begin{defi}[Heap refinement]
\begin{mathpar}
\inferrule{
h_\mathrm{c} \triangleleft h\\
\mathbf{predicate}\ p(\overline{x}) = a\\
(\mathbf{0}[\overline{x}:=\overline{v}], h) \xrightarrow{a}_\mathsf{c} (s', \mathbf{0})
}{
h_\mathrm{c} \triangleleft \llbrace p(\overline{v})\rrbrace
}
\and
\inferrule{
\ %
}{
h_\mathrm{c} \triangleleft h_\mathrm{c}
}
\and
\inferrule{
h_\mathrm{c} \triangleleft h\\
h_\mathrm{c}' \triangleleft h'
}{
h_\mathrm{c} \uplus h_\mathrm{c}' \triangleleft h \uplus h'
}
\end{mathpar}
\end{defi}
The first rule states that if a concrete heap 
$h_\mathrm{c}$ refines a heap $h$ that satisfies the body $a$ of some 
predicate $p$, with no chunks left, when consumed under a store that binds 
the predicate parameters $\overline{x}$ to some argument list 
$\overline{v}$, then it refines the singleton heap containing just the 
chunk $p(\overline{v})$. The second rule states that any heap refines 
itself. The third rule states that heap refinement is compatible with heap 
union.

Notice that there are typically many concrete heaps that refine a given 
semiconcrete heap. Consider for example the semiconcrete heap $\llbrace 
\mathsf{list}(50)\rrbrace$, where predicate $\mathsf{list}$ is defined as 
in the example earlier. Any concrete heap that contains exactly a linked 
list starting at address 50 refines this semiconcrete heap. There are 
infinitely many such concrete heaps, corresponding to different list 
lengths, different addresses of nodes, and different values stored in the 
nodes.

The following property of heap refinement allows us to fold and unfold predicate definitions:
\begin{lem}[Open, Close]
$$h_\mathrm{c} \triangleleft h \uplus \llbrace p(\overline{v})\rrbrace
\Leftrightarrow \exists s', h'.\; (\mathbf{0}[\overline{x}:=\overline{v}], h') \xrightarrow{a}_\mathsf{c} (s', \mathbf{0}) \land
h_\mathrm{c} \triangleleft h \uplus h'$$
where $\mathbf{predicate}\ p(\overline{x}) = a$
\end{lem}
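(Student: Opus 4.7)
The plan is to prove the two directions of the equivalence separately, each by induction on the structure of the heap refinement derivation. For the $\Leftarrow$ direction, I will first establish an auxiliary decomposition lemma: whenever $h_\mathrm{c} \triangleleft h_1 \uplus h_2$, the concrete heap can be split as $h_\mathrm{c} = h_{\mathrm{c},1} \uplus h_{\mathrm{c},2}$ with $h_{\mathrm{c},1} \triangleleft h_1$ and $h_{\mathrm{c},2} \triangleleft h_2$. This is proved by induction on the refinement derivation: the closing-rule case (where $h_1 \uplus h_2 = \llbrace p(\overline{v})\rrbrace$) and the identity-rule case are both handled by a trivial assignment of the pieces; in the union-rule case I further split each premise according to how its semiconcrete side intersects with $h_1$ and $h_2$, applying the induction hypothesis twice and then re-assembling the four resulting pieces via the union rule. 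Given this lemma, the $\Leftarrow$ direction is immediate: decompose $h_\mathrm{c} = h_{\mathrm{c},1} \uplus h_{\mathrm{c},2}$ with $h_{\mathrm{c},1} \triangleleft h$ and $h_{\mathrm{c},2} \triangleleft h'$, apply the closing rule to $h_{\mathrm{c},2} \triangleleft h'$ using the assumption $(\mathbf{0}[\overline{x}:=\overline{v}], h') \xrightarrow{a}_\mathsf{c} (s', \mathbf{0})$ to obtain $h_{\mathrm{c},2} \triangleleft \llbrace p(\overline{v})\rrbrace$, and combine with the union rule to conclude $h_\mathrm{c} \triangleleft h \uplus \llbrace p(\overline{v})\rrbrace$.

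For the $\Rightarrow$ direction, I induct directly on the derivation of $h_\mathrm{c} \triangleleft h \uplus \llbrace p(\overline{v})\rrbrace$. The identity-rule case is vacuous, since it would force the concrete heap $h_\mathrm{c}$ to contain the user-defined chunk $p(\overline{v})$, which is impossible since $\mathit{CChunks}$ contains only built-in chunk names. The closing-rule case forces $h = \mathbf{0}$ and directly yields a witness $h''$ together with a derivation $(\mathbf{0}[\overline{x}:=\overline{v}], h'') \xrightarrow{a}_\mathsf{c} (s', \mathbf{0})$ and $h_\mathrm{c} \triangleleft h''$, so I take $h' = h''$. In the union-rule case, the chunk $p(\overline{v})$ lies in exactly one of the two parts of the semiconcrete union; the induction hypothesis applied to that part produces a witness $h'$, and re-applying the union rule with the other, untouched part delivers $h_\mathrm{c} \triangleleft h \uplus h'$ as required.

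The main obstacle is the decomposition lemma for the $\Leftarrow$ direction: the multiset equation $h_1 \uplus h_2 = h \uplus h'$ admits many simultaneous splittings, and in the union-rule case one must split each of the two subderivations along the chosen cut and then re-assemble the four resulting pieces consistently with the original split. Once this bookkeeping is pinned down, the remaining work is a short case analysis driven by the observation that concrete heaps contain no user-defined chunks, which cleanly rules out the identity rule whenever a user-defined chunk appears on the semiconcrete side.
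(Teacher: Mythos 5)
Your proof is correct. Note that the paper states this lemma without giving any proof at all, so there is no official argument to compare against; yours fills the gap in what is surely the intended way. Both directions go by induction on the refinement derivation, and the one genuinely non-trivial ingredient --- the auxiliary decomposition lemma asserting that $h_\mathrm{c} \triangleleft h_1 \uplus h_2$ splits as $h_\mathrm{c} = h_{\mathrm{c},1} \uplus h_{\mathrm{c},2}$ with $h_{\mathrm{c},1} \triangleleft h_1$ and $h_{\mathrm{c},2} \triangleleft h_2$ --- is exactly the right thing to isolate; its union-rule case rests on the Riesz-style cross-splitting of a multiset equation $h' \uplus h'' = h_1 \uplus h_2$ into four compatible pieces, which you correctly identify as the main bookkeeping burden and which does hold pointwise for $\mathbb{N}$-valued multisets. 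Two small points. First, in the union-rule case of the $\Rightarrow$ direction, the chunk $p(\overline{v})$ lies in \emph{at least} one of the two parts rather than exactly one (it may occur in both if its multiplicity in $h$ is positive); since you only need to attribute one occurrence to one part and apply the induction hypothesis there, the argument is unaffected. Second, your dismissal of the identity-rule case depends on $p$ being a user-defined predicate, which is indeed guaranteed by the side condition $\mathbf{predicate}\ p(\overline{x}) = a$, so that step is sound.
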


\subsubsection{Soundness of Semiconcrete Execution of Commands}

Given the refinement relation, we can define a \emph{refinement mutator} 
$\kappa$ that takes a semiconcrete state as input and outputs a 
demonically chosen concrete state such that the output heap refines the 
input heap. 

\begin{defi}{Refinement Mutator}
$$\kappa = \lambda (s, h).\;\bigotimes h_\mathrm{c} \in \mathit{CHeaps}, h_\mathrm{c} \triangleleft h.\;\langle(s, h_\mathrm{c})\rangle$$
\end{defi}

Given the refinement mutator, we can state the main lemma for the 
soundness of semiconcrete execution.
\begin{lem}[Soundness of Semiconcrete Execution of Commands]
If $\forall r.\;\mathsf{valid}(r)$, then
$$\mathsf{scexec}(c); \kappa \Rrightarrow \kappa; \mathsf{exec}(c)$$
\end{lem}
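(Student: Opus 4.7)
The plan is to strengthen the statement to: for every $n \in \mathbb{N}$ and every command $c$, $\mathsf{scexec}(c); \kappa \Rrightarrow \kappa; \mathsf{exec}_n(c)$. The original goal then follows because $\mathsf{exec}(c) = \bigotimes_n \mathsf{exec}_n(c)$ and satisfaction of a postcondition by a demonic choice reduces to satisfaction by each alternative. I proceed by induction on $n$, with an inner case analysis on the shape of $c$. The base case $n = 0$ is immediate because $\mathsf{exec}_0(c) = \top$, which satisfies every postcondition.

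For the inductive step, the easy cases---assignment, sequential composition, and conditional---follow from monotonicity and associativity of sequential composition of mutators, combined with the observation that $\kappa$ only touches the heap and therefore commutes through store-only mutators such as $\mathsf{eval}(e)$ and $x := v$. For the primitive heap commands ($\mathbf{malloc}$, $x := [e]$, $[e] := e'$, $\mathbf{free}(e)$), the semiconcrete and concrete rules have the same shape except that $\mathsf{produce\_chunks}$ drops the no-clash side-condition imposed by $\mathsf{cproduce\_chunks}$; compositionality of the refinement relation $\triangleleft$ (its third inference rule) together with Consumption Monotonicity (Lemma~\ref{lemma:conadd}) lets me match every concrete alternative on the right with a semiconcrete alternative on the left. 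For $\mathbf{open}$ and $\mathbf{close}$, which $\mathsf{exec}_n$ treats as no-ops, the Open, Close lemma lets $\kappa$ absorb the unfolding or folding step into the choice of refinement.

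The routine call case is where the hypothesis $\forall r.\;\mathsf{valid}(r)$ enters. Unfolding $\mathsf{valid}(r)$ and applying the inductive hypothesis to the body $c$ of $r$ yields: for every $\overline{v}$, $\mathsf{exec}_n(c)$ started from any concrete state whose heap refines a heap produced by the precondition $a$ from the empty heap under the parameter store ends in a state whose heap refines a heap consumed wholly by the postcondition $a'$. Locality of semiconcrete execution extends this to arbitrary additional chunks supplied by the caller, and the Consumption and Production (with Post-stores) lemma then shows that $\mathsf{consume}(a); \mathsf{produce}(a')$ followed by $\kappa$ is covered by $\kappa$ followed by $\mathsf{exec}_n$ of the body, which is exactly the routine-call goal.

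The while-loop case is the main obstacle. The semiconcrete loop splits into two demonic branches: the first consumes the invariant, produces a fresh state satisfying the invariant and $b$, runs the body, reconsumes the invariant, and performs a leak check (so it succeeds only if the body preserves the invariant without leaks); the second simply produces the invariant and assumes $\lnot b$. To cover the concrete iteration $(\mathsf{assume}(b); \mathsf{exec}_n(c))^k; \mathsf{assume}(\lnot b)$, I would perform a secondary induction on $k$, using the first semiconcrete branch (via the inductive hypothesis on the body and the Locality lemma) to propagate invariant preservation across a single iteration and the second branch to cover the exit. The Modifies lemma is essential here to ensure that variables outside $\mathsf{targets}(c)$ retain their pre-loop values, so that the havoc at the start of the semiconcrete loop truly covers every intermediate store reachable by iterated concrete execution.
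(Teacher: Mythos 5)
Your proposal follows the paper's proof essentially verbatim: the same strengthening to $\forall n, c.\;\mathsf{scexec}(c); \kappa \Rrightarrow \kappa; \mathsf{exec}_n(c)$, induction on $n$ with a case analysis on $c$, the validity hypothesis plus locality and the consumption/production lemmas for the routine-call case, and the invariant-preservation argument (with $\mathsf{modifies}$, locality, and the leak check) for loops, where your secondary induction on the iteration count $k$ is exactly the paper's packaged rule that $C' \Rrightarrow \mathsf{noop}$ and $C' \Rrightarrow C; C'$ imply $C' \Rrightarrow C^*$. One terminological slip: in the routine-call paragraph you say the semiconcrete side ``is covered by'' the concrete side, whereas in the paper's convention the semiconcrete side \emph{covers} ($\Rrightarrow$) the concrete side; the surrounding argument makes clear you intend the correct direction.
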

\begin{proof}
It is sufficient to prove
$$\forall n, c.\;\mathsf{scexec}(c); \kappa \Rrightarrow \kappa; \mathsf{exec}_n(c)$$
By induction on $n$. The base case is trivial. Assume $\forall c.\;\mathsf{scexec}(c); \kappa \Rrightarrow \kappa; \mathsf{exec}_n(c)$. The goal is
$\forall c.\;\mathsf{scexec}(c); \kappa \Rrightarrow \kappa; \mathsf{exec}_{n+1}(c)$. By case analysis on $c$.
\end{proof}
It roughly states that, assuming that 
all routines are valid, executing a command semiconcretely starting from 
some semiconcrete state is worse than executing it concretely starting 
from a demonically chosen corresponding concrete state.

The lemma can be proven by induction on the depth of concrete execution 
and a case analysis on the command. Most cases are trivial; the nontrivial 
cases are routine calls, while loops, and $\mathbf{open}$ and 
$\mathbf{close}$ commands. The proofs of the latter cases use the 
properties of consumption and production.

Below, we sketch the proof in some more detail for the cases of routine 
calls and while loops.

\begin{proof}[Proof (Routine Calls)]
Assume routine definition $$\mathbf{routine}\ r(\overline{x})\ \mathbf{req}\ a\ \mathbf{ens}\ a' = c$$
The goal is $\mathsf{scexec}(r(\overline{e}));\kappa \Rrightarrow \kappa; \mathsf{cexec}_{n+1}(r(\overline{e}))$.
This expands to
\begin{multline*}
\overline{v} \leftarrow \mathsf{eval}(\overline{e});
\mathsf{with}(\mathbf{0}[\overline{x}:=\overline{v}], \mathsf{consume}(a); \mathsf{produce}(a'));
\kappa\\
\Rrightarrow
\kappa;
\overline{v} \leftarrow \mathsf{eval}(\overline{e});
\mathsf{with}(\mathbf{0}[\overline{x}:=\overline{e}], \mathsf{cexec}_n(c))
\end{multline*}
We have $\mathsf{eval}(e){;}, \kappa \Rrightarrow \kappa; \mathsf{eval}(e)$.
Furthermore, we have monotonicity of sequential composition of mutators with respect to coverage.
Therefore, it is sufficient to fix values $\overline{v}$ and prove
$$\mathsf{with}(\mathbf{0}[\overline{x}:=\overline{v}], \mathsf{consume}(a); \mathsf{produce}(a'));
\kappa
\Rrightarrow
\kappa;
\mathsf{with}(\mathbf{0}[\overline{x}:=\overline{v}], \mathsf{cexec}_n(c))$$
Let $s = \mathbf{0}[\overline{x}:=\overline{v}]$.
Furthermore, we abbreviate $\mathsf{with}$, $\mathsf{consume}$, $\mathsf{produce}$,
$\mathsf{scexec}$, and $\mathsf{exec}$ as $\mathsf{w}$, $\mathsf{c}$, $\mathsf{p}$, $\mathsf{sce}$, and $\mathsf{e}$, respectively.
The goal then becomes $$\mathsf{w}(s, \mathsf{c}(a); \mathsf{p}(a')); \kappa \Rrightarrow \kappa; \mathsf{w}(s, \mathsf{e}_n(c))$$

By the induction hypothesis, we have $\mathsf{sce}(c); \kappa \Rrightarrow \kappa; \mathsf{e}_n(c)$ and therefore
$\mathsf{w}(s, \mathsf{sce}(c)); \kappa \Rrightarrow \mathsf{w}(s, \mathsf{e}_n(c))$. By transitivity and monotonicity of coverage, it is sufficient
to prove\\
$\mathsf{w}(s, \mathsf{c}(a); \mathsf{p}(a')) \Rrightarrow \mathsf{w}(s, \mathsf{sce}(c))$.

By validity of $r$, we have
\begin{multline*}
(\mathbf{0}, \mathbf{0}) \triangleright
\bigotimes \overline{v}.\;
\mathsf{w}(\mathbf{0}[\overline{x}:=\overline{v}],
s' \leftarrow \mathsf{w}(\mathbf{0}[\overline{x}:=\overline{v}], \mathsf{p}(a); \mathsf{store});\\
\mathsf{sce}(c);
\mathsf{w}(s', \mathsf{c}(a'))
);
\mathsf{leakcheck}
\;\{\mathrm{true}\}
\end{multline*}
We abbreviate $\mathsf{w}(s, C; \mathsf{store})$ by $\mathsf{ws}(s, C)$.
Furthermore, we instantiate the demonic choice using our fixed $\overline{v}$, and we use $s = \mathbf{0}[\overline{x}:=\overline{v}]$, obtaining
$$(\mathbf{0}, \mathbf{0}) \triangleright
\mathsf{w}(s,
s' \leftarrow \mathsf{ws}(s, \mathsf{p}(a));
\mathsf{sce}(c);
\mathsf{w}(s', \mathsf{c}(a'))
);
\mathsf{leakcheck}
\;\{\mathrm{true}\}$$
It is easy to see that it follows that for any store $s_0$ and heap $h_0$, we have
$(s_0, \mathbf{0}) \triangleright
\mathsf{w}(s,
s' \leftarrow \mathsf{ws}(s, \mathsf{p}(a));
\mathsf{sce}(c);
\mathsf{w}(s', \mathsf{c}(a'))
);
\mathsf{produce}(h_0)
\;\{s_1, h_1.\;s_1 = s_0 \land h_1 = h_0\}$.
By locality of assertion consumption, semiconcrete execution, and assertion production, we can shift $\mathsf{produce}(h_0)$ to the front,
obtaining $$(s_0, h_0) \triangleright
\mathsf{w}(s,
s' \leftarrow \mathsf{ws}(s, \mathsf{p}(a));
\mathsf{sce}(c);
\mathsf{w}(s', \mathsf{c}(a'))
)\;\{s_1, h_1.\;s_1 = s_0 \land h_1 = h_0\}$$
Hence,
$\mathsf{noop} \Rrightarrow 
\mathsf{w}(s,
s' \leftarrow \mathsf{ws}(s, \mathsf{p}(a));
\mathsf{sce}(c);
\mathsf{w}(s', \mathsf{c}(a'))$.

The goal now follows by simple rewriting, using the rewriting lemmas seen above for consumption followed by production:
$$\begin{array}{@{} l @{\ } l @{}}
& \mathsf{w}(s, \mathsf{c}(a); \mathsf{p}(a'))\\
\Rrightarrow & s_1{\leftarrow}\mathsf{ws}(s, \mathsf{c}(a)); \mathsf{w}(s_1, \mathsf{p}(a'))\\
\Rrightarrow & s_1 {\leftarrow} \mathsf{ws}(s, \mathsf{c}(a)); \mathsf{noop}; \mathsf{w}(s_1, \mathsf{p}(a'))\\
\Rrightarrow & s_1 {\leftarrow} \mathsf{ws}(s, \mathsf{c}(a)); \mathsf{w}(s,
  s' {\leftarrow} \mathsf{ws}(s, \mathsf{p}(a));
  \mathsf{sce}(c);
  \mathsf{w}(s', \mathsf{c}(a'))
); \mathsf{w}(s_1, \mathsf{p}(a'))\\
\Rrightarrow &
s_1 {\leftarrow} \mathsf{ws}(s, \mathsf{c}(a));
s_2 {\leftarrow} \mathsf{ws}(s, \mathsf{p}(a));
\mathsf{w}(s, \mathsf{sce}(c));
\mathsf{w}(s_2, \mathsf{c}(a'));
\mathsf{w}(s_1, \mathsf{p}(a'))\\
\Rrightarrow & \mathsf{w}(s, \mathsf{sce}(c));
\mathsf{w}(s'', \mathsf{c}(a'));
\mathsf{w}(s'', \mathsf{p}(a'))\\
\Rrightarrow & \mathsf{w}(s, \mathsf{sce}(c))
\end{array}\vspace{-\baselineskip}$$
\end{proof}

\begin{proof}[Proof (Loops)]
The goal is
$$\mathsf{sce}(\mathbf{while}\ b\ \mathbf{inv}\ a\ \mathbf{do}\ c); \kappa \Rrightarrow \kappa; \mathsf{e}_{n+1}(\mathbf{while}\ b\ \mathbf{inv}\ a\ \mathbf{do}\ c)$$
Expanding the definitions, and further abbreviating
$\mathsf{modified}$, $\mathsf{havoc}$, $\mathsf{assume}$, $\mathsf{leakcheck}$, $\mathsf{heap} := \mathbf{0}$, $\mathsf{targets}(c)$,
$s \leftarrow \mathsf{store}; \mathsf{with}(s, \mathsf{consume}(a))$, and
$s \leftarrow \mathsf{store};$ $\mathsf{with}(s, \mathsf{produce}(a))$ as
$\mathsf{m}$, $\mathsf{h}$, $\mathsf{a}$, $\mathsf{lck}$, $\mathsf{clh}$, $\overline{x}$, $\mathsf{cc}(a)$, and $\mathsf{pc}(a)$,
our goal reduces to
$$
\mathsf{cc}(a); \mathsf{h}(\overline{x});
(\mathsf{clh}; \mathsf{pc}(a); \mathsf{a}(b); \mathsf{sce}(c); \mathsf{cc}(a); \mathsf{lck}
\otimes
\mathsf{pc}(a); \mathsf{a}(\lnot b));
\kappa
\Rrightarrow
\kappa;
(\mathsf{a}(b); \mathsf{e}_n(c))^*;\mathsf{a}(\lnot b)
$$
Using the property $(\forall s.\;\mathsf{m}_{\overline{x}}(s); C \Rrightarrow C') \Rightarrow C \Rrightarrow C'$, and fixing $s$,
it is sufficient to prove
\begin{multline*}
\mathsf{m}_{\overline{x}}(s); \mathsf{cc}(a); \mathsf{h}(\overline{x});
(\mathsf{clh}; \mathsf{pc}(a); \mathsf{a}(b); \mathsf{sce}(c); \mathsf{cc}(a); \mathsf{lck}
\otimes
\mathsf{pc}(a); \mathsf{a}(\lnot b));
\kappa\\
\Rrightarrow
\kappa;
(\mathsf{a}(b); \mathsf{e}_n(c))^*;\mathsf{a}(\lnot b)
\end{multline*}\smallskip

\noindent We now prove the following lemma.
\begin{lem}
Assume $\mathsf{local}\;C$ and $\mathsf{modifies}_{\overline{x}}\;C$. We have
$$\mathsf{m}_{\overline{x}}(s); \mathsf{h}(\overline{x}); \mathsf{clh}; C; \mathsf{lck} \Rrightarrow \bot
\quad\lor\quad
\mathsf{m}_{\overline{x}}(s); \mathsf{h}(\overline{x}) \Rrightarrow C$$
\end{lem}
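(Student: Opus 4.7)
The plan is classical case analysis on whether the first disjunct holds. If it does, we are done, so suppose $\neg(L \Rrightarrow \bot)$, where $L$ denotes the mutator on the left of the first disjunct.

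First I would unfold $\neg(L \Rrightarrow \bot)$. Evaluating $L(s_0,h_0)$, the leading angelic $\mathsf{m}_{\overline{x}}(s)$ aborts unless $s_0 \stackrel{\overline{x}}{\sim} s$; $\mathsf{h}(\overline{x})$ then demonically ranges over $\overline{v}$, and $\mathsf{clh}$ resets the heap, so the outcome reduces to $\bigotimes \overline{v}.\,(C;\mathsf{lck})(s_0[\overline{x}{:=}\overline{v}],\mathbf{0})$. Because $\mathsf{lck}(s',h')$ equals $\top$ if $h'=\mathbf{0}$ and $\bot$ otherwise, the satisfaction lemma for sequential composition yields that $(C;\mathsf{lck})(\sigma)\{Q\}$ holds iff $C(\sigma)\{Q'\}$ holds, where $Q' = \{(s',h')\mid h'=\mathbf{0}\}$, independently of $Q$. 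Chasing the quantifiers, $\neg(L\Rrightarrow \bot)$ supplies some $(s_0,h_0)$ with $s_0\stackrel{\overline{x}}{\sim}s$ for which $C(s_0[\overline{x}{:=}\overline{v}],\mathbf{0})\{Q'\}$ holds for every $\overline{v}$. Since $s_0[\overline{x}{:=}\overline{v}]$ ranges over all stores $\stackrel{\overline{x}}{\sim}s$ as $\overline{v}$ varies, I extract the key fact: $C(s',\mathbf{0})\{Q'\}$ for every $s'\stackrel{\overline{x}}{\sim}s$.

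For the second disjunct, fix $(s_1,h_1)$. If $s_1\not\stackrel{\overline{x}}{\sim}s$ the LHS at this input is $\bot$ and coverage is trivial. Otherwise the LHS equals $\bigotimes\overline{v}.\,\langle(s_1[\overline{x}{:=}\overline{v}],h_1)\rangle$; by monotonicity of $\phi\{\cdot\}$ in the postcondition, coverage reduces to proving $C(s_1,h_1)\{Q_1\}$ with $Q_1 = \{(s_1[\overline{x}{:=}\overline{v}],h_1)\mid \overline{v}\}$.

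The main obstacle, and the step that uses both hypotheses, is lifting the empty-heap fact $C(s_1,\mathbf{0})\{Q'\}$ (which holds since $s_1\stackrel{\overline{x}}{\sim}s$) to $C(s_1,h_1)\{Q_1\}$. First, instantiating $\mathsf{modifies}_{\overline{x}}\,C$ at $s=s_1$ and evaluating the resulting mutator coverage at input $(s_1,\mathbf{0})$, the satisfaction lemma strengthens $C(s_1,\mathbf{0})\{Q'\}$ to $C(s_1,\mathbf{0})\{Q''\}$ where $Q'' = \{(s',h')\mid h'=\mathbf{0}\land s'\stackrel{\overline{x}}{\sim}s_1\}$. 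Next, instantiating $\mathsf{local}\,C$ at $h=h_1$ and evaluating at input $(s_1,\mathbf{0})$ produces a coverage whose right-hand side is exactly $C(s_1,h_1)$; applying the satisfaction lemma to this coverage at $Q_1$ shows that its left-hand side satisfies $Q_1$ iff $C(s_1,\mathbf{0})\{Q''\}$, which we have just established. Hence $C(s_1,h_1)\{Q_1\}$, completing the argument. The delicate bookkeeping is to verify that in both the modifies step and the locality step, the composed postcondition handed to $C(s_1,\mathbf{0})$ by the satisfaction lemma collapses to exactly $Q''$, so that the ``no leak'' content from $Q'$ and the ``store unchanged outside $\overline{x}$'' content from modifies combine in just the right way to let $\mathsf{produce\_chunks}(h_1)$ be commuted past $C$ without disturbing the store projection recorded in $Q_1$.
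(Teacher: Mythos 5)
Your proof is correct and follows essentially the same route as the paper's: extract from the failure of the first disjunct that $C$ maps every store $\stackrel{\overline{x}}{\sim} s$ with empty heap to empty-heap outputs, strengthen the postcondition with $\mathsf{modifies}_{\overline{x}}\;C$, and then commute $\mathsf{produce\_chunks}(h_1)$ past $C$ via $\mathsf{local}\;C$ to obtain the coverage at an arbitrary input heap. The only difference is that you spell out the postcondition bookkeeping (the collapse to $Q''$ and $Q_1$) that the paper leaves implicit.
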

\begin{proof}
We assume the left-hand disjunct is false and we prove the right-hand disjunct.
From this assumption it follows that there exists an initial state $(s_0, h_0)$ such that
$$(s_0, h_0) \triangleright \mathsf{m}_{\overline{x}}(s); \mathsf{h}(\overline{x}); \mathsf{clh}; C; \mathsf{lck}\;\{\mathrm{true}\}$$
It follows that $s_0 \stackrel{\overline{x}}{\sim} s$ and for any $s_1 \stackrel{\overline{x}}{\sim} s_0$ we have
$(s_1, \mathbf{0}) \triangleright C\;\{s', h'.\;h' = \mathbf{0}\}$.
Hence, by $\mathsf{modifies}_{\overline{x}}\;C$, we have
$(s_1, \mathbf{0}) \triangleright C\;\{s', h'.\;s' \stackrel{\overline{x}}{\sim} s_1 \land h' = \mathbf{0}\}$.
Hence, by $\mathsf{local}\;C$, we have, for any $h_1$,
$(s_1, h_1) \triangleright C\;\{s', h'.\;s' \stackrel{\overline{x}}{\sim} s_1 \land h' = h_1\}$.
From this our goal follows.
\end{proof}

We have $\mathsf{modifies}_{\overline{x}}\;\mathsf{pc}(a); \mathsf{a}(b); \mathsf{sce}(c); \mathsf{cc}(a)$ and
$\mathsf{local}\;\mathsf{pc}(a); \mathsf{a}(b); \mathsf{sce}(c); \mathsf{cc}(a)$; applying the lemma, we obtain
\begin{multline*}
\mathsf{m}_{\overline{x}}(s); \mathsf{h}(\overline{x}); \mathsf{clh}; \mathsf{pc}(a); \mathsf{a}(b); \mathsf{sce}(c); \mathsf{cc}(a); \mathsf{lck} \Rrightarrow \bot\\
\lor\;\mathsf{m}_{\overline{x}}(s); \mathsf{h}(\overline{x}) \Rrightarrow \mathsf{pc}(a); \mathsf{a}(b); \mathsf{sce}(c); \mathsf{cc}(a)
\end{multline*}
We consider both cases. In the first case, the goal follows trivially. In the remainder of the proof, we assume the second case.

Using the property $C_2 \Rrightarrow C_3 \Rightarrow C_1 \otimes C_2 \Rrightarrow C_3$,
we drop the left-hand side of the demonic choice in our goal. Our goal becomes
$$\mathsf{m}_{\overline{x}}(s); \mathsf{cc}(a); \mathsf{h}(\overline{x}); \mathsf{pc}(a); \mathsf{a}(\lnot b);
\kappa
\Rrightarrow
\kappa;
(\mathsf{a}(b); \mathsf{e}_n(c))^*;\mathsf{a}(\lnot b)$$
Applying the induction hypothesis, we have
$(\mathsf{a}(b); \mathsf{sce}(c))^*; \mathsf{a}(\lnot b); \kappa \Rrightarrow \kappa; (\mathsf{a}(b);$ $\mathsf{e}_n(c))^*;\mathsf{a}(\lnot b)$.
By transitivity of coverage and monotonicity of mutator sequential composition with respect to coverage, it is sufficient to prove
$$\mathsf{m}_{\overline{x}}(s); \mathsf{cc}(a); \mathsf{h}(\overline{x}); \mathsf{pc}(a) \Rrightarrow (\mathsf{a}(b); \mathsf{sce}(c))^*$$

Note that to prove $C' \Rrightarrow C^*$, it is sufficient to prove $C' \Rrightarrow \mathsf{noop}$ and $C' \Rrightarrow C; C'$. Applying this
rule to the goal, the first subgoal is easy to prove (using the properties of consumption followed by production). Our remaining goal is
$$\mathsf{m}_{\overline{x}}(s); \mathsf{cc}(a); \mathsf{h}(\overline{x}); \mathsf{pc}(a)
\Rrightarrow \mathsf{a}(b); \mathsf{sce}(c); \mathsf{m}_{\overline{x}}(s); \mathsf{cc}(a); \mathsf{h}(\overline{x}); \mathsf{pc}(a)$$

The goal now follows by simple rewriting, using the rewriting lemmas seen above for consumption followed by production,
as well as the properties $\mathsf{m}_{\overline{x}}(s) \Rrightarrow \mathsf{m}_{\overline{x}}(s); \mathsf{m}_{\overline{x}}(s)$,
$\mathsf{h}(\overline{x}) \Rrightarrow \mathsf{h}(\overline{x}); \mathsf{h}(\overline{x})$, and
$\mathsf{modifies}_{\overline{x}}\;\mathsf{sce}(c)$:
$$\begin{array}{@{} l l @{}}
& \mathsf{m}_{\overline{x}}(s); \mathsf{cc}(a); \mathsf{h}(\overline{x}); \mathsf{pc}(a)\\
\Rrightarrow & \mathsf{m}_{\overline{x}}(s); \mathsf{cc}(a); \mathsf{m}_{\overline{x}}(s_0); \mathsf{h}(\overline{x}); \mathsf{h}(\overline{x}); \mathsf{pc}(a)\\
\Rrightarrow & \mathsf{m}_{\overline{x}}(s); \mathsf{cc}(a); \mathsf{pc}(a); \mathsf{a}(b); \mathsf{sce}(c); \mathsf{cc}(a); \mathsf{h}(\overline{x}); \mathsf{pc}(a)\\
\Rrightarrow & \mathsf{m}_{\overline{x}}(s); \mathsf{a}(b); \mathsf{sce}(c); \mathsf{cc}(a); \mathsf{h}(\overline{x}); \mathsf{pc}(a)\\
\Rrightarrow & \mathsf{a}(b); \mathsf{sce}(c); \mathsf{m}_{\overline{x}}(s); \mathsf{cc}(a); \mathsf{h}(\overline{x}); \mathsf{pc}(a)
\end{array}\vspace{-\baselineskip}$$
\end{proof}

\subsection{Soundness of Semiconcrete Execution}

\begin{thm}[Soundness of Semiconcrete Execution]
$$\mathsf{sc}\textsf{\emph{-}}\mathsf{safe\_program}(c) \Rightarrow \mathsf{safe\_program}(c)$$
\end{thm}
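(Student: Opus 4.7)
The plan is to chain together the Soundness of Semiconcrete Execution of Commands lemma with the basic semantics of the refinement mutator $\kappa$. Assume $\mathsf{sc}\textsf{-}\mathsf{safe\_program}(c)$, which unfolds to the conjunction of $\forall r.\;\mathsf{valid}(r)$ and $\sigma_0 \triangleright \mathsf{scexec}(c)\;\{\mathrm{true}\}$, where $\sigma_0 = (\mathbf{0}, \mathbf{0})$. From the first conjunct, the main lemma gives the mutator coverage $\mathsf{scexec}(c); \kappa \Rrightarrow \kappa; \mathsf{exec}(c)$. The proof strategy is thus: first derive $\sigma_0 \triangleright \mathsf{scexec}(c); \kappa\;\{\mathrm{true}\}$ from the hypothesis, then transport satisfaction along coverage to obtain $\sigma_0 \triangleright \kappa; \mathsf{exec}(c)\;\{\mathrm{true}\}$, and finally extract $\sigma_0 \triangleright \mathsf{exec}(c)\;\{\mathrm{true}\}$ by inspecting the alternatives of $\kappa(\sigma_0)$.

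For the first step, the Satisfaction of Sequential Composition of Mutators lemma rewrites $\mathsf{scexec}(c)(\sigma_0); \kappa\;\{\mathrm{true}\}$ as $\mathsf{scexec}(c)(\sigma_0)\;\{\sigma\;|\;\kappa(\sigma)\;\{\mathrm{true}\}\}$. Since every $\kappa(\sigma)$ is an (infinitary) demonic choice of singleton outcomes, each of which trivially satisfies $\mathrm{true}$, the inner postcondition is simply the full state space, and the condition follows directly from the second conjunct of the hypothesis. For the last step, I apply the satisfaction lemma once more to rewrite $\kappa(\sigma_0); \mathsf{exec}(c)\;\{\mathrm{true}\}$ as $\kappa(\sigma_0)\;\{\sigma\;|\;\mathsf{exec}(c)(\sigma)\;\{\mathrm{true}\}\}$. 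By the reflexivity rule of heap refinement, $\mathbf{0} \triangleleft \mathbf{0}$, so $\langle \sigma_0\rangle$ is one of the demonic alternatives constituting $\kappa(\sigma_0)$. Because a demonic choice satisfies a postcondition only if every alternative does, that particular alternative must satisfy it, yielding $\mathsf{exec}(c)(\sigma_0)\;\{\mathrm{true}\}$, which is $\mathsf{safe\_program}(c)$.

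The real difficulty has already been absorbed into the main lemma; at this level, the only subtle point is that the empty concrete heap refines the empty semiconcrete heap, so that $\kappa$ at the initial state cannot demonically dodge the concrete behaviour we need to rule out. Everything else is bookkeeping about how demonic and angelic choices interact with postcondition satisfaction, via the Satisfaction of Sequential Composition lemma.
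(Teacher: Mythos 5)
Your proposal is correct and follows exactly the route the paper intends: the paper dispatches this theorem with the single remark that it ``follows directly from the soundness of semiconcrete execution of commands,'' and your argument is precisely the bookkeeping that remark elides --- transporting satisfaction of $\mathrm{true}$ along the coverage $\mathsf{scexec}(c); \kappa \Rrightarrow \kappa; \mathsf{exec}(c)$ and then picking out the alternative $\langle\sigma_0\rangle$ of $\kappa(\sigma_0)$ via $\mathbf{0} \triangleleft \mathbf{0}$. The two observations you isolate (that $\kappa$ trivially satisfies $\mathrm{true}$ on the left, and that the empty concrete heap refines the empty semiconcrete heap on the right) are exactly the right ones.
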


The soundness of semiconcrete execution follows directly from the 
soundness of semiconcrete execution of commands.
Therefore, we are now halfway on our way towards a 
formalization and soundness proof of Featherweight VeriFast. 
Semiconcrete execution is not suitable as a verification 
algorithm since it performs infinite branching. In the next 
section, we formalize and sketch a soundness proof of 
Featherweight VeriFast's symbolic execution algorithm, which 
builds on semiconcrete execution but introduces \emph{symbols} 
to eliminate infinite branching.

\section{Symbolic Execution}\label{sec:symexec}

In this section, we introduce symbolic execution by example, and then provide formal definitions. Finally, we sketch a soundness proof.

\subsection{Symbolic Execution: Example Trace}

Recall the example semiconcrete execution trace for 
the example routine $\mathsf{range}$ in Figure~\ref{fig:scexec-trace}.
Notice that while the length 
of this trace is linear in the size of the body of routine 
$\mathsf{range}$, there are infinitely many such traces, since 
each number shown in orange is picked by demonic choice among 
all integers (potentially with some constraints).

We introduce \emph{symbolic execution} to arrive at an 
execution with a finite number of traces of limited length. 
Instead of demonically choosing among an infinite set of 
integers, symbolic execution uses a fresh \emph{symbol} to 
represent an arbitrary number. Symbolic execution states are 
like semiconcrete execution states, except that a \emph{term} 
may be used instead of a literal value in the store and the 
heap. A term is either a literal number, a symbol, or an 
operation (addition or subtraction) applied to two terms. In 
addition to replacing numbers by terms, symbolic execution adds 
a third component to the state: the \emph{path condition}. This 
is a set of \emph{formulae} that define the set of \emph{relevant
interpretations} of the symbols used in the store and the heap. 
A formula is either an equality between terms ($t = t'$), an 
inequality between terms ($t < t'$), or the negation of another 
formula.

\begin{figure}
$$\begin{array}{l}
\mathbf{routine}\ \mathsf{range}(\mathsf{i}, \mathsf{n}, \mathsf{r})\\
\quad \annot{\mathbf{req}\ \mathsf{r} \mapsto {?}\mathsf{dummy}\ \mathbf{ens}\ \mathsf{r} \mapsto {?}\mathsf{list} * \mathsf{list}(\mathsf{list})}\\
\comment{\Phi{:}\{\branching{\symi}{,}\branching{\symn}{,}\branching{\symr}\}, s{:} \mathbf{0}[\mathsf{i}{:}\branching{\symi},\mathsf{n}{:}\branching{\symn},\mathsf{r}{:}\branching{\symr}], h {:} \mathbf{0}}\quad\quad\quad\Phi{:}\{\dots,\varsigma,\dots\} = \Phi{:}\{\dots,\varsigma = \varsigma, \dots\}\\
{\color{Gray}\mathsf{sproduce}(}\annot{\mathsf{r} \mapsto {?}\mathsf{dummy}}{\color{Gray})}\\
\comment{\Phi{:}\{\symi{,}\symn{,}\symr{,}\branching{\symd}\}, s{:} \mathbf{0}[\mathsf{i}{:}\symi,\mathsf{n}{:}\symn,\mathsf{r}{:}\symr], h {:} \llbrace \symr {\mapsto} \branching{\symd}\rrbrace}\\
\mathbf{if}\ \mathsf{i} = \mathsf{n}\ \mathbf{then}\ \mathsf{l} := 0\ \mathbf{else}\ (\\
\comment{\Phi{:}\{\symi{,}\symn{,}\symr{,}\symd, \symi{\neq}\symn\}, s{:} \mathbf{0}[\mathsf{i}{:}\symi,\mathsf{n}{:}\symn,\mathsf{r}{:}\symr], h{:} \llbrace \symr {\mapsto}\symd\rrbrace}\\{}
\mathsf{l} := \mathbf{malloc}(2);\\
\comment{\Phi{:}\{\symi{,}\symn{,}\symr{,}\symd{,}\branching{\syml}{,}\branching{\symv}{,}\branching{\symv'}, \symi{\neq}\symn{,}0{<}\syml\}, s{:} \mathbf{0}[\mathsf{i}{:}\symi,\mathsf{n}{:}\symn,\mathsf{r}{:}\symr,\mathsf{l}{:}\branching{\syml}], h{:} \llbrace \symr {\mapsto}\symd{,} \mathsf{mb}(\branching{\syml}, 2){,} \branching{\syml}{\mapsto}\branching{\symv}{,} \branching{\syml}{+}1 {\mapsto} \branching{\symv'}\rrbrace}\\{}
[\mathsf{l}] := \mathsf{i}; \mathsf{range}(\mathsf{i} + 1, \mathsf{n}, \mathsf{l} + 1)\\
{\color{Gray}\mathsf{sconsume}(}\annot{\mathsf{l} {+} 1 {\mapsto} {?}\mathsf{dummy}}{\color{Gray});\mathsf{sproduce}(}\annot{\mathsf{l}{+}1 {\mapsto} {?}\mathsf{list} * \mathsf{list}(\mathsf{list})}{\color{Gray})}\\
\comment{\begin{array}{@{} l @{}}
\Phi{:}\{\symi{,}\symn{,}\symr{,}\symd{,}\syml{,}\symv{,}\symv'{,}\branching{\syml'}, \symi{\neq}\symn{,}0{<}\syml\}, s{:}\mathbf{0}[\mathsf{i}{:}\symi,\mathsf{n}{:}\symn,\mathsf{r}{:}\symr,\mathsf{l}{:}\syml],\\
h{:} \llbrace \symr {\mapsto} \symd{,} \mathsf{mb}(\syml, 2){,} \syml{\mapsto}\symi{,} \syml{+}1 {\mapsto} \branching{\syml'}{,}\mathsf{list}(\branching{\syml'})\rrbrace
\end{array}}\\
);\ \annot{\mathbf{close}\ \mathsf{list}(\mathsf{l})}; [\mathsf{r}] := \mathsf{l}\\
\comment{\Phi{:}\{\symi{,}\symn{,}\symr{,}\symd{,}\syml{,}\symv{,}\symv'{,}\syml', \symi{\neq}\symn{,}0{<}\syml\}, s{:}\mathbf{0}[\mathsf{i}{:}\symi,\mathsf{n}{:}\symn,\mathsf{r}{:}\symr,\mathsf{l}{:}\syml],h{:}\llbrace \symr{\mapsto} \syml{,}\mathsf{list}(\syml)\rrbrace}\\
{\color{Gray}\mathsf{sconsume}(}\annot{\mathsf{r} \mapsto {?}\mathsf{list} * \mathsf{list}(\mathsf{list})}{\color{Gray})}\\
\comment{\Phi{:}\{\symi{,}\symn{,}\symr{,}\symd{,}\syml{,}\symv{,}\symv'{,}\syml', \symi{\neq}\symn{,}0{<}\syml\}, s{:}\mathbf{0}[\mathsf{i}{:}\symi,\mathsf{n}{:}\symn,\mathsf{r}{:}\symr,\mathsf{l}{:}\syml],h{:}\mathbf{0}}
\end{array}$$
\caption{Symbolic Execution: Example Trace}\label{fig:symexec-trace}
\end{figure}

\begin{exa}{Symbolic Execution: Example Trace}
See Figure~\ref{fig:symexec-trace}
\end{exa}

In Figure~\ref{fig:symexec-trace} we show the symbolic execution trace for routine 
$\mathsf{range}$ corresponding to the semiconcrete execution 
trace shown before. Note: do not confuse the program variables 
and the symbols. The former are shown in an upright font; the 
latter are shown in a slanted font. In the symbolic execution 
trace, the letters shown in orange do not denote branching 
(i.e.~demonic choices); rather, they show freshly picked 
symbols.

Besides the use of symbols, notice the path condition $\Phi$: 
it starts out empty; in the $\mathbf{else}$ branch of the 
$\mathbf{if}$ statement, the formula $\symi \neq \symn$ is added; and the 
$\mathbf{malloc}$ statement adds the formula $0 < \syml$. 

\subsection{Symbolic Execution: Types}

The set $\mathsf{SStates}$ of symbolic execution states is defined below. Terms are like expressions, except that they may mention symbols, which represent a fixed value,
instead of program variables, whose value may change through assignments. Similarly, formulae correspond to boolean expressions.

Symbolic states are like semiconcrete states, except that terms are used instead of values in the store and as chunk arguments; furthermore, the state includes an
additional component, called the path condition, which is a set of formulae.

\begin{defi}{Symbolic Execution: Types}
$$\begin{array}{r l}
& \varsigma \in \mathit{Symbols}\\
t, \sell, \sv \in \mathit{Terms} & ::= z\ |\ \varsigma\ |\ t + t\ |\ t - t\\
\varphi \in \mathit{Formulae} & ::= t = t\ |\ t < t\ |\ \lnot \varphi\\
\\
\ss \in \mathit{SStores} = & \mathit{Vars} \rightarrow \mathit{Terms}\\
\mathit{SPredicates} = & \{\mapsto, \mathsf{mb}\} \cup \mathit{UserDefinedPredicates}\\
\mathit{SChunks} = & \{p(\overline{\sv})\ |\ p \in \mathit{SPredicates}, \overline{\sv} \in \mathit{Terms}\}\\
\sh \in \mathit{SHeaps} = & \mathit{SChunks} \rightarrow \mathbb{N}\\
\mathit{PathConditions} = & \mathcal{P}(\mathit{Formulae})\\
\mathit{SStates} = & \mathit{PathConditions} \times \mathit{SStores} \times \mathit{SHeaps}\\
\mathit{SMutators} = & \mathit{SStates} \rightarrow \mathit{Outcomes}(\mathit{SStates})\\
\\
\mathit{sconsume}(a) \in & \mathit{Assertions} \rightarrow \mathit{SMutators}\\
\mathit{sproduce}(a) \in & \mathit{Assertions} \rightarrow \mathit{SMutators}\\
\mathit{symexec}(c) \in & \mathit{Commands} \rightarrow \mathit{SMutators}\\
\end{array}$$
\end{defi}

\subsection{Symbolic Execution: Auxiliary Definitions}

As we did for concrete execution and semiconcrete execution, we introduce 
a few auxiliary definitions for use in the definition of symbolic 
execution. They are as follows. 

In concrete and semiconcrete execution, assuming a boolean expression 
evaluates the expression in the current store and blocks if it evaluates 
to false. In symbolic execution, this is not possible, since evaluation of 
a boolean expression under a symbolic store yields a formula rather than a 
boolean value. Symbolic execution, therefore, asks an \emph{SMT solver}, a 
type of automatic theorem prover, to try to prove that the formula is 
inconsistent with the path condition. If it succeeds, symbolic execution 
blocks. Otherwise, the formula is added to the path condition, in order to 
record that on the remainder of the current symbolic execution path, of 
all possible interpretations of the symbols used in the symbolic state, 
only the ones that satisfy the formula are relevant.

We write $\Phi \vdash_\mathrm{SMT} \varphi$ to denote that the SMT solver 
succeeds in proving that the set of formulae $\Phi$ implies the formula 
$\varphi$.

Similarly, asserting a boolean expression in symbolic execution means 
evaluating it to a formula under the current symbolic store and asking the 
SMT solver to try to prove that the formula follows from the path 
condition. If it succeeds, execution proceeds normally; otherwise, it 
fails.

The set $\mathsf{Used}(\Phi)$ denotes the set of symbols $\varsigma$ for 
which a formula $\varsigma = \varsigma$ appears in the path condition 
$\Phi$. In a well-formed symbolic state, all symbols used in the symbolic 
state are in this set. 

$\mathsf{fresh}(\Phi)$ denotes some symbol that is not in 
$\mathsf{Used}(\Phi)$. It is defined using a \emph{choice function} 
$\epsilon$, which maps each nonempty set to some element of that set.

The mutator $\mathsf{fresh}$ picks some symbol $\varsigma$ that is not yet 
used by the current symbolic state, records that it is now being used by 
adding a formula $\varsigma = \varsigma$ to the path condition, and yields 
the symbol as its answer.

We define the notation $\bigoplus t.\;C(t)$, where $C$ is a mutator 
parameterized by a term, to denote angelic choice over all terms that only 
use symbols that are already being used by the current symbolic state. 
$\mathrm{FS}(t)$ denotes the set of free symbols that appear in term $t$, 
i.e.~the set of symbols used by $t$.\footnote{Since the syntax of terms 
does not include any binding constructs, all symbols that appear in a term 
are free symbols of the term.}

Symbolic consumption $\mathsf{sconsume\_chunks}(\sh)$ of a multiset $\sh$ of 
symbolic terms differs from concrete and semiconcrete consumption in that 
it does not simply look for the exact chunks $\sh$ in the current heap; 
rather, it looks for chunks for which the SMT solver succeeds in proving 
that their argument terms are equal under all relevant interpretations of 
the symbols. For example, suppose the heap contains a chunk 
$\mathsf{list}(\syml)$ and the path condition contains a formula $\syml = 
\syml'$; then consumption of a chunk $\mathsf{list}(\syml')$ succeeds, 
even though the exact chunk $\mathsf{list}(\syml')$ does not appear in the 
symbolic heap. Symbolic production is simpler; as in semiconcrete 
execution, it simply adds the specified chunks to the heap. Symbolic consumption and production of a single symbolic chunk $\hat\alpha$ are defined in the obvious way.

\begin{defi}{Symbolic Execution: Auxiliary Definitions}

$$\begin{array}{l}
\mathsf{sassume}(\varphi) = \lambda (\Phi, \ss, \sh).\; \bigotimes \Phi \not\vdash_\mathrm{SMT} \lnot \varphi.\;\langle (\Phi \cup \{\varphi\}, \ss, \sh)\rangle\\
\mathsf{sassume}(b) = \ss \leftarrow \mathsf{sstore}; \mathsf{sassume}(\llbracket b\rrbracket_{\ss})\\
\mathsf{sassert}(b) = \lambda (\Phi, \ss, \sh).\ \bigoplus \Phi \vdash_\mathrm{SMT} \llbracket b\rrbracket_{\ss}.\;\langle(\Phi, \ss, \sh)\rangle
\\
\mathsf{Used}(\Phi) = \{\varsigma \in \mathit{Symbols}\ |\ (\varsigma = \varsigma) \in \Phi\}\\
\mathsf{fresh}(\Phi) = \epsilon(\{\varsigma \in \mathit{Symbols}\ |\ \varsigma \notin \mathsf{Used}(\Phi)\})\\
\mathsf{fresh} =
  \lambda (\Phi, \ss, \sh).\
  \mathbf{let}\ \varsigma = \mathsf{fresh}(\Phi)\ \mathbf{in}\ 
  \langle (\Phi \cup \{\varsigma = \varsigma\}, \ss, \sh), \varsigma \rangle\\
\\
\bigoplus t.\;C(t) = \Phi \leftarrow \mathsf{pc}; \bigoplus t \in \mathit{Terms}, \mathrm{FS}(t) \subseteq \mathsf{Used}(\Phi).\;C(t)\\
\mathsf{sconsume\_chunks}(\sh') = \lambda (\Phi, \ss, \sh).\;\bigotimes \sh'' \le \sh, \Phi \vdash_\mathrm{SMT} \sh'' = \sh'.\;\langle (\Phi, \ss, \sh - \sh'')\rangle\\
\mathsf{sconsume\_chunk}(\hat\alpha) = \mathsf{sconsume\_chunks}(\llbrace\hat\alpha\rrbrace)\\
\mathsf{sproduce\_chunks}(\sh') = \lambda (\Phi, \ss, \sh).\;\langle(\Phi, \ss, \sh \uplus \sh')\rangle\\
\mathsf{sproduce\_chunk}(\hat\alpha) = \mathsf{sproduce\_chunks}(\llbrace\hat\alpha\rrbrace)\\
\\
\textrm{where}\\
\quad \epsilon(X)\quad=\quad\textrm{some element of $X$}
\end{array}$$

\end{defi}

\subsection{Symbolic Execution: Definition}

The definition of symbolic execution is entirely analogous to that of 
semiconcrete execution, except that symbolic versions of the auxiliary 
mutators are used and that each demonic choice over all values is replaced 
by picking a fresh symbol.

\begin{defi}{Producing Assertions}

$$\begin{array}{l}
\mathsf{sproduce}(b) = \mathsf{sassume}(b)\\[.5em]

\mathsf{sproduce}(p(\overline{e}, \overline{{?}x})) =\\
\quad
  \overline{\sv} \leftarrow \mathsf{seval}(e);
  \overline{\sv}' \leftarrow \mathsf{fresh};
  \mathsf{sproduce\_chunk}(p(\overline{\sv}, \overline{\sv}'));
  \overline{x} := \overline{\sv}'\\[.5em]

\mathsf{sproduce}(a * a') = \mathsf{sproduce}(a); \mathsf{sproduce}(a')\\[.5em]

\mathsf{sproduce}(\mathbf{if}\ b\ \mathbf{then}\ a\ \mathbf{else}\ a') =\\
\quad \mathsf{sassume}(b); \mathsf{sproduce}(a) \otimes \mathsf{sassume}(\lnot b); \mathsf{sproduce}(a')
\end{array}$$

\end{defi}

\begin{defi}{Consuming Assertions}

$$\begin{array}{l}
\mathsf{sconsume}(b) = \mathsf{sassert}(b)\\[.5em]

\mathsf{sconsume}(p(\overline{e}, \overline{{?}x})) =\\
\quad
  \overline{\sv} \leftarrow \mathsf{seval}(e);
  \bigoplus \overline{\sv}'.\;
  \mathsf{sconsume\_chunk}(p(\overline{\sv}, \overline{\sv}'));
  \overline{x} := \overline{\sv}'\\[.5em]

\mathsf{sconsume}(a * a') = \mathsf{sconsume}(a); \mathsf{sconsume}(a')\\[.5em]

\mathsf{sconsume}(\mathbf{if}\ b\ \mathbf{then}\ a\ \mathbf{else}\ a') =\\
\quad \mathsf{sassume}(b); \mathsf{sconsume}(a) \otimes \mathsf{sassume}(\lnot b); \mathsf{sconsume}(a')
\end{array}$$

\end{defi}

\begin{figure}
$$\begin{array}{l}
\mathsf{symexec}(x := e) = \sv \leftarrow \mathsf{seval}(e); x := \sv\\
\\
\mathsf{symexec}(c; c') = \mathsf{symexec}(c); \mathsf{symexec}(c')\\
\\
\mathsf{symexec}(\mathbf{if}\ b\ \mathbf{then}\ a\ \mathbf{else}\ a') =\\
\quad \mathsf{sassume}(b); \mathsf{symexec}(c) \otimes \mathsf{sassume}(\lnot b); \mathsf{symexec}(c')\\
\\
\mathsf{symexec}(\mathbf{while}\ e\ \mathbf{inv}\ a\ \mathbf{do}\ c) = \textrm{See Figure~\ref{fig:symexec-while}}\\
\\
\mathsf{symexec}(r(\overline{e})) =\\
\quad \overline{\sv} \leftarrow \mathsf{eval}(\overline{e}); \mathsf{with}(\mathbf{0}[\overline{x}:=\overline{\sv}], \mathsf{sconsume}(a); \mathsf{sproduce}(a'))\\
\quad \textrm{where $\mathbf{routine}\ r(\overline{x})\ \mathbf{req}\ a\ \mathbf{ens}\ a'$}\\
\\
\mathsf{symexec}(x := \mathbf{malloc}(n)) =\\
\quad \sell,\sv_1,\dots,\sv_n \leftarrow \mathsf{fresh}; \mathsf{sassume}(0 < \sell);\\
\quad \mathsf{sproduce\_chunks}(\llbrace\mathsf{mb}(\sell, n),\sell \mapsto \sv_1,\dots,\sell + n - 1\mapsto \sv_n\rrbrace); x := \sell\\
\\
\mathsf{symexec}(x := [e]) =\\
\quad \sell \leftarrow \mathsf{seval}(e); \bigoplus \sv.\;\mathsf{sconsume\_chunk}(\sell \mapsto \sv); \mathsf{sproduce\_chunk}(\sell \mapsto \sv); x := \sv\\
\\
\mathsf{symexec}([e] := e') =\\
\quad \sell, \sv \leftarrow \mathsf{seval}(e, e'); \bigoplus \sv'.\;\mathsf{sconsume\_chunk}(\sell \mapsto \sv'); \mathsf{sproduce\_chunk}(\sell \mapsto \sv)\\
\\
\mathsf{symexec}(\mathbf{free}(e)) = \sell \leftarrow \mathsf{seval}(e);\\
\quad \bigoplus n, \sv_1, \dots, \sv_n.\;\mathsf{sconsume\_chunks}(\llbrace\mathsf{mb}(\sell, n), \sell_1 \mapsto \sv_1, \dots, \sell_n \mapsto \sv_n\rrbrace)\\
\\
\mathsf{symexec}(\mathbf{open}\ p(\overline{e})) = \overline{\sv} \leftarrow \mathsf{eval}(\overline{e});\\
\quad \mathsf{sconsume\_chunk}(p(\overline{\sv})); \mathsf{with}(\mathbf{0}[\overline{x}:=\overline{\sv}], \mathsf{sproduce}(a))\\
\quad\textrm{where $\mathbf{predicate}\ p(\overline{x}) = a$}\\
\\
\mathsf{symexec}(\mathbf{close}\ p(\overline{e})) = \overline{\sv} \leftarrow \mathsf{eval}(\overline{e});\\
\quad \mathsf{with}(\mathbf{0}[\overline{x}:=\overline{\sv}], \mathsf{sconsume}(a)); \mathsf{sproduce\_chunk}(p(\overline{\sv}))\\
\quad\textrm{where $\mathbf{predicate}\ p(\overline{x}) = a$}\\
\end{array}$$
\caption{Symbolic Execution of Commands}\label{fig:symexec}
\end{figure}

\begin{figure}
$$\begin{array}{l}
\mathsf{shavoc}(\overline{x}) = \overline{\sv} \leftarrow \mathsf{fresh}; \overline{x}:=\overline{\sv}\\
\mathsf{sleakcheck} = \lambda (\Phi, \ss, \sh).\; \bigoplus \sh = \mathbf{0}.\;\top\\
\\
\mathsf{symexec}(\mathbf{while}\ b\ \mathbf{inv}\ a\ \mathbf{do}\ c) =\\
\quad \ss \leftarrow \mathsf{sstore}; \mathsf{with}(\ss, \mathsf{sconsume}(a));\\
\quad \mathsf{shavoc}(\mathsf{targets}(c));\\
\quad (\\
\quad\quad \mathsf{sheap} := \mathbf{0};\\
\quad\quad \ss \leftarrow \mathsf{sstore}; \mathsf{with}(\ss, \mathsf{sproduce}(a));\\
\quad\quad \mathsf{sassume}(b); \mathsf{symexec}(c);\\
\quad\quad \ss \leftarrow \mathsf{sstore}; \mathsf{with}(\ss, \mathsf{sconsume}(a));\\
\quad\quad \mathsf{sleakcheck}\\
\quad \otimes\\
\quad\quad \ss \leftarrow \mathsf{sstore}; \mathsf{with}(\ss, \mathsf{sproduce}(a))\\
\quad\quad \mathsf{sassume}(\lnot b);\\
\quad )
\end{array}$$
\caption{Symbolic Execution of Loops}\label{fig:symexec-while}
\end{figure}

\begin{defi}{Symbolic Execution of Commands}
See Figures~\ref{fig:symexec} and \ref{fig:symexec-while}.
\end{defi}

\begin{defi}{Validity of Routines}

$$\begin{array}{l}
\mathsf{svalid}(r) =\\
\quad (\emptyset, \mathbf{0}, \mathbf{0})\; \triangleright\\
\quad \overline{\sv} \leftarrow \mathsf{fresh};\\
\quad \mathsf{with}(\mathbf{0}[\overline{x}:=\overline{\sv}],\\
\quad\quad \ss' \leftarrow \mathsf{with}(\mathbf{0}[\overline{x}:=\overline{\sv}], \mathsf{sproduce}(a); \mathsf{sstore});\\
\quad\quad \mathsf{symexec}(c);\\
\quad\quad \mathsf{with}(\ss', \mathsf{sconsume}(a'))\\
\quad );\\
\quad \mathsf{sleakcheck}\\
\quad \{\mathrm{true}\}\\
\quad \textrm{where $\mathbf{routine}\ r(\overline{x})\ \mathbf{req}\ a\ \mathbf{ens}\ a' = c$}\\
\end{array}$$

\end{defi}

\begin{defi}{Symbolic Execution: Program Safety}

\[
\mathsf{sym}\textsf{-}\mathsf{safe\_program}(c)\quad=\quad (\forall r.\;\mathsf{svalid}(r)) \land (\emptyset, \mathbf{0}, \mathbf{0}) \triangleright \mathsf{symexec}(c)\;\{\mathrm{true}\}
\]

\end{defi}

\subsection{Soundness}
We now argue the soundness of symbolic execution with respect to 
semiconcrete execution, i.e.~that symbolic execution is a safe 
approximation of semiconcrete execution, and therefore if symbolic 
execution does not fail, then semiconcrete execution does not fail. To do 
so, we need to characterize the relationship between symbolic states and 
semiconcrete states. We do so by means of the concept of an 
\emph{interpretation}. 

\begin{defi}{Soundness of symbolic execution: Definitions}

$$\begin{array}{r c l}
I \in \mathit{Interps} & = & \mathit{Symbols} \rightharpoonup \mathbb{Z} = \mathit{Symbols} \rightarrow \mathbb{Z} \cup \{\mathsf{undef}\}\\
\mathrm{dom}\,I & = & \{\varsigma\;|\;I(\varsigma) \neq \mathsf{undef}\}\\
I \subseteq I' & = & \forall \varsigma.\;I(\varsigma) = \mathsf{undef} \lor I(\varsigma) = I'(\varsigma)\\
I((\Phi, \ss, \sh)) & = & \left\{\begin{array}{l l}
(s, h) & \textrm{if $\mathrm{dom}\,I = \mathsf{Used}(\Phi) \land \llbracket \Phi, \ss, \sh\rrbracket_I = \mathsf{true}, s, h$}\\
\mathsf{undef} & \textrm{otherwise}
\end{array}\right.\\
\rho_I & = & \lambda \hat{\sigma}.\;\bigotimes I' \supseteq I, \sigma, I'(\hat{\sigma}) = \sigma.\;\langle \sigma\rangle\\
C \rightsquigarrow_I C' & = & C{;}, \rho_I \Rrightarrow \rho_I; C'\\
C({-}) \rightsquigarrow_I C'({-}) & = & \forall I' \supseteq I, t, v, \llbracket t\rrbracket_{I'} = v.\;C(t) \rightsquigarrow_{I'} C'(v)
\end{array}$$

\end{defi}

\noindent An interpretation is a partial function from symbols to program values. By 
\emph{partial function}, we mean that it maps each symbol either to a 
program value (an integer) or to the special value $\mathsf{undef}$. By 
this, we reflect that at each point during symbolic execution, only some 
of the symbols are in use and the others may be picked by a future 
execution of mutator $\mathsf{fresh}$. 

We say an interpretation $I'$ \emph{extends} another interpretation $I$, 
denoted $I' \supseteq I$, if for each symbol for which $I$ is defined, 
$I'$ is defined and $I'$ maps it to the same value as $I$. 

We define the evaluation $\llbracket {-}\rrbracket_I$ of a term, a 
formula, a path condition, a symbolic store, or a symbolic heap under an 
interpretation $I$ as the partial function that yields $\mathsf{undef}$ if 
the interpretation yields $\mathsf{undef}$ for any of the symbols that 
appear in the input, and the output obtained by replacing all symbols by 
their value otherwise. 

We also use an interpretation as a partial function from symbolic states 
to semiconcrete states, as follows. For an interpretation $I$ and a 
symbolic state $(\Phi, \ss, \sh)$, if the domain of $I$ is exactly 
$\mathsf{Used}(\Phi)$, and $\Phi$ evaluates to true under $I$, and the 
symbolic store and heap $\ss$ and $\sh$ evaluate to a semiconcrete store 
and heap $s$ and $h$ under $I$, then the value of $(\Phi, \ss, \sh)$ under 
$I$ is $(s, h)$, and otherwise it is undefined. Notice that this means 
that the interpretation of a symbolic state is undefined if the symbolic 
state is not well-formed, i.e.~if it uses symbols $\varsigma$ for which no 
formula $\varsigma = \varsigma$ appears in the path condition. 

We now define the \emph{interpretation mutator} $\rho_I$ that, for a given 
symbolic state $\hat{\sigma}$, demonically chooses an extension $I'$ of 
$I$ for which $I'(\hat{\sigma})$ is defined and sets the resulting 
semiconcrete state as the current state. 

Given this mutator, we define the concept of \emph{safe approximation} $C 
\rightsquigarrow_I C'$ of a semiconcrete mutator $C'$ by a symbolic 
mutator $C$ under an interpretation $I$. This holds if $C{;}, \rho_I$ 
covers $\rho_I; C'$. 

We extend this notion to the case of a symbolic operator $C({-})$ 
parameterized by a term and a semiconcrete operator $C'({-})$ 
parameterized by a value. It holds if for any extension $I'$ of $I$, and 
for any term whose value is defined under $I'$, $C(t)$ safely approximates 
$C'(\llbracket t\rrbracket_{I'})$ under $I'$. 

\begin{defi}{Logical Consequence}
$$\Phi \vDash \varphi\quad\Leftrightarrow\quad \forall I.\;\llbracket \Phi\rrbracket_I = \mathsf{true} \Rightarrow \llbracket \varphi\rrbracket_I = \mathsf{true}$$
\end{defi}

\begin{asm}[SMT Solver Soundness]
$$\Phi \vdash_\mathrm{SMT} \varphi\quad\Rightarrow\quad \Phi \vDash \varphi$$
\end{asm}\medskip

\noindent Soundness of symbolic execution relies on one assumption: that the SMT 
solver is sound. That is, if the SMT solver reports success in proving 
that a formula follows from a path condition, then it must be the case 
that this formula does indeed follow from this path condition. We say a 
formula follows from a path condition if all interpretations that satisfy 
the path condition satisfy the formula.

It is not necessary for soundness of symbolic execution that the SMT 
solver be \emph{complete}, i.e.~that it succeed in proving all true 
facts. In fact, symbolic execution is sound even when using an SMT solver 
that does not even try and always reports failure to prove a fact. 
However, in that case symbolic execution itself is highly incomplete, 
i.e.~it fails even if concrete execution does not fail. Indeed, we do not 
claim completeness of Featherweight VeriFast. 

Given these concepts, we can state the soundness lemmas of symbolic 
execution:
\begin{lem}[Soundness]
\display{\begin{array}{r @{\ } c @{\ } l}
C({-}) \rightsquigarrow_I C'({-}) \Rightarrow \sv \leftarrow \mathsf{fresh}; C(\sv) & \rightsquigarrow_I & \bigotimes v.\;C'(v)\\
C({-}) \rightsquigarrow_I C'({-}) \Rightarrow \bigoplus \sv.\;C(\sv) & \rightsquigarrow_I & \bigoplus v.\;C'(v)\\
\mathsf{sassume}(b), \mathsf{sassert}(b) & \rightsquigarrow_I & \mathsf{assume}(b), \mathsf{assert}(b)\\
\llbracket \sh \rrbracket_I = h \Rightarrow \mathsf{sconsume}(\sh), \mathsf{sproduce}(\sh) & \rightsquigarrow_I & \mathsf{consume}(h), \mathsf{produce}(h)\\
\mathsf{sconsume}(a), \mathsf{sproduce}(a) & \rightsquigarrow_I & \mathsf{consume}(a), \mathsf{produce}(a)\\
\mathsf{symexec}(c) & \rightsquigarrow_I & \mathsf{scexec}(c)\\
\mathsf{svalid}(r) & \Rightarrow & \mathsf{valid}(r)\\
\mathsf{sym}\textsf{-}\mathsf{safe\_program}(c) & \Rightarrow & \mathsf{sc}\textsf{-}\mathsf{safe\_program}(c)
\end{array}}
\end{lem}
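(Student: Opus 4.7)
The plan is to prove the eight listed facts in the order shown, since each serves as a building block for the later ones. The overall strategy is to show that whenever the symbolic side reaches a state that is interpretable under some extension $I'\supseteq I$, the semiconcrete side, starting from $\rho_I$, can reach the corresponding semiconcrete state, possibly with more angelic freedom on the consume side and with less demonic freedom on the produce side. All through, I will lean on SMT solver soundness to move from proof-theoretic facts about the path condition to semantic facts about models.

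First, for $\mathsf{fresh}$, I pick $\varsigma \notin \mathsf{Used}(\Phi)$ and hence $\varsigma \notin \mathrm{dom}\,I'$ for the interpretation $I'$ that witnesses the output state; any extension $I''\supseteq I'$ with $I''(\varsigma)=v$ is then well-defined, and this $v$ is used to instantiate the demonic $\bigotimes v.\,C'(v)$ on the right, after which the hypothesis $C({-})\rightsquigarrow_{I''} C'({-})$ closes the case. The angelic rule is dual: once the symbolic angel has chosen a term $t$ with $\mathrm{FS}(t)\subseteq\mathsf{Used}(\Phi)$, the interpretation $I'$ already assigns $t$ a value $v=\llbracket t\rrbracket_{I'}$, and the semiconcrete angel picks this $v$. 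For $\mathsf{sassume}(\varphi)$, any output state $(\Phi\cup\{\varphi\},\ss,\sh)$ is interpretable only when $\llbracket\varphi\rrbracket_{I'}=\mathsf{true}$, so $\mathsf{assume}(b)$ proceeds without blocking; for $\mathsf{sassert}$, SMT soundness gives $\Phi\vDash \llbracket b\rrbracket_{\ss}$, which forces $\llbracket b\rrbracket_s=\mathsf{true}$ after interpretation, so $\mathsf{assert}(b)$ succeeds. The chunk-level rule is the decisive bridge between the two worlds: SMT-derivability of $\sh''=\sh'$ entails the literal multiset equality under any model of $\Phi$, so an $I$-matched symbolic consumption of $\sh'$ out of $\sh$ corresponds exactly to a concrete consumption of $h'=\llbracket\sh'\rrbracket_I$ out of $h=\llbracket\sh\rrbracket_I$; the production direction is even simpler because both sides merely union in.

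Step five is induction on the structure of the assertion $a$: the boolean case uses step~3; the predicate-assertion case combines step~4 with the fresh-symbol rule (for production of pattern variables) and the angelic-term rule (for consumption of pattern variables); the separating conjunction and conditional cases are routine induction. Step six is induction on the command, mirroring the case analysis of the soundness of semiconcrete execution, but now lifted across $\rho_I$. The malloc case uses step~1 together with the SMT-justified $0<\sell$; the routine-call and $\mathbf{open}$/$\mathbf{close}$ cases use step~5; sequential composition, conditional and assignment are routine. The while-loop case replays the pattern of the semiconcrete loop-soundness proof (locality, modifies, the consume-then-produce identity, and the demonic-iteration rule $C'\Rrightarrow C^*$), but with $\mathsf{shavoc}$ and $\mathsf{sleakcheck}$ safely approximating $\mathsf{havoc}$ and $\mathsf{leakcheck}$ via steps~1 and a direct argument.

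Finally, step seven instantiates step~6 at the trivial interpretation $I=\emptyset$: the empty symbolic state $(\emptyset,\mathbf{0},\mathbf{0})$ interprets to $(\mathbf{0},\mathbf{0})$, the fresh parameter symbols are handled by step~1, and the postcondition consume and leakcheck by step~5 and the $\mathsf{sleakcheck}\rightsquigarrow_I\mathsf{leakcheck}$ analogue. Step eight combines step~6 with step~7 for the main command in the same fashion. The main obstacle will be the while-loop case of step~6: one has to carefully schedule the evolving interpretation along the demonic iteration and through the repeated nested consume/produce of the loop invariant, so that the rewriting chain used in the semiconcrete loop-soundness proof survives the $C;,\rho_I\Rrightarrow \rho_I;C'$ reformulation; a secondary difficulty is step~5 for user-defined predicates, where the fresh and angelic rules must be combined so that pattern-variable values on the two sides are aligned through the same interpretation extension.
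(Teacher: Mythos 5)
Your proposal follows essentially the same route as the paper: the paper likewise treats the eight items as a chain where each builds on the previous, details only the $\mathsf{fresh}$ and $\mathsf{sassume}$ cases, and your argument for $\mathsf{fresh}$ (extending the witnessing interpretation $I'$ to $I''=I'[\varsigma:=v]$, using $v$ to instantiate the demonic choice, and closing with the parameterized hypothesis at $I''$) is exactly the paper's proof. The only slight imprecision is in $\mathsf{sassume}$, where the paper's key step is the contrapositive use of SMT soundness --- from $\llbracket\Phi\rrbracket_{I'}=\llbracket\llbracket b\rrbracket_{\ss}\rrbracket_{I'}=\mathsf{true}$ conclude $\Phi\not\vdash_{\mathrm{SMT}}\lnot\llbracket b\rrbracket_{\ss}$, so the symbolic guard is passed and the hypothesis is non-vacuous --- which your sketch covers only implicitly.
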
\medskip

\noindent Mutator $\mathsf{fresh}$ safely approximates demonic choice of 
a value; angelic choice of a term that uses only symbols already being 
used by the current symbolic state safely approximates angelic choice of a 
value; symbolic assumption and assertion safely approximate semiconcrete 
assumption and assertion; symbolic consumption and production of heap 
chunks safely approximate semiconcrete consumption and production of their 
interpretations; and symbolic execution safely approximates semiconcrete 
execution. The soundness theorem follows directly. 

Proving the properties stated above is mostly easy; below we go into some 
detail of two of the more interesting proofs: soundness of 
$\mathsf{fresh}$ and soundness of $\mathsf{sassume}$. 

\begin{lem}[Soundness of $\mathsf{fresh}$]
\display{C({-}) \rightsquigarrow_I C'({-}) \Rightarrow \sv \leftarrow \mathsf{fresh}; C(\sv) \rightsquigarrow_I \bigotimes v.\;C'(v)}
\end{lem}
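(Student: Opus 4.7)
The plan is to unfold $\rightsquigarrow_I$ and reduce the claim to pointwise outcome coverage: expanded, the goal is $(\sv \leftarrow \mathsf{fresh}; C(\sv)){;}, \rho_I \Rrightarrow \rho_I; \bigotimes v.\;C'(v)$. Fix an arbitrary well-formed input symbolic state $\hat\sigma = (\Phi, \ss, \sh)$ and a postcondition $Q$ satisfied by the left-hand side at $\hat\sigma$; I must derive the corresponding fact for the right-hand side. Executing $\mathsf{fresh}$ at $\hat\sigma$ deterministically picks the canonical symbol $\varsigma = \mathsf{fresh}(\Phi) \notin \mathsf{Used}(\Phi)$ and yields the singleton outcome $\langle (\Phi', \ss, \sh), \varsigma\rangle$ with $\Phi' = \Phi \cup \{\varsigma = \varsigma\}$, so by associativity the left-hand side at $\hat\sigma$ reduces to $(C(\varsigma){;}, \rho_I)((\Phi', \ss, \sh))$. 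Unfolding the two demonic choices in $\rho_I; \bigotimes v.\;C'(v)$ shows that the required conclusion amounts to: for every extension $I' \supseteq I$ with $\mathrm{dom}\,I' = \mathsf{Used}(\Phi)$ and every $v \in \mathbb{Z}$, $C'(v)(I'(\hat\sigma))\;\{Q\}$. Fix arbitrary such $I'$ and $v$.

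The key move is to apply the hypothesis at the interpretation $I_v = I'[\varsigma := v]$. This extends $I$ since $I' \supseteq I$ and $\varsigma \notin \mathsf{Used}(\Phi) \supseteq \mathrm{dom}\,I$; and clearly $\llbracket \varsigma\rrbracket_{I_v} = v$, so the hypothesis $C({-}) \rightsquigarrow_I C'({-})$ specialises to $C(\varsigma){;}, \rho_{I_v} \Rrightarrow \rho_{I_v}; C'(v)$. I evaluate the right-hand side of this covering at the intermediate state $(\Phi', \ss, \sh)$. Because $\mathrm{dom}\,I_v = \mathsf{Used}(\Phi) \cup \{\varsigma\} = \mathsf{Used}(\Phi')$, the demonic choice in $\rho_{I_v}((\Phi', \ss, \sh))$ is forced to the single extension $I_v$ itself; and because $\varsigma$ appears neither in $\Phi$ nor in $\ss$ or $\sh$ (by well-formedness of $\hat\sigma$ together with freshness of $\varsigma$), $I_v((\Phi', \ss, \sh)) = I'(\hat\sigma)$. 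Hence $(\rho_{I_v}; C'(v))((\Phi', \ss, \sh)) = C'(v)(I'(\hat\sigma))$.

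Finally, to transfer the instantiated hypothesis from $\rho_{I_v}$ to the $\rho_I$ that actually appears on the left, I observe that since $I_v \supseteq I$, the set of extensions appearing in the demonic choice defining $\rho_{I_v}$ is a subset of the corresponding set for $\rho_I$, whence $\rho_I \Rrightarrow \rho_{I_v}$ (a demonic choice over more alternatives is worse); by monotonicity of side-effect-only composition, $C(\varsigma){;}, \rho_I \Rrightarrow C(\varsigma){;}, \rho_{I_v}$. Chaining the coverings $(C(\varsigma){;}, \rho_I)((\Phi', \ss, \sh)) \Rrightarrow (C(\varsigma){;}, \rho_{I_v})((\Phi', \ss, \sh)) \Rrightarrow C'(v)(I'(\hat\sigma))$ and applying the assumption that the outer left-hand side satisfies $Q$ yields $C'(v)(I'(\hat\sigma))\;\{Q\}$, completing the proof. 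The main obstacle is the interpretation bookkeeping: well-formedness of $\hat\sigma$ is needed to ensure that every symbol occurring in $\ss$ or $\sh$ lies in $\mathsf{Used}(\Phi)$, and freshness of $\varsigma$ is needed to ensure that $\varsigma$ is absent from $\mathrm{dom}\,I$, from $\Phi$, and from $\ss$ and $\sh$, so that $I_v$ is indeed an extension of $I$, its domain exactly matches $\mathsf{Used}(\Phi')$, and it agrees with $I'$ wherever the interpretation of $\hat\sigma$ is evaluated.
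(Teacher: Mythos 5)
Your proof is correct and follows essentially the same route as the paper's: fix the input state and postcondition, let $\varsigma$ be the canonical fresh symbol, fix an extension $I' \supseteq I$ and a value $v$, and apply the hypothesis at $I'[\varsigma := v]$. You are in fact slightly more careful than the paper, which silently identifies the antecedent $C(\varsigma){;},\rho_{I''}$ of the instantiated premise with the assumption involving $\rho_I$, whereas you justify this step explicitly via $\rho_I \Rrightarrow \rho_{I_v}$ (demonic choice over more alternatives) and monotonicity of sequential composition.
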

\begin{proof}
We assume the premise and we 
unfold the definition of safe approximation, of mutator coverage, and of 
outcome coverage. Fix an input symbolic state $(\Phi, \ss, \sh)$ and a 
postcondition $Q$. Unfold the definition of $\mathsf{fresh}$. Let 
$\varsigma$ be the fresh symbol. Assume $(\Phi \cup \{\varsigma = 
\varsigma\}, \ss, \sh) \triangleright C(\varsigma); \rho_I\;\{Q\}$. It is 
sufficient to prove $(\Phi, \ss, \sh) \triangleright \rho_I; \bigotimes 
v.\;C'(v)\;\{Q\}$. Unfolding the definition of $\rho_I$ in the goal, fix 
an interpretation $I' \supseteq I$ and a semiconcrete state $(s, h)$ such 
that $I'((\Phi, \ss, \sh)) = (s, h)$. Further fix a value $v$ picked by 
the demonic choice in the goal. It is sufficient to prove that $(s, h) 
\triangleright C'(v)\;\{Q\}$. We build a new interpretation $I''$ by 
binding the fresh symbol $\varsigma$ to value $v$: $I'' = 
I'[\varsigma:=v]$. It follows that $I''((\Phi \cup \{\varsigma = 
\varsigma), \ss, \sh)) = (s, h)$. Using $I''$, we can rewrite our goal 
into the following form: 
$$(\Phi \cup \{\varsigma = \varsigma\}, \ss, \sh) \triangleright \rho_{I''}; C'(v)\;\{Q\}$$
The goal now matches the consequent of our premise $C({-}) 
\rightsquigarrow_I C'({-})$ after unfolding the definition of safe 
approximation, mutator coverage, and outcome coverage. Finally, the 
antecedent matches our assumption. 
\end{proof}

\begin{lem}[Soundness of $\mathsf{sassume}$]
\display{\mathsf{sassume}(b) \rightsquigarrow_I \mathsf{assume}(b)}
\end{lem}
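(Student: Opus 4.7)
The plan is to unfold the definition of safe approximation, so that the goal becomes $\mathsf{sassume}(b){;}, \rho_I \Rrightarrow \rho_I; \mathsf{assume}(b)$, then fix an arbitrary symbolic input state $(\Phi, \ss, \sh)$ and postcondition $Q$, assume the left-hand outcome satisfies $Q$, and derive that the right-hand outcome satisfies $Q$. Set $\varphi = \llbracket b\rrbracket_{\ss}$; by the definition of $\mathsf{sassume}(b)$ on a boolean expression and then on the formula $\varphi$, the first step is to split on whether $\Phi \vdash_\mathrm{SMT} \lnot\varphi$.

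In the case $\Phi \vdash_\mathrm{SMT} \lnot\varphi$, the left-hand outcome reduces to $\top$, so the hypothesis is vacuous and I must argue the right-hand outcome is also $\top$. The SMT soundness assumption gives $\Phi \vDash \lnot\varphi$. For every $I' \supseteq I$ whose interpretation $I'((\Phi, \ss, \sh))$ is defined as some $(s, h)$, definedness forces $\llbracket \Phi\rrbracket_{I'} = \mathsf{true}$, hence $\llbracket\varphi\rrbracket_{I'} = \mathsf{false}$, and by the substitution identity $\llbracket\varphi\rrbracket_{I'} = \llbracket \llbracket b\rrbracket_{\ss}\rrbracket_{I'} = \llbracket b\rrbracket_s$ we obtain $\mathsf{assume}(b)(s, h) = \top$. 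So each branch of the demonic choice in $\rho_I; \mathsf{assume}(b)$ is $\top$ and the whole outcome is $\top$.

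In the case $\Phi \not\vdash_\mathrm{SMT} \lnot\varphi$, the left-hand side equals $\rho_I((\Phi \cup \{\varphi\}, \ss, \sh))$, so the hypothesis says that every $I'' \supseteq I$ and every $\sigma'$ with $I''((\Phi \cup \{\varphi\}, \ss, \sh)) = \sigma'$ satisfies $\sigma' \in Q$. To verify the right-hand side, fix any $I' \supseteq I$ with $I'((\Phi, \ss, \sh)) = (s, h)$ and check that $\mathsf{assume}(b)(s, h)$ satisfies $Q$. If $\llbracket b\rrbracket_s = \mathsf{false}$ this is $\top$; otherwise it is $\langle(s, h)\rangle$ and we need $(s, h) \in Q$, which we obtain by instantiating the hypothesis at $I'$ itself once we check that $I'((\Phi \cup \{\varphi\}, \ss, \sh)) = (s, h)$.

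The main obstacle is exactly this last check: we must ensure that $I'$ remains a valid interpretation of the enlarged symbolic state. Definedness of $I'((\Phi, \ss, \sh))$ already forces all symbols occurring in $\ss$ to lie in $\mathsf{Used}(\Phi) = \mathrm{dom}\,I'$, and consequently so do all symbols of $\varphi = \llbracket b\rrbracket_{\ss}$; this yields $\mathsf{Used}(\Phi \cup \{\varphi\}) = \mathsf{Used}(\Phi)$, and $\llbracket \Phi \cup \{\varphi\}\rrbracket_{I'} = \mathsf{true}$ follows from $\llbracket\Phi\rrbracket_{I'} = \mathsf{true}$ combined with $\llbracket\varphi\rrbracket_{I'} = \llbracket b\rrbracket_s = \mathsf{true}$. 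Everything else in the proof is mechanical unfolding of the definitions of $\rho_I$, $\Rrightarrow$, and satisfaction.
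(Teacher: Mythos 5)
Your proof is correct and matches the paper's argument in all essentials: both hinge on SMT-solver soundness and on the observation that an interpretation $I'$ witnessing $I'((\Phi, \ss, \sh)) = (s,h)$ with $\llbracket b\rrbracket_s = \mathsf{true}$ is also an interpretation of the enlarged state $(\Phi \cup \{\llbracket b\rrbracket_{\ss}\}, \ss, \sh)$. The only difference is presentational---you case-split on the SMT query up front and dispatch the provable-negation case by showing every relevant interpretation falsifies $b$, whereas the paper reduces first to the case $\llbracket b\rrbracket_s = \mathsf{true}$ and then derives $\Phi \not\vdash_\mathrm{SMT} \lnot\llbracket b\rrbracket_{\ss}$ by contraposition of SMT soundness---and you additionally spell out the well-formedness check ($\mathsf{Used}(\Phi \cup \{\llbracket b\rrbracket_{\ss}\}) = \mathsf{Used}(\Phi)$) that the paper leaves implicit.
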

\begin{proof}
Unfold the definition of safe 
approximation, mutator coverage, outcome coverage, and $\rho_I$. Fix an 
input symbolic state $(\Phi, \ss, \sh)$, a postcondition $Q$, an extension 
$I' \supseteq I$, and a semiconcrete state $(s, h)$ such that $I'((\Phi, 
\ss, \sh)) = (s, h)$. Unfold the definition of $\mathsf{sassume}$. Assume 
$\Phi \not\vdash_\mathrm{SMT} \lnot \llbracket b\rrbracket_{\ss} 
\Rightarrow (\Phi \cup \{\llbracket b\rrbracket_{\ss}\}, \ss, \sh) 
\triangleright \rho_I\;\{Q\}$. Unfold the definition of $\mathsf{assume}$. 
Assume $\llbracket b\rrbracket_s = \mathsf{true}$. Our goal reduces to 
$(s, h) \in Q$. 

Since $\llbracket \Phi\rrbracket_{I'} = \mathsf{true}$ and $\llbracket 
\llbracket b\rrbracket_{\ss}\rrbracket_{I'} = \mathsf{true}$, we have 
$\Phi \not\vDash \lnot \llbracket b\rrbracket_{\ss}$. By soundness of the 
SMT solver, it follows that $\Phi \not\vdash_\mathrm{SMT} \lnot \llbracket 
b\rrbracket_{\ss}$. Hence, by our assumption above, $(\Phi \cup 
\{\llbracket b\rrbracket_{\ss}\}, \ss, \sh) \triangleright \rho_I\;\{Q\}$. 
In this fact, we unfold $\rho_I$ and instantiate the demonic choice with 
$I'$. Since $I'((\Phi \cup \{\llbracket b\rrbracket_{\ss}\}, \ss, \sh)) = 
(s, h)$, we obtain $(s, h) \in Q$.
\end{proof}

\begin{thm}[Soundness of Featherweight VeriFast]
\emph{$$\mathsf{sym}\textsf{-}\mathsf{safe\_program}(c) \Rightarrow \mathsf{safe\_program}(c)\eqno{\qEd}$$}
\end{thm}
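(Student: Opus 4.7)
The plan is to derive this theorem as an immediate corollary of two soundness results that have already been established earlier in the paper. Specifically, the final line of the Soundness Lemma in Section~\ref{sec:symexec} gives
\[
\mathsf{sym}\textsf{-}\mathsf{safe\_program}(c) \Rightarrow \mathsf{sc}\textsf{-}\mathsf{safe\_program}(c),
\]
and the Theorem on Soundness of Semiconcrete Execution in Section~\ref{sec:scexec} gives
\[
\mathsf{sc}\textsf{-}\mathsf{safe\_program}(c) \Rightarrow \mathsf{safe\_program}(c).
\]
So my proof is simply: assume $\mathsf{sym}\textsf{-}\mathsf{safe\_program}(c)$; apply the first implication to obtain $\mathsf{sc}\textsf{-}\mathsf{safe\_program}(c)$; apply the second to obtain $\mathsf{safe\_program}(c)$. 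This matches exactly the two-step chain of soundness that was announced at the end of Section~\ref{sec:exec}: $\mathsf{safe\_program}(c) \Leftarrow \mathsf{sc}\textsf{-}\mathsf{safe\_program}(c) \Leftarrow \mathsf{sym}\textsf{-}\mathsf{safe\_program}(c)$.

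In other words, this theorem carries no new content; the real work was done earlier, in the two intermediate soundness results. The first hard step was the key lemma that $\mathsf{scexec}(c); \kappa \Rrightarrow \kappa; \mathsf{exec}(c)$ (assuming all routines are valid), proved by induction on the concrete execution depth $n$, with the difficult cases being routine calls, while loops, and $\mathbf{open}/\mathbf{close}$, all of which rely on locality, modification bounds, and the rewriting lemmas for consumption followed by production. The second hard step was the chain of $\rightsquigarrow_I$-soundness properties culminating in $\mathsf{symexec}(c) \rightsquigarrow_I \mathsf{scexec}(c)$, whose most delicate parts were the soundness of $\mathsf{fresh}$ (which requires extending an interpretation by binding the freshly chosen symbol to the demonically chosen value) and the soundness of $\mathsf{sassume}$ (which appeals to SMT solver soundness to translate a symbolic non-refutation into a semantic one).

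Given that both intermediate results are already in hand, there is no obstacle at all in the final step. The only thing to be careful about is to cite the two prior results in the correct direction and to observe that the two implications compose directly on the same command $c$, with no side conditions to discharge: the quantification over routines $r$ and the obligation on the main command embedded in $\mathsf{sc}\textsf{-}\mathsf{safe\_program}$ are already packaged inside the statements of the earlier results, so no unfolding is needed here. The proof therefore consists of one sentence and a final $\qEd$.
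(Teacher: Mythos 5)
Your proof is correct and is exactly the paper's argument: the theorem is obtained by composing the last line of the symbolic-execution Soundness Lemma, $\mathsf{sym}\textsf{-}\mathsf{safe\_program}(c) \Rightarrow \mathsf{sc}\textsf{-}\mathsf{safe\_program}(c)$, with the Soundness of Semiconcrete Execution theorem, $\mathsf{sc}\textsf{-}\mathsf{safe\_program}(c) \Rightarrow \mathsf{safe\_program}(c)$. Your observation that the two implications compose on the same $c$ with no side conditions to discharge matches the paper's one-sentence justification.
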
\medskip

\noindent Combining the soundness of symbolic execution with respect to semiconcrete execution and the
soundness of semiconcrete execution with respect to concrete execution, we obtain the soundness of
Featherweight VeriFast: if symbolic execution does not fail, then concrete execution does not fail.

\section{Mechanisation}\label{sec:mech}

Above we presented a formal definition of Featherweight VeriFast and we 
gave the highlights of a proof of its soundness. We hope that the 
definitions are clear and the proof outline is convincing. However, the 
definition, while formal (in the sense of: consisting of symbols rather 
than natural language), is written in the general language of mathematics 
and not in any particular explicitly defined \emph{formal logic}, with a 
well-defined formal language of \emph{formulae} and a well-defined formal 
language of \emph{proofs} that specifies which formulae are 
\emph{logically true}. Therefore, the precise meaning of the definition 
might not be clear to all readers. A fortiori, the soundness proof is not 
expressed in such a formal language of proofs, and therefore, there is 
always the possibility that some of the inferences made are \emph{invalid} 
and the conclusion is \emph{false}; i.e., it is not an argument that will 
necessarily convince all readers.

To address these limitations, we developed a definition and soundness 
proof of a slight variant of Featherweight VeriFast, called Mechanised 
Featherweight VeriFast, in the machine-readable formal language of the 
interactive proof assistant Coq. Coq is a computer program that takes 
as input a set of files containing definitions and proofs expressed in its 
formal language, and checks that these definitions and proofs are indeed 
well-formed. Since we have successfully checked our development with Coq, 
we can have very high confidence that the theorems that we have proven are 
indeed true, with respect to the given definitions.

Note that it is still possible that Mechanised Featherweight VeriFast 
contains errors: it might still be the case that the stated definitions 
and theorems are not the ones that we \emph{intended}; for example, if we 
made an error in the definition of the concrete execution such that 
concrete execution always blocks, or we made an error in the definition of 
the symbolic execution such that symbolic execution always fails, then the 
soundness theorem holds vacuously and does not really tell us anything 
meaningful. We partially address this issue by including a small 
\emph{test suite} in our development, where we run the concrete execution 
and the symbolic execution on specific example programs, and test that 
concrete execution does indeed sometimes fail as expected, and that 
symbolic execution does indeed sometimes succeed as expected. Still, we 
should remain skeptical, and confidence in the \emph{relevance} of a 
formally proven statement can never be 100\%. It can be improved further 
by enlarging the test suite and/or by proving additional properties of the 
various executions, e.g.~by relating MFVF's concrete execution to another 
programming language semantics found in the literature.

While MFVF follows FVF very closely in most respects, there are a few 
differences, mainly motivated by the fact that we wanted MFVF to be 
\emph{executable} so as to be able to test it easily, whereas for FVF 
\emph{simplicity} is more important. Also, MFVF has a few minor additional 
features, which were left out of FVF, again for the sake of simplicity.

In the remainder of this section, we briefly discuss the main differences 
between MFVF and FVF and the executability of MFVF, we show the soundness 
theorem, and we point the reader to the full Coq sources which are 
available online.

\subsection{Differences between MFVF and FVF: Syntax}

In Figure~\ref{fig:mfvf} we show the syntax of the programming language and the annotations 
accepted by MFVF. The differences with FVF are shown in red; as the reader 
can see, they are very minor.

\begin{figure}
$$\begin{array}{r l}
& z \in \mathbb{Z}, n \in \mathbb{N}\\
& x \in \mathit{Vars}\\
e ::= & z\ |\ x\ |\ e + e\\
b ::= & e = e\ |\ e < e\ |\ \lnot b\\
c ::= & x := e\ |\ (c; c)\ |\ \mathbf{if}\ b\ \mathbf{then}\ c\ \mathbf{else}\ c\ |\ {\color{Red}\mathbf{skip}}\ |\ {\color{Red}\mathbf{message}\ \textit{text}}\\
& |\ {\color{Red}x\;{:=}\;}r(\overline{e})\ |\ x := \mathbf{malloc}(n)\ |\ x := [e]\ |\ [e] := e\ |\ \mathbf{free}(e)\\
& |\ \mathbf{while}\ b\ \mathbf{inv}\ a\ \mathbf{do}\ c\ |\ \mathbf{open}\ q(\overline{e}, {\color{Red}\overline{?\_}})\ |\ \mathbf{close}\ q(\overline{e})\\
\mathit{rdef} ::= & \mathbf{routine}\ r(\overline{x}) = c\\
\\
& q \in \mathit{UserDefinedPredicates}\\
p ::= & \mapsto\ |\ \mathsf{mb}\ |\ q\\
a ::= & b\ |\ p(\overline{e}, \overline{{?}x})\ |\ a * a\ |\ \mathbf{if}\ b\ \mathbf{then}\ a\ \mathbf{else}\ a\\
\mathit{preddef} ::= & \mathbf{predicate}\ q(\overline{x}) = a\\
\mathit{rspec} ::= & \mathbf{routine}\ r(\overline{x})\ \mathbf{req}\ a\ \mathbf{ens}\ a
\end{array}$$
\caption{Syntax of Mechanised Featherweight VeriFast's input language}\label{fig:mfvf}
\end{figure}

The main difference is that MFVF supports \emph{routine return values}; 
when executing a routine call $x := r(\overline{e})$, after execution of 
the routine body ends, the value assigned by the routine body to variable 
$\mathsf{result}$ is assigned to variable $x$ of the caller.

A minor difference is in the syntax of $\mathbf{open}$ commands: MFVF allows the command to leave some of the chunk arguments unspecified.
The command $\mathbf{open}\ q(\overline{e}, \overline{?\_})$ opens some chunk that matches the pattern $q(\overline{e}, \overline{?\_})$.

Two new commands are added. The $\mathbf{skip}$ command does nothing; it is equivalent to $x := x$.
The command $\mathbf{message}\ \mathit{text}$ prints message $\mathit{text}$ to the console.
This command is useful in MFVF for testing the executions.

\subsection{Differences between MFVF and FVF: Executions}

The main difference between the executions (concrete execution, 
semiconcrete execution, and symbolic execution) of MFVF and those of FVF 
is in the definition and use of the auxiliary mutators for the consumption 
of heap chunks ($\mathsf{cconsume\_chunks}(h)$, $\mathsf{consume\_chunks}(h)$, and 
$\mathsf{sconsume\_chunks}(\sh)$ in FVF). In FVF, these mutators take as an 
argument the precise multiset of heap chunks (up to provable equality for 
symbolic execution) to be consumed. However, at a typical use site, only 
part of the argument list of a chunk is fixed, and the remaining arguments 
are to be looked up in the heap. In FVF, this is achieved by angelically 
choosing these remaining arguments.

For example, consider symbolic execution of a heap lookup command:
$$\begin{array}{l}
\textrm{FVF:}\\
\mathsf{symexec}(x := [e]) =\\
\quad \sell \leftarrow \mathsf{seval}(e); \bigoplus \sv.\;\mathsf{sconsume\_chunk}(\sell \mapsto \sv); \mathsf{sproduce\_chunk}(\sell \mapsto \sv); x := \sv\\
\mathsf{sconsume\_chunk} \in \mathit{SHeaps} \rightarrow \mathit{SOutcomes}(\mathsf{unit})\\
\\
\textrm{MFVF:}\\
\mathsf{symexec}(x := [e]) =\\
\quad \sell \leftarrow \mathsf{seval}(e); {\color{Red}[\sv]\;{\leftarrow}}\;\mathsf{sconsume\_chunk}({\color{Red}{\mapsto}, [\sell], 1}); \mathsf{sproduce\_chunk}(\sell \mapsto \sv); x := \sv\\
\mathsf{sconsume\_chunk} \in \mathit{SPredicates} \rightarrow \mathit{Terms}^* \rightarrow \mathbb{N} \rightarrow \mathit{SOutcomes}(\mathit{Terms}^*)
\end{array}$$\medskip

\noindent In FVF, symbolic execution of a command of the form $x := 
[e]$ that reads the memory cell at address $e$ evaluates $e$ to obtain 
term $\sell$, then angelically chooses some term $\sv$, and then attempts 
to consume the points-to chunk that maps address $\sell$ to this 
angelically chosen term $\sv$. This consumption operation succeeds if a 
points-to chunk exists in the symbolic heap such that the SMT solver 
succeeds in proving that its arguments are equal to $\sell$ and $\sv$, 
respectively.

This definition is perfectly fine, except that angelically choosing a term 
from the set of all terms (that use only symbols that are already being 
used by the current symbolic state) is not directly executable, since that 
set is infinite, and even if it was finite, it would be highly 
inefficient. Therefore, in MFVF, a slightly more complex but directly 
executable version of the chunk consumption mutators is used. These 
mutators consume only a single chunk at a time, and they take as arguments 
the predicate name, the list of fixed chunk arguments, and the number of 
non-fixed chunk arguments; they return the values of the non-fixed 
arguments of the chunk that was consumed as their answer. Correspondingly, 
in MFVF, symbolic execution of $x := [e]$, rather than angelically 
choosing a term for the value of the cell, retrieves that term as the 
answer of the $\mathsf{sconsume\_chunk}$ auxiliary mutator.

\subsection{Executability}

This concludes the discussion of the differences between MFVF and FVF. We 
now discuss some specific encoding choices made when defining MFVF to 
obtain \emph{executable} definitions of symbolic execution and concrete 
execution.

\newcommand{\coqindtype}[1]{\textbf{\textsf{\small #1}}}

The most important such choice is in the definition of the type of 
outcomes. The definition of inductive type $\coqindtype{outcome}$ in 
MFVF is shown below.\vspace{-5 pt}
\begin{beamerframe}{Executability: Outcomes}

\begin{changemargin}{-1cm}{-1cm}
$$\begin{array}{l}
\texttt{Inductive}\ \coqindtype{type\_name} := \mathsf{n\_Empty\_set}\ |\ \mathsf{n\_bool}\ |\ \mathsf{n\_Z}\ |\ \mathsf{n\_T}(T: \texttt{Type})\mathsf{.}\\
\\
\texttt{Fixpoint}\ \textsf{Itype\_name}(n: \coqindtype{type\_name}): \texttt{Type} := \texttt{match}\ n\ \texttt{with}\\
\quad|\ \mathsf{n\_Empty\_set} \Rightarrow \coqindtype{Empty\_set}\\
\quad|\ \mathsf{n\_bool} \Rightarrow \coqindtype{bool}\\
\quad|\ \mathsf{n\_Z} \Rightarrow \coqindtype{Z}\\
\quad|\ \mathsf{n\_T}\ T \Rightarrow T\\
\quad\texttt{end}\mathsf{.}\\
\\
\texttt{Inductive}\ \coqindtype{set}(X: \texttt{Type}) := \mathsf{set\_}(n: \coqindtype{type\_name})(f: \mathsf{Itype\_name}\ n \rightarrow X)\mathsf{.}\\
\\
\texttt{Inductive}\ \coqindtype{outcome}(S\ A: \texttt{Type}) :=\\
|\ \mathsf{single}(s: S)(a: A)\\
|\ \mathsf{demonic}(\mathit{os}: \coqindtype{set}\ (\coqindtype{outcome}\ S\ A))\\
|\ \mathsf{angelic}(\mathit{os}: \coqindtype{set}\ (\coqindtype{outcome}\ S\ A))\\
|\ \mathsf{message}(\mathit{msg}: \coqindtype{string})(o: \coqindtype{outcome}\ S\ A)\mathsf{.}
\end{array}$$
\end{changemargin}

\end{beamerframe}\medskip

\noindent It corresponds exactly to the definition of outcomes 
given earlier for FVF (except for the extra case of messages): an outcome 
$\phi$ is either a singleton outcome $\langle \sigma, a\rangle$ with 
output state $\sigma$ and answer $a$, or a demonic choice $\bigotimes 
\Phi$ or angelic choice $\bigoplus \Phi$ over a set of outcomes $\Phi$. 
However, there are two interesting aspects about this definition, and more 
specifically, about the type $\coqindtype{{set}}$ used for the sets of 
outcomes.

First of all, we had to choose this type carefully to obtain a proper 
inductive definition. The simplest approach for defining a type for sets 
of elements of some type $X$ is as follows: 
$$\texttt{Definition}\ \mathsf{set}\ X := X \rightarrow 
\texttt{Prop}\mathsf{.}$$ That is, a set of elements of type $X$ is simply 
a predicate over type $X$. However, using this definition of sets in the 
definition of outcomes would cause Coq to reject the definition of 
outcomes, since it would not be a proper inductive definition. Indeed, it 
would allow us to write $\mathsf{demonic}\ (\lambda \_.\; \mathsf{True})$, 
denoting the demonic choice over \emph{all} outcomes, including 
$\mathsf{demonic}\ (\lambda \_. \mathsf{True})$ itself, defeating the 
crucial notion that each value of an inductive type is built from 
\emph{smaller} values of that type, and thus rendering proof by induction 
unsound. 

Perhaps the simplest possible definition for a type of sets of elements of 
type $X$ that is compatible with inductive definitions is the following: 
$$\texttt{Inductive}\ \coqindtype{{set}}(X: \texttt{Type}) := \mathsf{set\_}(I: \texttt{Type})(f: I \rightarrow 
X)\mathsf{.}$$ This type allows a set to be constructed by providing an 
\emph{index type} $I$ and a function $f$ that maps each value of type $I$ 
to some value of type $X$. For example, the set containing exactly the 
integers 24 and 42 can be constructed as follows:
$$\mathsf{set\_}\ \coqindtype{{bool}}\ (\lambda b.\;\mathsf{if}\ b\ 
\mathsf{then}\ 24\ \mathsf{else}\ 42)$$ Using this type of sets in the definition of outcomes would be accepted by Coq.

However, another problem would still remain: we would like to write a Coq 
function that takes the outcome of symbolically executing some program 
starting from the empty symbolic state, and decides if that outcome 
satisfies postcondition $\mathsf{True}$, i.e., if symbolic execution has 
failed or not. An outcome satisfies postcondition $\mathsf{True}$ iff the 
outcome is a singleton outcome, or it is a demonic choice over some set of 
outcomes, each of which satisfies postcondition $\mathsf{True}$, or it is 
an angelic choice over some set of outcomes, at least one of which 
satisfies postcondition $\mathsf{True}$ (or it is a message outcome and 
its continuation satisfies postcondition $\mathsf{True}$). So, for demonic 
and angelic choice over some set of outcomes, we need to be able to 
enumerate the elements of the set. Given the definition of sets above, it 
would be necessary to enumerate the elements of the index type $I$. 
Unfortunately, however, this is not generally possible: the index type 
might be infinite.

However, MFVF's definition of symbolic execution uses only very restricted 
forms of demonic or angelic choice: it only uses blocking, failure, and 
binary choice, i.e., choices over zero elements or two elements. So, if in 
symbolic execution we use as index types only the type 
$\coqindtype{{Empty\_set}}$ and the type $\coqindtype{{bool}}$, can 
we write our Coq function? Unfortunately, still no, because this would 
require our function to perform a case analysis on a comparison between the 
index type of a set and the types $\coqindtype{{Empty\_set}}$ or 
$\coqindtype{{bool}}$, and Coq's execution engine does not support 
this. Coq's execution engine supports only pattern matching on values of 
inductive types, and types themselves are not values of inductive types.

The solution we adopted for this problem is to not directly allow 
arbitrary types to be specified as the index type when constructing a set, 
but rather to define an inductive type \coqindtype{{type\_name}} of 
\emph{type names}, with names for type $\coqindtype{{Empty\_set}}$, for 
type $\coqindtype{{bool}}$, and for the type $\coqindtype{{Z}}$ of 
integers, and a fallback case $\mathsf{n\_T}$ for arbitrary types. We also 
defined an interpretation function $\mathsf{Itype\_name}$ for these type 
names that maps each type name to its corresponding type. By using type 
names and the interpretation function in the definition of sets, we were 
able to write a Coq function that decides whether an outcome satisfies 
postcondition $\mathsf{True}$. (For the cases $\mathsf{n\_Z}$ and 
$\mathsf{n\_T}$ this function is not executable, but since symbolic 
execution uses only the other two cases, it executes properly for the 
outcomes of symbolic execution.)

\begin{figure}
$$\begin{array}{l}
\mathtt{Definition}\ \mathsf{listDef} :=\\
\label{listing:reverse}
\quad \mathbf{predicate}\ \mathsf{list}(\mathsf{l}) =\\
\quad\quad
\mathbf{if}\ \mathsf{l} = 0\ \mathbf{then}\ 0 = 0\ \mathbf{else}\ 
\mathsf{mb}(\mathsf{l}, 2) * \mathsf{l} \mapsto \_ * \mathsf{l} + 1 \mapsto {?}\mathsf{next} * \mathsf{list}(\mathsf{next})\mathsf{.}\\
\\
\mathtt{Compute}\ \mathsf{svalid\_routine}\ [\mathsf{listDef}]\ []\ [\mathsf{l}]\ \mathsf{list}(\mathsf{l})\ \mathsf{list}(\mathsf{result})\\
\quad(\\
\quad\quad \mathbf{close}\ \mathsf{list}(\mathsf{b});\\
\quad\quad \mathbf{while}\ \lnot(\mathsf{a} = 0)\ \mathbf{inv}\ \mathsf{list}(\mathsf{a}) * \mathsf{list}(\mathsf{b})\ \mathbf{do}\ (\\
\quad\quad\quad \mathbf{open}\ \mathsf{list}(\mathsf{a});\\
\quad\quad\quad \mathsf{n} := [\mathsf{a} + 1]; [\mathsf{a} + 1] := \mathsf{b}; \mathsf{b} := \mathsf{a}; \mathsf{a} := \mathsf{t};\\
\quad\quad\quad \mathbf{close}\ \mathsf{list}(\mathsf{b})\\
\quad\quad );\\
\quad\quad \mathbf{open}\ \mathsf{list}(\mathsf{a});\\
\quad\quad \mathsf{result} := \mathsf{b}\\
\quad)\mathsf{.}\\
\textit{\textsf{ok}}
\end{array}$$
\caption{Running MFVF on an in-place list reversal routine}\label{fig:reverse}
\end{figure}

Figure~\ref{fig:reverse} shows an example where we run symbolic execution to check validity of a 
routine that performs in-place reversal of a linked list.
Function 
$\mathsf{svalid\_routine}$ (the syntax of the example is slightly 
simplified from the actual Coq development) takes as arguments a list of 
predicate definitions (in the example, just the definition $\mathsf{listDef}$ of 
the $\mathsf{list}$ predicate), a list of routine specifications (in the 
example, an empty list, since the list reversal routine does not itself 
perform any routine calls), a list of parameters (in the example, just a 
single parameter $\mathsf{l}$, a pointer to the linked list to be 
reversed), a precondition (in the example, $\mathsf{list}(\mathsf{l})$, 
expressing that the routine expects to find a linked list at address 
$\mathsf{l}$), a postcondition (in the example, 
$\mathsf{list}(\mathsf{result})$, expressing that after the routine 
completes, the routine's result will point to a linked list), and the body 
of the routine to be verified. Coq command $\texttt{Compute}$ evaluates a 
Coq expression and prints the result: in the example, the result is 
\textit{\textsf{ok}}, indicating that the routine was verified 
successfully. 

MFVF includes an executable definition of symbolic execution and a 
\emph{semi-executable} definition of concrete execution. Concrete 
execution is \emph{semi-executable} in the sense that we have been able to 
write Coq functions that compute, for a given input program and a given 
sequence of values for demonic choices over the booleans or the integers, 
if concrete execution, for those choices, ends up in a singleton outcome 
or in failure.

For example, consider the program that allocates a memory cell and then 
accesses the memory cell at address 42. If the newly allocated memory cell 
was allocated at address 42, execution succeeds; otherwise, it fails.

We can easily check that both execution paths do indeed behave as 
expected, using the Coq functions $\mathsf{atZ}$, $\mathsf{isSingle}$, and 
$\mathsf{isFail}$ shown in Figure~\ref{fig:mfvf-cexec}. As shown in the figure, using Coq's $\texttt{Compute}$ command, 
and by first picking value 2 for the depth of concrete execution (any 
greater value would do as well) and then 42 for the address of the newly 
allocated memory block, we can confirm that we end up in a singleton 
outcome, and that by alternatively picking address 43 we end up in 
failure.

\begin{figure}
$$\begin{array}{l}
\mathtt{Definition}\ \mathsf{atZ}\ z\ o := \mathtt{match}\ o\ \mathtt{with}\\
\quad|\ \mathsf{Some}\ (\mathsf{demonic}\ (\mathsf{set\_}\ \mathsf{n\_Z}\ o')) \Rightarrow \mathsf{Some}\ (o'\ z)\\
\quad|\ \_ \Rightarrow \mathsf{None}\ \mathtt{end}\mathsf{.}\\
\mathtt{Definition}\ \mathsf{isSingle}\ o :=\\
\quad\texttt{match}\ o\ \mathtt{with}\ \mathsf{Some}\ (\mathsf{single}\ \_\ \_) \Rightarrow \mathsf{true}\ |\ \_ \Rightarrow \mathsf{false}\ \mathtt{end}\mathsf{.}\\
\mathtt{Definition}\ \mathsf{isFail}\ o := \texttt{match}\ o\ \mathtt{with}\\
\quad|\ \mathsf{Some}\ (\mathsf{angelic}\ (\mathsf{set\_}\ \mathsf{n\_Empty\_set}\ \_)) \Rightarrow \mathsf{true}\\
\quad|\ \_ \Rightarrow \mathsf{false}\ \mathtt{end}\mathsf{.}\\
\\
\mathtt{Definition}\ \mathsf{o} := \mathsf{cstate0}\ \triangleright\ \mathsf{exec}\ []\ (\mathsf{x} := \mathbf{malloc}(1); [42] := 123)\mathsf{.}\\
\\
\mathtt{Compute}\ \mathsf{Some}\ \mathsf\mathsf{o}\ \triangleright\ \mathsf{atZ}\ 2\ \triangleright\ \mathsf{atZ}\ 42\ \triangleright\ \mathsf{isSingle}\mathsf{.}\\
\textit{\textsf{true}}\\
\mathtt{Compute}\ \mathsf{Some}\ \mathsf\mathsf{o}\ \triangleright\ \mathsf{atZ}\ 2\ \triangleright\ \mathsf{atZ}\ 43\ \triangleright\ \mathsf{isFail}\mathsf{.}\\
\textit{\textsf{true}}
\end{array}$$
\caption{Testing MFVF concrete execution}\label{fig:mfvf-cexec}
\end{figure}

The definition of function $\mathsf{atZ}$ exploits the fact 
that there is a separate case for type $\coqindtype{{Z}}$ in type 
$\coqindtype{{type\_name}}$, and that concrete execution of 
$\mathbf{malloc}$ commands uses this type name in its demonic choice. 

\subsection{Soundness}

The Coq statement of the soundness theorem is shown below: if symbolic 
execution of a program does not fail, then concrete execution of that 
program does not fail. The proof is accepted by Coq.

$$\begin{array}{@{} l @{}}
\mathtt{Theorem}\ \mathsf{soundness}\ \mathit{rspecs}\ \mathit{pdefs}\ \mathit{rdefs}\ c:\\
\quad \mathsf{svalid\_program}\ \mathit{rspecs}\ \mathit{pdefs}\ \mathit{rdefs}\ c = \mathsf{ok}\rightarrow\\
\quad \mathsf{cvalid\_program}\ \mathit{rdefs}\ c\mathsf{.}\\
\mathtt{Proof.}\\
\cdots\\
\mathtt{Qed.}\\
\\
\mathtt{Print\ Assumptions}\ \mathsf{soundness.}\\
\textit{\textsf{Coq.Sets.Ensembles.Extensionality\_Ensembles}}\\
\textit{\textsf{Coq.Logic.Classical\_Prop.classic}}\\
\textit{\textsf{Coq.Logic.IndefiniteDescription.constructive\_indefinite\_description}}\\
\textit{\textsf{Coq.Logic.FunctionalExtensionality.functional\_extensionality\_dep}}
\end{array}$$\medskip

\noindent We can use Coq's \texttt{Print Assumptions} command to check which axioms 
are used (directly or indirectly) in the proof of the soundness theorem. 
Only the four listed axioms are used: they are axioms of classical logic, 
offered by the Coq standard library. 

The Coq development can be browsed in HTML and PDF form and the full 
sources can be downloaded at \textsf{http://www.cs.kuleuven.be/\symbol{126}bartj/fvf/}.

\section{Related work}\label{sec:related-work}

\subsection{Hoare logic, separation logic}

A more abstract, higher-level approach for reasoning about imperative 
pointer-manipulating programs is given by separation logic 
\cite{seplogic-ohearn,seplogic,ohearn-marktoberdorf}, which is an 
extension of Hoare logic \cite{hoare-logic}.

Hoare logic deals with \emph{program correctness judgments} (also known as 
\emph{Hoare triples}) of the form $\{b\}\ c\ \{b'\}$, where $b$, the 
\emph{precondition}, and $b'$, the \emph{postcondition}, are boolean 
expressions (as in Definition~\ref{defi:cmd-syntax}, except that they may 
also contain additional logical operators such as conjunction and quantification), and $c$ is 
a command that does not involve the heap (i.e., it does not allocate, 
deallocate, or access heap cells); the judgment means that $c$, when 
started with a store that satisfies precondition $b$, if it terminates, 
terminates with a store that satisfies postcondition $b'$:
$$\forall s.\;\llbracket b\rrbracket_s = \mathsf{true} \Rightarrow s 
\triangleright \mathsf{exec}(c)\ \{s'.\;\llbracket b'\rrbracket_{s'} = \mathsf{true}\}$$

Hoare logic defines a number of \emph{axioms} and \emph{inference rules} 
for deriving correctness judgments; the main ones are shown in 
Figure~\ref{fig:hoarelogic}. Here, $b[e/x]$ denotes the boolean expression 
obtained by substituting expression $e$ for variable $x$ in $b$, and $b 
\Rightarrow b'$ denotes that $b$ implies $b'$ in all stores, \ie{} 
$\forall s.\;\llbracket b\rrbracket_s \Rightarrow \llbracket 
b'\rrbracket_s$. 

\begin{figure}
\begin{mathpar}
\inferrule[Assign]{}{
\{b[e/x]\}\ x := e\ \{b\}
}
\and
\inferrule[If]{
\{b \land b'\}\ c\ \{b''\}\\
\{b \land \lnot b'\}\ c'\ \{b''\}
}{
\{b\}\ \mathbf{if}\ b'\ \mathbf{then}\ c\ \mathbf{else}\ c'\ \{b''\}
}
\and
\inferrule[While]{
\{b \land b'\}\ c\ \{b\}
}{
\{b\}\ \mathbf{while}\ b'\ \mathbf{do}\ c\ \{b \land \lnot b'\}
}
\and
\inferrule[Seq]{
\{b\}\ c\ \{b'\}\\
\{b'\}\ c'\ \{b''\}
}{
\{b\}\ c; c'\ \{b''\}
}
\and
\inferrule[Conseq]{
b \Rightarrow b'\\
\{b'\}\ c\ \{b''\}\\
b'' \Rightarrow b'''
}{
\{b\}\ c\ \{b'''\}
}
\and
\inferrule[Exists]{
\forall v.\;\{b[v/x]\}\ c\ \{b'[v/x]\}
}{
\{\exists x.\;b\}\ c\ \{\exists x.\;b'\}
}
\end{mathpar}
\caption{The main axioms and inference rules of Hoare logic}\label{fig:hoarelogic}
\end{figure}

For example, we can derive the judgment $\{0 \le n\}\ i := 0; 
\mathbf{while}\ i < n\ \mathbf{do}\ i := i + 1\ \{i = n\}$ using the proof 
tree in Figure~\ref{fig:hoare-tree}.

\begin{figure}
\[
\trfrac[\textsc{\small Seq}]{
\trfrac[\textsc{\small Assign}]{}{
(b)
}\quad
\trfrac[\textsc{\small Conseq}]{
(d)\quad
\trfrac[\textsc{\small While}]{
\trfrac[\textsc{\small Conseq}]{
(h)\quad
\trfrac[\textsc{\small Assign}]{}{
(i)
}\quad
(j)
}{
(g)
}
}{
(e)
}\quad
(f)
}{
(c)
}
}{
(a)
}
\]
\[
\begin{array}{l l}
(a) & \{0 \le n\}\ i := 0; \mathbf{while}\ i < n\ \mathbf{do}\ i := i + 1\ \{i = n\}\\
(b) & \{0 \le n\}\ i := 0\ \{i \le n\}\\
(c) & \{i \le n\}\ \mathbf{while}\ i < n\ \mathbf{do}\ i := i + 1\ \{i = n\}\\
(d) & i \le n \Rightarrow i \le n\\
(e) & \{i \le n\}\ \mathbf{while}\ i < n\ \mathbf{do}\ i := i + 1\ \{i \le n \land \lnot (i < n)\}\\
(f) & i \le n \land \lnot (i < n) \Rightarrow i = n\\
(g) & \{i \le n \land i < n\}\ i := i + 1\ \{i \le n\}\\
(h) & i \le n \land i < n \Rightarrow i + 1 \le n\\
(i) & \{i + 1 \le n\}\ i := i + 1\ \{i \le n\}\\
(j) & i \le n \Rightarrow i \le n\\
\end{array}
\]
\caption{Proof tree in Hoare logic for a simple example program}\label{fig:hoare-tree}
\end{figure}

A more convenient representation of this proof tree is in the form of the \emph{proof outline} of Figure~\ref{fig:hoare-outline}, where assertions inserted between components of a
sequential composition indicate applications of the \textsc{Seq} rule, and multiple consecutive assertions indicate applications of the \textsc{Conseq}
rule.
\begin{figure}
\begin{displaymath}
\begin{array}{l}
\{0 \le n\}\\
i := 0;\\
\{i \le n\}\\
\mathbf{while}\ i < n\ \mathbf{do}\\
\quad \{i \le n \land i < n\}\\
\quad \{i + 1 \le n\}\\
\quad i := i + 1\\
\quad \{i \le n\}\\
\{i \le n \land \lnot (i < n)\}\\
\{i \le n\}
\end{array}
\end{displaymath}
\caption{Proof outline in Hoare logic for a simple example program}\label{fig:hoare-outline}
\end{figure}

Separation logic extends Hoare logic with additional assertion logic constructs and proof rules for reasoning conveniently about heap-manipulating programs.
The syntax of separation logic assertions extends the syntax of logical formulae with constructs for specifying the heap: the assertion $\mathbf{emp}$ states
that the heap is empty; the points-to assertion $e \mapsto e'$ states that the heap consists of exactly one heap cell, mapping address $e$ to value $e'$, and
the separating conjunction $P * Q$ states that the heap can be split into two disjoint parts such that $P$ holds for one part and $Q$ holds for the other.
Instead of Featherweight VeriFast's $?x$ syntax, separation logic uses regular existential quantification. Formally:
$$\begin{array}{l l}
s, h \vDash \mathbf{emp} & \Leftrightarrow h = \emptyset\\
s, h \vDash e \mapsto e' & \Leftrightarrow h = \{(\llbracket e\rrbracket_s, \llbracket e'\rrbracket_s)\}\\
s, h \vDash P * Q & \Leftrightarrow \exists h_1, h_2.\;h = h_1 \uplus h_2 \land s, h_1 \vDash P \land s, h_2 \vDash Q\\
s, h \vDash \exists x.\;P & \Leftrightarrow \exists v.\;s[x:=v], h \vDash P\\
s, h \vDash b & \Leftrightarrow \llbracket b\rrbracket_s = \mathsf{true}\\
\end{array}$$
In separation logic, predicates are typically treated like inductive definitions, i.e.~their meaning is taken to be the smallest interpretation
(i.e.~set of heaps) that satisfies the definition;
such an interpretation always exists (by the Knaster-Tarski theorem) provided that predicates are used inside of predicate
definitions only in positive positions, i.e.~not under negations or on the left-hand side of implications \cite{seplogic-abstraction}.
The typical example of such a predicate is the predicate $\mathsf{lseg}(\ell, \ell')$ denoting a linked list segment from a starting node $\ell$ (inclusive)
to a limiting node $\ell'$ (exclusive):
$$\mathsf{lseg}(\ell, \ell') \stackrel{\mathrm{def}}{=} \ell = \ell' \land \mathbf{emp} \lor \exists v, n.\;\ell \mapsto v * \ell + 1 \mapsto n * \mathsf{lseg}(n, \ell')$$
Separation logic's program logic extends Hoare logic's inference system with axioms for the heap manipulation commands and the \emph{frame axiom} (see Figure~\ref{fig:seplogic}).
The former are \emph{small axioms}: they mention only the heap cells required for the command to succeed. The frame axiom allows the small axioms to be lifted to larger
heaps. Similarly, one can write small specifications for routines and use the frame axiom to lift those to the larger heap present in a given calling context.

\begin{figure}
\begin{mathpar}
\inferrule[Cons]{}{
\{\mathbf{emp}\}\ x := \mathbf{cons}(v, v')\ \{x \mapsto v * x + 1 \mapsto v'\}
}
\and
\inferrule[Dispose]{}{
\{\ell \mapsto v\}\ \mathbf{dispose}(\ell)\ \{\mathbf{emp}\}
}
\and
\inferrule[Read]{}{
\{\ell \mapsto v\}\ x := [\ell]\ \{\ell \mapsto v \land x = v\}
}
\and
\inferrule[Write]{}{
\{\ell \mapsto v\}\ [\ell] := v'\ \{\ell \mapsto v'\}
}
\and
\inferrule[Frame]{
\{P\}\ c\ \{Q\}
}{
\{P * R\}\ c\ \{Q * R\}
}
\ \textrm{if}\ \mathsf{targets}(c) \cap \mathsf{freevars}(R) = \emptyset
\end{mathpar}
\caption{The additional proof rules of separation logic}\label{fig:seplogic}
\end{figure}

 Figure~\ref{fig:reverse-seplogic} shows a proof outline in separation logic for a program that performs an in-place reversal of a linked list.\enlargethispage{\baselineskip}

\begin{figure}
$$\begin{array}{l}
\{\mathsf{lseg}(i, 0)\}\\
j := 0;\\
\{\mathsf{lseg}(i, 0) * \mathsf{lseg}(j, 0)\}\\
\mathbf{while}\ i \neq 0\ \mathbf{do}\ (\\
\quad \{\mathsf{lseg}(i, 0) * \mathsf{lseg}(j, 0) \land i \neq 0\}\\
\quad \{\exists v, n.\;i + 1 \mapsto n * i \mapsto v * \mathsf{lseg}(n, 0) * \mathsf{lseg}(j, 0)\}\\
\quad\quad \{i + 1 \mapsto n * i \mapsto v * \mathsf{lseg}(n, 0) * \mathsf{lseg}(j, 0)\}\quad\textrm{Rule \textsc{Exists}. Fix $v, n$.}\\
\quad\quad\quad \{i + 1 \mapsto n\}\quad\textrm{Rule \textsc{Frame}.}\\
\quad\quad\quad k := [i + 1];\\
\quad\quad\quad \{i + 1 \mapsto n \land k = n\}\\
\quad\quad \{(i + 1 \mapsto n \land k = n) * i \mapsto v * \mathsf{lseg}(n, 0) * \mathsf{lseg}(j, 0)\}\\
\quad\quad \{i + 1 \mapsto k * i \mapsto v * \mathsf{lseg}(k, 0) * \mathsf{lseg}(j, 0)\}\\
\quad\quad\quad \{i + 1 \mapsto k\}\quad\textrm{Rule \textsc{Frame}.}\\
\quad\quad\quad [i + 1] := j;\\
\quad\quad\quad \{i + 1 \mapsto j\}\\
\quad\quad \{i + 1 \mapsto j * i \mapsto v * \mathsf{lseg}(k, 0) * \mathsf{lseg}(j, 0)\}\\
\quad\quad \{\mathsf{lseg}(k, 0) * \mathsf{lseg}(i, 0)\}\\
\quad\quad j := i;\\
\quad\quad \{\mathsf{lseg}(k, 0) * \mathsf{lseg}(j, 0)\}\\
\quad\quad i := k\\
\quad\quad \{\mathsf{lseg}(i, 0) * \mathsf{lseg}(j, 0)\}\\
\quad \{\mathsf{lseg}(i, 0) * \mathsf{lseg}(j, 0)\}\\
)\\
\{\mathsf{lseg}(i, 0) * \mathsf{lseg}(j, 0) \land i = 0\}\\
\{\mathsf{lseg}(j, 0)\}
\end{array}$$
\caption{A proof outline in separation logic of a program that performs an in-place reversal of a linked list}\label{fig:reverse-seplogic}
\end{figure}

VeriFast could be considered to be a type of ``separation logic theorem 
prover'', by interpreting the input files as a separation logic Hoare triple that serves as the proof goal and 
the annotations as hints to direct the construction of the proof. From 
this point of view, VeriFast applies the separation logic frame rule when 
verifying loops and routine calls.

\subsection{Separation logic tools}



\subsubsection{Smallfoot}

Smallfoot \cite{smallfoot2006} was a breakthrough in program verification tool development;
it was successful in its goal of showcasing for the first time the power of separation logic for automated program verification and analysis.
Like FVF, it takes as input an annotated program and checks each procedure
against its contract. The programming language is very similar: like FVF's, it is a simple while language with
procedures. The main difference is that it
includes concurrency constructs (resource declarations, parallel procedure calls, and conditional critical regions).
The annotation language is very similar as well: a precondition and postcondition must
be specified for each procedure, and a loop invariant must be specified for each loop; these are not inferred.
(If one of these is omitted, it defaults to $\mathbf{emp}$.)
The main difference is that besides the points-to assertion, Smallfoot has built-in predicates for trees, list segments, doubly-linked lists, and xor lists,
does not support user-defined predicates, and does not require $\mathbf{open}$ or $\mathbf{close}$ commands or any other kinds of proof hints (other than the procedure
and loop annotations mentioned above). Another difference is that it does not support (even FVF's very restricted form of) existential quantification.

The main difference in Smallfoot's functional behavior is that it is automatic: thanks to a complete, decidable proof theory for the supported assertion language,
Smallfoot never requires proof hints. In particular, not only does it automatically fold and unfold the definitions of the inductive predicates
(which in FVF requires $\mathbf{open}$ and $\mathbf{close}$ ghost commands),
it also has sufficient rules built in to reason automatically about inductive properties such as appending two list segments.
In FVF, this would require defining and calling a recursive ``lemma'' routine that establishes the property.

While Smallfoot's algorithm is in many ways more powerful and more interesting than FVF's, FVF's goal is educational,
and we believe its presentation in this article succeeds better at clearly conveying the essence of VeriFast's operation,
especially to an audience that is new to formal methods,
than the presentation of Smallfoot's operation \cite{smallfoot2006,smallfoot-symbex} does.

\subsubsection{Other tools}

Smallfoot's algorithm has been used as a basis for \emph{shape analysis} algorithms that automatically infer loop invariants and postconditions \cite{distefano-tacas2006},
and even preconditions \cite{DBLP:conf/popl/CalcagnoDOY09}.
These algorithms have been implemented in a tool called Infer \cite{infer} that has successfully been exploited commercially.
Another tool based on these ideas, called SLAyer \cite{slayer}, is being used inside Microsoft to verify Windows device drivers.

These techniques have been extended to a concurrent setting, e.g.~to infer invariants for shared resources \cite{distefano-resource-invariants}. Integration of
separation logic and rely-guarantee reasoning \cite{DBLP:conf/ifip/Jones83} has led to tools SmallfootRG \cite{smallfootrg} for verifying safety properties
and Cave \cite{cave} for verifying linearizability of fine-grained concurrent modules.

Extensions of separation logic for dealing with object-oriented programming patterns such as dynamic binding have been implemented in the tool jStar \cite{jstar}
that takes as input a Java program, a precondition and postcondition for each method, and a set of inference and abstraction rules, and attempts to automatically apply
these rules to verify each method body against its specification. jStar does not require (or support) annotations inside method bodies.

The HIP/SLEEK toolstack \cite{DBLP:journals/scp/ChinDNQ12} uses separation logic-based symbolic execution to automatically verify shape, size, and bag properties
of programs. Like VeriFast, it supports user-defined recursive predicates to express the shape of data structures.

\subsubsection{Proof assistant-based approaches}

Like VeriFast, the tools mentioned above take as input annotated programs and then run without further user interaction.
Another approach is to see program verification as a special case of interactive proof development, and to extend proof assistants like Isabelle/HOL and Coq
with theories defining program syntax and semantics and specification formalisms, as well as lemmas and tactics (reusable proof scripts) for aiding users in discharging
proof obligations.

Holfoot \cite{UCAM-CL-TR-799} is an implementation of Smallfoot inside the HOL 4 theorem prover.
In addition to the features supported by Smallfoot it can handle data and supports interactive proofs.
Moreover, it can handle arrays. Simple specifications with data like copying a list can be handled automatically.
More complicated ones like fully functional specifications of filtering a list, mergesort, quicksort or an
implementation of red-black trees require user interaction.
During this interaction all the features of the HOL 4 theorem prover can be used, including the interface to external SMT solvers like Yices.

Ynot \cite{DBLP:conf/icfp/ChlipalaMMSW09} is a library for the Coq proof assistant which turns it into a 
full-fledged environment for writing and verifying imperative programs. In 
the tradition of the Haskell IO monad, Ynot axiomatizes a parameterized 
monad of imperative computations, where the type of a computation specifies not only what type of data it returns, but also what Hoare-logic-style 
precondition and postcondition it satisfies. On top of the simple 
axiomatic base, the library defines a separation logic. Specialized 
automation tactics are able to discharge automatically most proof goals 
about separation-style formulas that describe heaps, meaning that building 
a certified Ynot program is often not much harder than writing that 
program in Haskell. 

Bedrock \cite{DBLP:conf/pldi/Chlipala11} is a Coq library for mostly-automated verification of low-level programs in computational separation logic; a major
difference from Ynot is that it has improved support for reasoning about code pointers.

Charge! \cite{charge} is a set of tactics for working with a shallow embedding of a higher-order separation logic for a subset of Java in Coq.

The Verified Software Toolchain project \cite{appel-verified-c-shakedown,vstbook,verified-software-toolchain-esop-2011} has produced a separation logic for C, called Verified C, in the form of a Coq library, as well as a Smallfoot implementation in Coq, extractable to OCaml, called VeriSmall \cite{verismall}, both proven sound in Coq with respect to the operational semantics of C against with the CompCert project \cite{Leroy-Compcert-CACM} verified the correctness of their C compiler, thus obtaining that the compiled program satisfies the verified properties.

\subsection{Non-separation logic tools}\enlargethispage{\baselineskip}

Another approach for extending Hoare logic to reason about programs with 
pointers (or other kinds of aliasing, such as Java's object references) is 
to simply treat the heap as a program variable whose value is a function 
that maps addresses to values, and to retain regular classical logic as 
the assertion language. The following tools are based on this approach. 

In this approach, the separation logic frame rule and small axioms that 
allow a simple syntactic treatment of heap mutation and procedure effect 
framing are generally not available, but other approaches to procedure 
effect framing may be used. Most alternative approaches are variants of 
\emph{dynamic frames} \cite{dynamic-frames}, where a module uses abstract 
variables of type ``set of memory locations'' to abstractly specify which 
memory locations are modified by a procedure as well as which memory 
locations may influence the value of abstract variables. 

VCC \cite{DBLP:conf/tphol/CohenDHLMSST09} is a verifier for concurrent C 
programs annotated with contracts expressed in classical logic. For each C 
function, VCC generates a set of verification conditions (using a variant 
of \emph{weakest preconditions} \cite{weakest-preconditions}) to be 
discharged by an SMT solver. For modularity, it uses the \emph{admissible 
invariants} approach: a \emph{two-state invariant} may be associated with 
each C struct instance $s$, which may mention the fields of $s$ as well as 
those of other struct instances $s'$, provided it is \emph{admissible}: 
any update of $s'.f$ that satisfies the invariant of $s'$ must preserve 
the invariant of $s$. By encoding an ownership system on top of this 
approach, it can be used both for precise reasoning about fine-grained 
concurrency and for reasoning in a dynamic frames-like style about 
sequential code. VCC has been used to verify a large part of the Microsoft 
Hyper-V hypervisor. 

Other important non-separation logic tools include Chalice \cite{chalice} 
(a verifier for concurrent Java-like programs based on \emph{implicit 
dynamic frames} \cite{DBLP:journals/toplas/SmansJP12}), Dafny 
\cite{DBLP:conf/vstte/Leino12}, KeY \cite{DBLP:conf/fmoods/AhrendtBHS07}, 
and KIV \cite{kiv}.

As in the case of separation logic-based approaches, some non-separation 
logic-based verification efforts have been carried out in a 
general-purpose proof assistant rather than a specialized tool. Notable in 
this category are the L4.verified project \cite{Klein_EHACDEEKNSTW_09}, 
which verified an OS microkernel consisting of 8KLOC of C code in 
Isabelle/HOL, and the Verisoft project \cite{Alkassar:VSTTE2010-71}, which 
performed large parts of the pervasive verification, also in Isabelle/HOL, 
of the complete software stack (plus parts of the hardware), including 
microkernel, kernel, and applications, of a secure e-mail system and an 
embedded automotive system.

\subsection{Semantic framework: Outcomes}

In our formalization, to express and relate the semantics of the 
programming language and the verification algorithm, via the intermediary 
of semiconcrete execution, we developed the semantic framework based on 
\emph{outcomes}, with the important derived concepts of \emph{mutators}, 
\emph{postcondition satisfaction}, and \emph{coverage}. This enabled us to 
deal conveniently with failure, nontermination, and both demonic and 
angelic nondeterminism.

This framework is essentially nothing more than the predicate transformer 
semantics proposed by Dijkstra \cite{weakest-preconditions}:
$$\mathsf{exec}(c)\;\{Q\} \equiv \mathsf{wp}(c, Q)$$

Also, mutators with answers are essentially a combination of a state monad 
and a continuation monad.

Our choice of defining the set of outcomes as an inductive datatype, 
rather than a predicate over postconditions (i.e.~a function from 
postconditions to $\mathsf{bool}$, such that mutators would be predicate 
transformers or functions from predicates to predicates) or, equivalently, 
a state-continuation monad, has two advantages: firstly, we immediately 
have that all outcomes are \emph{monotonic} (postcondition satisfaction is 
preserved by weakening of the postcondition); and secondly, our Coq 
encoding of concrete execution yields not an unexecutable function to 
$\mathsf{bool}$ but an (infinite-branching) execution tree which we can 
explore, as shown in Section~\ref{sec:mech}.

\subsection{Machine-checked tools}

An effort similar to our executable machine-checked encoding into Coq of 
Featherweight VeriFast is the executable machine-checked encoding into Coq 
of Smallfoot, called VeriSmall \cite{verismall}.

Whereas VeriSmall's primary purpose is to serve as the basis for a 
certified program verification tool chain, MFVF's primary purpose is to 
serve as evidence for the correctness of the presentation of FVF and its 
soundness proof in this article. Therefore, MFVF mirrors the presentation 
very closely, and is more optimized for reading than VeriSmall.

\section{Conclusion}\label{sec:conclusion}

We presented a formal definition and outlined a soundness proof of 
Featherweight VeriFast, thus hopefully achieving a clear and precise 
exposition of a core subset of the VeriFast approach for sound modular 
verification of imperative programs. We also described our executable 
definition and machine-checked soundness proof of Mechanised Featherweight 
VeriFast, a slight variant of Featherweight VeriFast, in the Coq proof 
system.

Future work includes: extending Featherweight VeriFast to include 
additional features of VeriFast, such as lemma functions, inductive 
datatypes and fixpoint functions, concurrency, fractional permissions, 
function pointers, lemma function pointers, predicate families, and 
higher-order predicates\footnote{Note that these advanced features
have already been formalized, with machine-checked soundness proofs,
separately \cite{DBLP:conf/fm/JacobsSP11,DBLP:conf/popl/JacobsP11}.};
extending the executable definition of Mechanised 
Featherweight VeriFast so that it can be used as a higher-assurance 
drop-in replacement for VeriFast to verify annotated C source code files; 
and linking the resulting tool to existing formalisations of C semantics, 
such as CompCert \cite{Leroy-Compcert-CACM}.

\subsection*{Acknowledgements}

We would like to thank the anonymous reviewers for their helpful comments. This work was supported in part by the Research Fund KU Leuven and by EU project ADVENT.

\bibliographystyle{plain}
\bibliography{fvf-lmcs}

\begin{thebibliography}{10}

\bibitem{DBLP:conf/fmoods/AhrendtBHS07}
Wolfgang Ahrendt, Bernhard Beckert, Reiner H{\"a}hnle, and Peter~H. Schmitt.
\newblock Key: A formal method for object-oriented systems.
\newblock In Marcello~M. Bonsangue and Einar~Broch Johnsen, editors, {\em
  FMOODS}, volume 4468 of {\em Lecture Notes in Computer Science}, pages
  32--43. Springer, 2007.

\bibitem{Alkassar:VSTTE2010-71}
Eyad Alkassar, Wolfgang Paul, Artem Starostin, and Alexandra Tsyban.
\newblock Pervasive verification of an {OS} microkernel: {Inline} assembly,
  memory consumption, concurrent devices.
\newblock In Peter O'Hearn, Gary~T. Leavens, and Sriram Rajamani, editors, {\em
  Verified Software: Theories, Tools, Experiments (VSTTE 2010)}, volume 6217 of
  {\em Lecture Notes in Computer Science}, pages 71--85, Edinburgh, UK, August
  2010. Springer.

\bibitem{verified-software-toolchain-esop-2011}
Andrew~W. Appel.
\newblock Verified software toolchain.
\newblock In {\em ESOP}, 2011.

\bibitem{verismall}
Andrew~W. Appel.
\newblock {VeriSmall}: Verified {Smallfoot} shape analysis.
\newblock In {\em CPP}, 2011.

\bibitem{appel-verified-c-shakedown}
Andrew~W. Appel.
\newblock Verification of a cryptographic primitive: Sha-256.
\newblock {\em ACM Trans. Program. Lang. Syst.}, 37(2):7:1--7:31, April 2015.

\bibitem{vstbook}
Andrew~W. Appel, Robert Dockins, Aquinas Hobor, Lennart Beringer, Josiah Dodds,
  Gordon Stewart, Sandrine Blazy, and Xavier Leroy.
\newblock {\em Program Logics for Certified Compilers}.
\newblock Cambridge University Press, New York, NY, USA, 2014.

\bibitem{charge}
Jesper Bengtson, Jonas~Braband Jensen, and Lars Birkedal.
\newblock Charge! a framework for higher-order separation logic in {Coq}.
\newblock In {\em ITP}, 2012.

\bibitem{smallfoot-symbex}
Josh Berdine, Cristiano Calcagno, and Peter~W. O'Hearn.
\newblock Symbolic execution with separation logic.
\newblock In {\em APLAS}, 2005.

\bibitem{smallfoot2006}
Josh Berdine, Cristiano Calcagno, and Peter~W. O'Hearn.
\newblock Smallfoot: Modular automatic assertion checking with separation
  logic.
\newblock In {\em FMCO}, 2006.

\bibitem{slayer}
Josh Berdine, Byron Cook, and Samin Ishtiaq.
\newblock {SLAyer}: Memory safety for systems-level code.
\newblock In {\em CAV}, 2011.

\bibitem{DBLP:conf/fm/2011}
Michael Butler and Wolfram Schulte, editors.
\newblock {\em FM 2011: Formal Methods - 17th International Symposium on Formal
  Methods, Limerick, Ireland, June 20-24, 2011. Proceedings}, volume 6664 of
  {\em Lecture Notes in Computer Science}. Springer, 2011.

\bibitem{infer}
Cristiano Calcagno and Dino Distefano.
\newblock Infer: an automatic program verifier for memory safety of {C}
  programs.
\newblock In {\em Proc. 3rd NASA Formal Methods Symposium}, number 6671 in
  Lecture Notes in Computer Science. Springer, 2011.

\bibitem{DBLP:conf/popl/CalcagnoDOY09}
Cristiano Calcagno, Dino Distefano, Peter~W. O'Hearn, and Hongseok Yang.
\newblock Compositional shape analysis by means of bi-abduction.
\newblock In Zhong Shao and Benjamin~C. Pierce, editors, {\em Proceedings of
  the 36th ACM SIGPLAN-SIGACT Symposium on Principles of Programming Languages,
  POPL 2009, Savannah, GA, USA, January 21-23, 2009}, pages 289--300. ACM,
  2009.

\bibitem{distefano-resource-invariants}
Cristiano Calcagno, Dino Distefano, and Viktor Vafeiadis.
\newblock Bi-abductive resource invariant synthesis.
\newblock In {\em APLAS}, number 5904 in LNCS, pages 259--274. Springer, 2009.

\bibitem{smallfootrg}
Cristiano Calcagno, Matthew Parkinson, and Viktor Vafeiadis.
\newblock Modular safety checking for fine-grained concurrency.
\newblock In {\em SAS}, number 4634 in LNCS, pages 233--238. Springer, August
  2007.

\bibitem{DBLP:journals/scp/ChinDNQ12}
Wei-Ngan Chin, Cristina David, Huu~Hai Nguyen, and Shengchao Qin.
\newblock Automated verification of shape, size and bag properties via
  user-defined predicates in separation logic.
\newblock {\em Sci. Comput. Program.}, 77(9):1006--1036, 2012.

\bibitem{DBLP:conf/pldi/Chlipala11}
Adam Chlipala.
\newblock Mostly-automated verification of low-level programs in computational
  separation logic.
\newblock In Mary~W. Hall and David~A. Padua, editors, {\em PLDI}, pages
  234--245. ACM, 2011.

\bibitem{DBLP:conf/icfp/ChlipalaMMSW09}
Adam Chlipala, J.~Gregory Malecha, Greg Morrisett, Avraham Shinnar, and Ryan
  Wisnesky.
\newblock Effective interactive proofs for higher-order imperative programs.
\newblock In Graham Hutton and Andrew~P. Tolmach, editors, {\em ICFP}, pages
  79--90. ACM, 2009.

\bibitem{DBLP:conf/tphol/CohenDHLMSST09}
Ernie Cohen, Markus Dahlweid, Mark~A. Hillebrand, Dirk Leinenbach, Michal
  Moskal, Thomas Santen, Wolfram Schulte, and Stephan Tobies.
\newblock {VCC}: {A} practical system for verifying concurrent {C}.
\newblock In Stefan Berghofer, Tobias Nipkow, Christian Urban, and Makarius
  Wenzel, editors, {\em Theorem Proving in Higher Order Logics, 22nd
  International Conference, TPHOLs 2009, Munich, Germany, August 17-20, 2009.
  Proceedings}, volume 5674 of {\em Lecture Notes in Computer Science}, pages
  23--42. Springer, 2009.

\bibitem{weakest-preconditions}
Edsger~W. Dijkstra.
\newblock Guarded commands, nondeterminacy and formal derivation of programs.
\newblock {\em CACM}, 18(8):453--457, August 1975.

\bibitem{distefano-tacas2006}
Dino Distefano, Peter O'Hearn, and Hongseok Yang.
\newblock A local shape analysis based on separation logic.
\newblock In {\em TACAS}, 2006.

\bibitem{jstar}
Dino Distefano and Matthew Parkinson.
\newblock {jStar}: Towards practical verification for java.
\newblock In {\em OOPSLA}, 2008.

\bibitem{filliatre12compare}
Jean-Christophe Filli\^atre, Andrei Paskevich, and Aaron Stump.
\newblock The 2nd verified software competition: Experience report.
\newblock In Vladimir Klebanov and Sarah Grebing, editors, {\em COMPARE2012:
  1st International Workshop on Comparative Empirical Evaluation of Reasoning
  Systems}, Manchester, UK, June 2012. EasyChair.

\bibitem{hoare-logic}
C.~A.~R. Hoare.
\newblock An axiomatic basis for computer programming.
\newblock {\em Communications of the ACM}, 12(10):576--583, October 1969.

\bibitem{verifythis-tr}
Marieke Huisman, Vladimir Klebanov, and Rosemary Monahan.
\newblock {VerifyThis} {V}erification {C}ompetition 2012 - organizer's report.
\newblock Technical Report Karlsruhe Reports in Informatics 2013, 1, Karlsruhe
  Institute of Technology, Faculty of Informatics, 2013.

\bibitem{DBLP:conf/popl/JacobsP11}
Bart Jacobs and Frank Piessens.
\newblock Expressive modular fine-grained concurrency specification.
\newblock In Thomas Ball and Mooly Sagiv, editors, {\em POPL}, pages 271--282.
  ACM, 2011.

\bibitem{DBLP:conf/nfm/JacobsSPVPP11}
Bart Jacobs, Jan Smans, Pieter Philippaerts, Fr{\'e}d{\'e}ric Vogels, Willem
  Penninckx, and Frank Piessens.
\newblock {VeriFast}: A powerful, sound, predictable, fast verifier for {C} and
  {Java}. {I}nvited paper.
\newblock In Mihaela~Gheorghiu Bobaru, Klaus Havelund, Gerard~J. Holzmann, and
  Rajeev Joshi, editors, {\em NASA Formal Methods}, volume 6617 of {\em Lecture
  Notes in Computer Science}, pages 41--55. Springer, 2011.

\bibitem{verifast}
Bart Jacobs, Jan Smans, and Frank Piessens.
\newblock A quick tour of the {VeriFast} program verifier.
\newblock In {\em APLAS 2010}, volume 6461 of {\em LNCS}, pages 304--311,
  Heidelberg, 2010. Springer.

\bibitem{DBLP:conf/fm/JacobsSP11}
Bart Jacobs, Jan Smans, and Frank Piessens.
\newblock Verification of unloadable modules.
\newblock In Butler and Schulte \cite{DBLP:conf/fm/2011}, pages 402--416.

\bibitem{vftutorial}
Bart Jacobs, Jan Smans, and Frank Piessens.
\newblock The {VeriFast} program verifier: a tutorial.
\newblock Available from \url{http://www.cs.kuleuven.be/~bartj/verifast/},
  2014.

\bibitem{DBLP:conf/ifip/Jones83}
Cliff~B. Jones.
\newblock Specification and design of (parallel) programs.
\newblock In {\em IFIP Congress}, pages 321--332, 1983.

\bibitem{dynamic-frames}
Ioannis~T. Kassios.
\newblock Dynamic frames: support for framing, dependencies and sharing without
  restrictions.
\newblock In {\em FM}, 2006.

\bibitem{DBLP:conf/fm/KlebanovMSLWAABCCHJLMPPRSTTUW11}
Vladimir Klebanov, Peter M{\"u}ller, Natarajan Shankar, Gary~T. Leavens,
  Valentin W{\"u}stholz, Eyad Alkassar, Rob Arthan, Derek Bronish, Rod Chapman,
  Ernie Cohen, Mark~A. Hillebrand, Bart Jacobs, K.~Rustan~M. Leino, Rosemary
  Monahan, Frank Piessens, Nadia Polikarpova, Tom Ridge, Jan Smans, Stephan
  Tobies, Thomas Tuerk, Mattias Ulbrich, and Benjamin Wei{\ssletter}.
\newblock The 1st verified software competition: Experience report.
\newblock In Butler and Schulte \cite{DBLP:conf/fm/2011}, pages 154--168.

\bibitem{Klein_EHACDEEKNSTW_09}
Gerwin Klein, Kevin Elphinstone, Gernot Heiser, June Andronick, David Cock,
  Philip Derrin, Dhammika Elkaduwe, Kai Engelhardt, Rafal Kolanski, Michael
  Norrish, Thomas Sewell, Harvey Tuch, and Simon Winwood.
\newblock {seL4}: Formal verification of an {OS} kernel.
\newblock In {\em Proceedings of the 22nd ACM Symposium on Operating Systems
  Principles}, pages 207--220, Big Sky, MT, USA, Oct 2009. ACM.

\bibitem{DBLP:conf/vstte/Leino12}
K.~Rustan~M. Leino.
\newblock Developing verified programs with dafny.
\newblock In Rajeev Joshi, Peter M{\"u}ller, and Andreas Podelski, editors,
  {\em VSTTE}, volume 7152 of {\em Lecture Notes in Computer Science}, page~82.
  Springer, 2012.

\bibitem{chalice}
K.~Rustan~M. Leino and Peter M{\"u}ller.
\newblock A basis for verifying multi-threaded programs.
\newblock In {\em ESOP}, 2009.

\bibitem{Leroy-Compcert-CACM}
Xavier Leroy.
\newblock Formal verification of a realistic compiler.
\newblock {\em Communications of the ACM}, 52(7):107--115, 2009.

\bibitem{ohearn-marktoberdorf}
Peter~W. O'Hearn.
\newblock A primer on separation logic (and automatic program verification and
  analysis).
\newblock In Tobias Nipkow, Orna Grumberg, and Benedikt Hauptmann, editors,
  {\em Software Safety and Security; Tools for Analysis and Verification},
  number~33 in NATO Science for Peace and Security Series. IOS Press, 2012.
\newblock Marktoberdorf Summer School 2011 Lecture Notes.

\bibitem{seplogic-ohearn}
Peter~W. O'Hearn, John Reynolds, and Hongseok Yang.
\newblock Local reasoning about programs that alter data structures.
\newblock In {\em CSL}, 2001.

\bibitem{seplogic-abstraction}
Matthew Parkinson and Gavin Bierman.
\newblock Separation logic and abstraction.
\newblock In {\em POPL 2005}, 2005.

\bibitem{Philippaerts2013}
Pieter Philippaerts, Jan~Tobias M{\"u}hlberg, Willem Penninckx, Jan Smans, Bart
  Jacobs, and Frank Piessens.
\newblock Software verification with {VeriFast}: Industrial case studies.
\newblock {\em Science of Computer Programming}, 2013.

\bibitem{seplogic}
J.~C. Reynolds.
\newblock Separation logic: a logic for shared mutable data structures.
\newblock In {\em LICS}, 2002.

\bibitem{DBLP:journals/toplas/SmansJP12}
Jan Smans, Bart Jacobs, and Frank Piessens.
\newblock Implicit dynamic frames.
\newblock {\em ACM Trans. Program. Lang. Syst.}, 34(1):2, 2012.

\bibitem{DBLP:series/lncs/SmansJP13}
Jan Smans, Bart Jacobs, and Frank Piessens.
\newblock Verifast for java: A tutorial.
\newblock In Dave Clarke, James Noble, and Tobias Wrigstad, editors, {\em
  Aliasing in Object-Oriented Programming}, volume 7850 of {\em Lecture Notes
  in Computer Science}, pages 407--442. Springer, 2013.

\bibitem{kiv}
B.~Tofan, G.~Schellhorn, and W.~Reif.
\newblock A compositional proof method for linearizability applied to a
  wait-free multiset.
\newblock In {\em iFM}, 2014.

\bibitem{UCAM-CL-TR-799}
Thomas Tuerk.
\newblock {A separation logic framework for HOL}.
\newblock Technical Report UCAM-CL-TR-799, University of Cambridge, Computer
  Laboratory, June 2011.

\bibitem{cave}
Viktor Vafeiadis.
\newblock Automatically proving linearizability.
\newblock In {\em CAV}, 2010.

\bibitem{DBLP:conf/sefm/VanspauwenJ13}
Gijs Vanspauwen and Bart Jacobs.
\newblock Sound symbolic linking in the presence of preprocessing.
\newblock In Robert~M. Hierons, Mercedes~G. Merayo, and Mario Bravetti,
  editors, {\em SEFM}, volume 8137 of {\em Lecture Notes in Computer Science},
  pages 122--136. Springer, 2013.

\bibitem{DBLP:conf/forte/VogelsJPS11}
Fr{\'e}d{\'e}ric Vogels, Bart Jacobs, Frank Piessens, and Jan Smans.
\newblock Annotation inference for separation logic based verifiers.
\newblock In Roberto Bruni and J{\"u}rgen Dingel, editors, {\em FMOODS/FORTE},
  volume 6722 of {\em Lecture Notes in Computer Science}, pages 319--333.
  Springer, 2011.

\end{thebibliography}
\newpage

\theendnotes
\end{document}